\title{SD-Regular Transducer Expressions for Aperiodic Transformations}
\author{Luc Dartois}{Univ Paris Est Creteil, LACL, F-94010 Creteil, France}{luc.dartois@lacl.fr}{https://orcid.org/0000-0001-9974-1922}{}
\author{Paul Gastin}{Université Paris-Saclay, ENS Paris-Saclay, CNRS, LMF, 91190, Gif-sur-Yvette, France}{paul.gastin@lsv.fr}{https://orcid.org/0000-0002-1313-7722}{}
\author{Shankara Narayanan Krishna}{IIT Bombay, India} {krishnas@cse.iitb.ac.in}{https://orcid.org/0000-0003-0925-398X}{}
\authorrunning{L. Dartois, P. Gastin, S. Krishna}
\keywords{transducers, aperiodic functions, regular expressions, transition monoids.}
\newcommand{\paul}[1]{\textcolor{blue}{#1}}
\newcommand{\lang}[1]{\mathcal{L}(#1)}
\newcommand{\reg}[1]{\mathsf{Reg}(#1)}
\newcommand{\TrMon}{\mathsf{TrM}}
\DeclareRobustCommand{\leftleft}{\mathrel{\resizebox{1.5\width}{0.8\height}{\rotatebox[origin=c]{-90}{$\curvearrowright$}}}}
\DeclareRobustCommand{\rightright}{\mathrel{\resizebox{1.5\width}{0.8\height}{\rotatebox[origin=c]{90}{$\curvearrowleft$}}}}
\newcommand{\leftright}{\rightarrow}
\newcommand{\rightleft}{\leftarrow}
\newcommand{\LL}[2]{(\leftleft,#1,#2)}
\newcommand{\LR}[2]{(\leftright,#1,#2)}
\newcommand{\RL}[2]{(\rightleft,#1,#2)}
\newcommand{\RR}[2]{(\rightright,#1,#2)}
\newcommand{\A}{\mathcal{A}}
\newcommand{\C}[3]{\mathsf{C}_{#1,#2}(#3)}
\newcommand{\ZF}[3]{\mathsf{ZC}_{#1}^{#2}(#3)}
\newcommand{\RTE}{\textsf{RTE}\xspace}
\newcommand{\RTEs}{\textsf{RTE}s\xspace}
\newcommand{\SDRTE}{\textsf{SDRTE}\xspace}
\newcommand{\SDRTEs}{\textsf{SDRTE}s\xspace}
\newcommand\Ifthenelse[3]{{#1}\,?\,{#2}:{#3}}
\newcommand\SimpleFun[2]{{#1} \triangleright {#2}}
\newcommand{\rcdot}{\cdot_r}
\newcommand{\kstar}[3]{[#2,#3]^{#1\star}}
\newcommand{\krstar}[3]{[#2,#3]_r^{#1\star}}
\newcommand{\rstar}[1]{#1_r^{\star}}
\newcommand\sem[1]{[\![ #1 ]\!]}
\newcommand\dom[1]{\mathsf{dom}(#1)}
\newcommand{\leftend}{{\vdash}}
\newcommand{\rightend}{{\dashv}}
\newcommand{\D}{\Gamma^{*}}
\newcommand{\LQ}[2]{{#1}^{-1}{#2}}
\newcommand{\RQ}[2]{{#2}{#1}^{-1}}
\let\epsilon=\varepsilon
\begin{document}

\maketitle

\begin{abstract}
  FO transductions, aperiodic deterministic two-way transducers, as well as aperiodic
  streaming string transducers are all equivalent models for first order definable
  functions.  In this paper, we solve the long standing open problem of expressions
  capturing first order definable functions, thereby generalizing the seminal SF=AP (star
  free expressions = aperiodic languages) result of Schützenberger. 
  Our result also generalizes a lesser known characterization by Schützenberger 
  of aperiodic languages by SD-regular expressions (SD=AP).  We show that every first
  order definable function over finite words captured by an aperiodic deterministic
  two-way transducer can be described with an SD-regular transducer expression (\SDRTE).
  An \SDRTE is a regular expression where Kleene stars are used in a restricted way: they
  can appear only on aperiodic languages which are prefix codes of bounded synchronization
  delay.  \SDRTEs are constructed from simple functions using the combinators unambiguous
  sum (deterministic choice), Hadamard product, and unambiguous versions of the Cauchy
  product and the $k$-chained Kleene-star, where the star is restricted as mentioned.  In
  order to construct an \SDRTE associated with an aperiodic deterministic two-way
  transducer, (i) we concretize Schützenberger's SD=AP result, by proving that aperiodic
  languages are captured by SD-regular expressions which are unambiguous and stabilising;
  (ii) by structural induction on the unambiguous, stabilising SD-regular expressions
  describing the domain of the transducer, we construct \SDRTEs.  Finally, we also look at
  various formalisms equivalent to \SDRTEs which use the function composition, allowing to
  trade the $k$-chained star for a 1-star.
\end{abstract}

\section{Introduction}

The seminal result of Kleene, which proves the equivalence of regular expressions and
regular languages, is among the cornerstones of formal language theory.  The B\"{u}chi,
Elgot, Trakhtenbrot theorem which proved the equivalence of regular languages with MSO
definable languages, and the equivalence of regular languages with the class of languages
having a finite syntactic monoid, established the synergy between machines, logic and
algebra.  The fundamental correspondence between machines and logic at the language level
has been generalized to transformations by Engelfreit and Hoogeboom \cite{EH01}, where
regular transformations are defined by two way transducers (2DFTs) as well as by the MSO
transductions of Courcelle~\cite{Cou94}.  A generalization of Kleene's theorem to
transformations can be found in \cite{AlurFreilichRaghothaman14}, \cite{BR-DLT18} and
\cite{DGK-lics18}.

In \cite{AlurFreilichRaghothaman14}, regular transformations were described using additive
cost register automata (ACRA) over finite words.  ACRAs are a generalization of streaming
string transducers (SSTs) \cite{AC10} which make a single left to right pass over the
input and use a finite set of variables over strings from the output alphabet.  ACRAs
compute partial functions from finite words over a finite input alphabet to a monoid
$(\mathbb{D}, +, 0)$.  The main contribution of \cite{AlurFreilichRaghothaman14} was to
provide a set of combinators, analogous to the operators used in regular expressions,
which help in forming combinator expressions computing the output of the ACRA over finite
words.  The result of \cite{AlurFreilichRaghothaman14} was generalized to infinite words
in \cite{DGK-lics18}.  The proof technique in \cite{DGK-lics18} is completely different
from \cite{AlurFreilichRaghothaman14}, and, being algebraic, allows a uniform treatment of
the result for transductions over finite and infinite words.  Subsequently, an alternative
proof for the result of \cite{AlurFreilichRaghothaman14} appeared in \cite{BR-DLT18}.

The class of star-free languages form a strict subset of regular languages.  In 1965,
Schützenberger \cite{Schutzenberger_1965} proved his famous result that star-free
languages (SF) and languages having an aperiodic syntactic monoid (or aperiodic languages
AP) coincide over finite words (SF=AP).  This equivalence gives an effective
characterization of star-free languages, since one can decide if a syntactic monoid is
aperiodic.  This was followed by a result \cite{McNaughtonPapert} of McNaughton and Papert
proving the equivalence of star-free languages with counter-free automata as well as first
order logic, thereby completing the machine-logic-algebra connection once again.  The
generalization of this result to transformations has been investigated in \cite{CD15},
\cite{FKT14}, proving the equivalence of aperiodic two way transducers and FO
transductions a la Courcelle for finite and infinite words.  The counter part of regular
expressions for aperiodic languages are star-free expressions, which are obtained by using
the complementation operation instead of Kleene-star.  The generalization of this result
to transformations has been an open problem, as mentioned in
\cite{AlurFreilichRaghothaman14}, \cite{BR-DLT18} and \cite{DGK-lics18}.  One of the main
challenges in generalizing this result to transformations is the difficulty in finding an
analogue for the complementation operation on sets in the setting of transformations.

\medskip\noindent{\bf{Our Contributions}}.  The following central problem remained open
till now: Given an aperiodic 2DFT $\A$, does there exist a class of expressions over basic
functions and regular combinators such that, one can effectively compute from $\A$, an
expression $E$ in this class, and conversely, such that $\sem{\A}(w)=\sem{E}(w)$ for each
$w \in \dom{\A}$?  We solve this open problem, by providing a characterization by means of
expressions for aperiodic two way transducers.  In the following, we describe the main
steps leading to the solution of the problem.

\medskip\noindent {\bf{Concretizing Schützenberger's characterization}}.  In 1973,
Schützenberger \cite{Schutzenberger1975d} presented a characterization of aperiodic
languages in terms of rational expressions where the star operation is restricted to
prefix codes with bounded synchronization delay and no complementation is used.  This
class of languages is denoted by SD, and this result is known as SD=AP. To circumvent the
difficulty of using complementation in star-free expressions, we use this SD=AP
characterization of aperiodic languages by SD-expressions.  An SD-expression is a regular
expression where the Kleene stars are restricted to appear only on prefix codes of bounded
synchronization delay.  Our \emph{first} contribution is to concretize Schützenberger's
result to more specific SD-expressions.  We show that aperiodic languages can be captured
by \emph{unambiguous}, \emph{stabilising}, SD-expressions.  The \emph{unambiguity} of an
expression refers to the unique way in which it can be parsed, while \emph{stabilising
expressions} is a new notion introduced in this paper.  Our concretization,
(Theorem~\ref{thm:U-S-SD-expressions}) shows that, given a morphism $\varphi$ from the
free monoid $\Sigma^*$ to a finite aperiodic monoid $M$, for each $s \in M$,
$\varphi^{-1}(s)$ can be expressed by an unambiguous, stabilising SD-expression.  The two
notions of \emph{unambiguity} and \emph{stabilising} help us to capture the runs of an
aperiodic two way transducer.  These two notions will be described in detail in 
Section~\ref{sec:SDexpr}.
    
\medskip\noindent{\bf{The Combinators}}.  Our \emph{second} contribution is the definition
of SD-regular transducer expressions ($\SDRTE$).  These are built from basic constant
functions using combinators such as unambiguous sum, unambiguous Cauchy product, Hadamard
product.  In addition, we use $k$-chained Kleene star $\kstar{k}{L}{C}$ (and its reverse)
when the parsing language $L$ is restricted to be aperiodic and a prefix code with bounded
synchronisation delay.  It should be noticed that, contrary to the case of regular
transducer expressions (\RTE) which define all regular functions, the 2-chained Kleene
star $\kstar{2}{L}{C}$ does not seem sufficient to define all aperiodic functions (see
Section~\ref{sec:inexp} as well as Figure~\ref{fig:eg-2dft}), and $k$-chained Kleene stars
for arbitrary large $k$ seem necessary to capture all aperiodic functions.

The semantics of an \SDRTE $C$ is a partial function $\sem{C}\colon\Sigma^*\to\Gamma^*$
with domain denoted $\dom{C}$.  An \SDRTE of the form $\SimpleFun{L}{v}$ where
$L\subseteq\Sigma^{*}$ is an aperiodic language and $v\in\Gamma^{*}$ is such that
$\sem{\SimpleFun{L}{v}}$ is a constant function with value $v$ and domain $L$.  The
Hadamard product $C_1\odot C_2$ when applied to $w\in\dom{C_1}\cap\dom{C_2}$ produces
$\sem{C_1}(w)\cdot\sem{C_2}(w)$.  The unambiguous Cauchy product $C_1\cdot C_2$ when
applied on $w\in\Sigma^*$ produces $\sem{C_1}(u)\cdot\sem{C_2}(v)$ if $w$ can be
unambiguously decomposed as $u\cdot v$, with $u \in \dom{C_1}$ and $v \in \dom{C_2}$.  The
Kleene star $C^{*}$ is defined when $L=\dom{C}$ is an aperiodic language which is a prefix
code with bounded synchronisation delay.  Then $\dom{C^{*}}=L^{*}$, and, for
$w=u_1u_2\cdots u_n$ with $u_i\in L$, we have
$\sem{C^{*}}(w)=\sem{C}(u_1)\sem{C}(u_2)\cdots\sem{C}(u_n)$.
    	 
\begin{figure*} [t]                                                                                                                                                           
  \begin{center}                                                                                                                                                         
  \scalebox{0.8}{                                                                                                                                                       
  \begin{tikzpicture}[->,thick]                                                                                                                                          
    \node[initial, state, initial text ={}] at (0,-2) (A0) {$q_0$} ;
    \node[state,accepting] at (2.5,-2) (A) {$q_1$} ;
    \node[state] at (5,-2) (B) {$q_2$}; 
    \node[state] at (7.5,-2) (C) {$q_3$};
    \node[state] at (10,-2) (D) {$q_4$};
    \node[state] at (12.5,-2) (E) {$q_5$};
    \node[state] at (15,-2) (F) {$q_6$};
    
    \path (A0) edge node [above]{$\$/\epsilon, +1$}node{} (A);
    \path (A) edge[loop above] node [above]{$a/\epsilon, +1$}node {}(A);
    \path (A0) edge[loop above] node [above]{$\leftend/\epsilon, +1$}node {}(A0);
    \path (A) edge node [above]{$\#/\epsilon, +1$}node {}(B);
    \path (B) edge[loop above] node [above]{$a/b, +1$}node {}(B);
    \path (B) edge node [above]{$\$/\epsilon, -1$}node{} (C);
    \path (C) edge[loop above] node [above]{$a/\epsilon, -1$}node {}(C);
    \path (C) edge node [above]{$\#/\epsilon, -1$}node{} (D);
    \path (D) edge[loop above] node [above]{$a/\varepsilon, -1$}node {}(D);
    \path (D) edge node [above]{$\$/\epsilon, +1$}node{} (E);
    \path (E) edge[loop above] node [above]{$a/a, +1$}node {}(E);
    \path (E) edge node [above]{$\#/\epsilon, +1$}node {}(F);
    \path (F) edge[loop above] node [above]{$a/\epsilon, +1$}node {}(F);
    \path (F) edge[bend left] node [above]{$\$/ \epsilon,+1$}node {}(A);
  \end{tikzpicture}
  } 
  \end{center}                                                                                                                                                         
  \caption{An aperiodic 2DFT $\cal{A}$ computing the partial function
  $\sem{{\mathcal{A}}}(\$a^{m_1}\#a^{m_2} \$ a^{m_3} \# a^{m_4} \$ \cdots
  a^{m_{2k}}\$ )=b^{m_2}a^{m_1} b^{m_4}a^{m_3} \cdots b^{m_{2k}}a^{m_{2k-1}}$,
  for $k\geq 0$.  The input alphabet is $\Sigma=\{a, \#, \$\}$ while the output
  alphabet is $\Gamma=\{a,b\}$.}
  \label{fig:2dft}                                                                                                                                                        
\end{figure*}
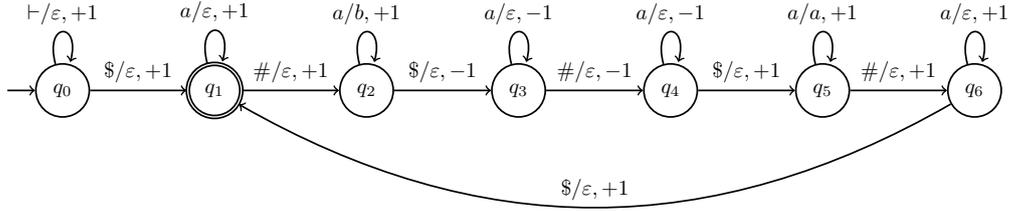 

As an example, consider the $\SDRTE$s $C=C_1\cdot C_2$, $C'=C'_1\cdot C'_2$ and $D=C\odot
C'$ with $C_1= \SimpleFun{(a^*\#)}{\epsilon}$,
$C_2=(\SimpleFun{a}{b})^{*}\cdot(\SimpleFun{\$}{\varepsilon})$, and $C'_1=
(\SimpleFun{a}{a})^{*}\cdot(\SimpleFun{\#}{\varepsilon})$,
$C'_2=\SimpleFun{(a^*\$)}{\epsilon}$.  Then $\dom{C_1}=a^*\#=\dom{C'_1}$,
$\dom{C_2}=a^*\$=\dom{C'_2}$, and $\dom{C}=a^*\#a^*\$=\dom{C'}=\dom{D}$.  Further,
$\sem{C_1}(a^m\#)=\epsilon$, $\sem{C_2}(a^n\$)=b^n$, $\sem{C'_1}(a^m\#)=a^m$, and
$\sem{C'_2}(a^*\$)=\epsilon$.  Also, $\sem{D}(a^m\#a^n\$)=b^na^m$.  Notice that $\dom{D}$
is a prefix code with synchronisation delay 1.  Hence, we can define the \SDRTE $D^{*}$
which has domain the aperiodic language $\dom{D^{*}}=(a^*\#a^*\$)^*$, and
$\sem{D^{*}}(a^2\#a^3\$a^4\#a^5\$)=b^3a^2b^5a^4$.  The $\SDRTE$
$D'=(\SimpleFun{\$}{\epsilon})\cdot D^{*}$ corresponds to the aperiodic 2DFT $\A$ in
Figure~\ref{fig:2dft}: $\sem{\A}=\sem{D'}$.
 
\medskip\noindent{\bf{$\SDRTE \leftrightarrow$ Aperiodic 2DFT}}.  Our \emph{third} and
main contribution solves the open problem by proving the effective equivalence between
aperiodic two way transducers and $\SDRTEs$ over finite words:

\begin{theorem}  \label{thm:intro}	
  (1) Given an $\SDRTE$, we can effectively construct an equivalent aperiodic 2DFT.
  (2) Given an aperiodic 2DFT, we can effectively construct an equivalent $\SDRTE$.
\end{theorem}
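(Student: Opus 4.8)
The plan is to prove the two implications separately: direction~(1) by a structural induction that is conceptually routine, and direction~(2) --- the substantial half --- by exploiting the concretized form of Schützenberger's theorem provided by Theorem~\ref{thm:U-S-SD-expressions}.

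For direction~(1), I would argue by induction on the structure of an \SDRTE, showing that the class of partial functions realized by aperiodic 2DFTs is closed under each combinator. The base case $\SimpleFun{L}{v}$ is immediate: as $L$ is aperiodic it is accepted by a counter-free automaton, and the transducer emits the fixed word $v$ exactly on acceptance. For the unambiguous sum, disjointness of the domains lets the transducer first decide aperiodically which summand applies and then simulate it. For the unambiguous Cauchy product $C_1\cdot C_2$ the head locates the unique split $w=u\cdot v$ --- definable by an aperiodic automaton because the factorization is unambiguous --- and runs the two sub-transducers in sequence. The Hadamard product $C_1\odot C_2$ uses two-wayness directly: scan the input emitting $\sem{C_1}$, return to the left marker, and scan again emitting $\sem{C_2}$. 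For the $k$-chained Kleene star, restricting the parsing language to an aperiodic prefix code of bounded synchronisation delay is precisely what makes the factorization $w=u_1\cdots u_n$ unambiguous and keeps the induced transition monoid aperiodic; the head then walks block by block, revisiting each window of $k$ consecutive blocks as the chaining prescribes. In every case one verifies that the transition monoid of the constructed 2DFT remains aperiodic, and it is here that the prefix-code and bounded-delay hypotheses are essential.

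For direction~(2), I would begin from the domain of the aperiodic 2DFT $\A$, which by Theorem~\ref{thm:U-S-SD-expressions} is described by an \emph{unambiguous}, \emph{stabilising} SD-expression $E$. The organizing algebraic object is the transition monoid of $\A$, whose elements record, for a factor $u$, the four traversal profiles $\leftleft$, $\leftright$, $\rightleft$, $\rightright$ that describe how a run of $\A$ enters and exits $u$ through its two borders. I would then perform a structural induction on $E$, refined by the transition-monoid type of the factor under consideration, producing at each node an \SDRTE that emits exactly the portion of $\sem{\A}$ generated while the corresponding traversal profile is being executed. Atomic nodes and concatenation nodes become simple functions and unambiguous Cauchy products, the unambiguity and stabilising properties guaranteeing that the required decompositions are deterministic and compatible with the combinators. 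For a starred subexpression $L^{*}$, with $L$ an aperiodic prefix code of bounded synchronisation delay, a run of $\A$ may sweep across the blocks of the factorization several times, so the contribution of the star is assembled from interleaved passes --- exactly what the $k$-chained Kleene star $\kstar{k}{L}{C}$ encodes --- and the bounded synchronisation delay caps the chaining parameter $k$ by a constant depending only on $\A$.

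The main obstacle is the Kleene-star case in direction~(2). Since $\A$ is two-way, its run on a word $u_1u_2\cdots u_n\in L^{*}$ need not respect the block boundaries: the head can cross repeatedly between $u_i$ and $u_{i+1}$, so the emitted output is far from a plain concatenation of per-block outputs. One must reorganize the run into finitely many single-direction layers, each spanning a bounded number of consecutive blocks before the entry/exit behaviour of $\A$ stabilises; establishing this bound is exactly the role of bounded synchronisation delay, which ensures that after $k$ blocks the traversal profile of $\A$ becomes independent of the surrounding context and is therefore captured by a fixed chaining window. Formalizing this stabilisation, and verifying that the assembled \SDRTE reproduces $\sem{\A}$ over all of $L^{*}$ while staying inside the aperiodic, bounded-delay fragment, is the technical core of the construction.
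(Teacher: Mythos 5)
Your overall architecture coincides with the paper's: direction (1) by structural induction on the \SDRTE, and direction (2) by induction on the unambiguous, stabilising SD-expressions for $\varphi^{-1}(s)$ supplied by Theorem~\ref{thm:U-S-SD-expressions}, indexed by the steps of the transition monoid. There is, however, one substantive error at the technical core of direction (2). You write that bounded synchronisation delay is what ``ensures that after $k$ blocks the traversal profile of $\A$ becomes independent of the surrounding context'' and hence caps the chaining parameter $k$. That is not what synchronisation delay gives you, and the step would fail as stated: for instance $L=a^{*}b$ in Figure~\ref{fig:eg-2dft} has delay $1$, yet $\varphi(L)$ only stabilises after $4$ iterations, and nothing in the delay controls this. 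The property that actually bounds the window is that $X=\varphi(L)$ is \emph{$k$-stabilising} ($xy=x$ for $x\in X^{k}$, $y\in X$). This is the new notion the paper introduces; it does not follow from aperiodicity of the monoid (aperiodicity only makes singletons stabilising, not arbitrary subsets $X$), which is exactly why Theorem~\ref{thm:U-S-SD-expressions} must strengthen classical SD$=$AP to produce \emph{stabilising} expressions. From $k$-stabilisation the paper derives Lemmas~\ref{lem:LL}--\ref{lem:RR} (runs over $L^{*}$ are $k$-forward/backward-progressing), and these are what allow a left-right run to be cut into pieces each confined to a window of $k+1$ consecutive blocks and reassembled by $\kstar{(k+1)}{L}{f}$. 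Bounded synchronisation delay plays a different role: it makes $L$ a legal parsing language for the chained star of the output \SDRTE and keeps $L^{*}$ aperiodic. Replacing your appeal to synchronisation delay by an appeal to the stabilising property repairs the plan.

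Two smaller points of divergence from the paper's execution. In direction (1) the paper does not build monolithic transducers for the Cauchy product and the $k$-chained star and verify their aperiodicity directly, as you propose; it factors each through a composition of two or three simple aperiodic functions (marking split points, inserting separators, windowing, applying $f$) and invokes closure of aperiodic 2DFTs under composition (Theorem~\ref{thm:onedir} and Lemma~\ref{lem-fisAperiodic2}) --- proving aperiodicity of a single machine that ``revisits each window of $k$ consecutive blocks'' would be considerably more delicate. In direction (2), the four traversal profiles are not symmetric: the right-right case needs a decomposition into a backward-progressing part, a local zigzag, and a forward-progressing part (Lemma~\ref{lem:RR}), so ``finitely many single-direction layers'' understates the case analysis you would have to carry out.
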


The proof of (1) is by structural induction on the $\SDRTE$.  
All cases except the $k$-chained Kleene star are reasonably 
simple, and it is easy to see how to construct the equivalent 2DFT.
The case of the $k$-chained Kleene star is more involved. 
We write $\kstar{k}{L}{C}$ as the composition of 3 aperiodic functions $f_1, f_2, f_3$, where, 
(i) $f_1$ takes as input $u_1u_2 \cdots u_n \in L^*$ with $u_i \in L$ and produces as
output $\# u_1 \# u_2 \# \cdots \#u_n\#$, 
(ii) $f_2$ takes $\# v_1 \# v_2 \# \cdots \#v_m\#$ with $v_i\in\Sigma^{*}$ as input, and
produces $\#v_1 \cdots v_k \# v_2 \cdots v_{k+1}\# \cdots \#v_{m-k+1} \cdots v_m\#$ as
output,
(iii) finally, $f_3$ takes $\# w_1 \# w_2 \# \cdots \#w_\ell\#$ as input with
$w_i\in\Sigma^{*}$ and produces as output $f(w_1) f(w_2) \cdots f(w_\ell)$.  We produce
aperiodic 2DFTs for $f_1, f_2, f_3$, and compose them, obtaining the required aperiodic
2DFT.

The construction of $\SDRTE$ from an aperiodic 2DFT $\A$ is much more involved, and is
based on the transition monoid $\TrMon$ of the 2DFT $\A$.  The translation of $\A$ to
$\SDRTE$ is guided by an unambiguous, stabilising, SD-regular expression induced by
$\TrMon$.  These expressions are obtained thanks to Theorem \ref{thm:U-S-SD-expressions}
applied to the canonical morphism $\varphi\colon \Sigma \rightarrow \TrMon$ where the
transition monoid $\TrMon$ of $\A$ is aperiodic.  This construction is illustrated in
detail via Examples \ref{eg2dftstable}, \ref{eg:product}, \ref{eg:kstar} and
\ref{eg:last}.

\medskip\noindent{\bf{Related Work}}.  A natural operation on functions is that of
composition.  The composition operation can be used in place of the chained-sum operator
of \paul{\cite{AlurFreilichRaghothaman14}}, and also in place of the unambiguous 2-chained
iteration of \cite{DGK-lics18}, preserving expressiveness.  In yet another recent paper,
\cite{paul-inv} proposes simple functions like copy, duplicate and reverse along with
function composition to capture regular word transductions.

A closely related paper to our work is \cite{BDK-lics18}, where first-order and regular
list functions were introduced.  Using the basic functions reverse, append, co-append,
map, block on lists, and combining them with the function combinators of disjoint union,
map, pairing and composition, these were shown to be equivalent (after a suitable
encoding) to FO transductions a la Courcelle (extendible to MSO transductions by adding to
the basic functions, the prefix multiplication operation on groups).  
\cite{BDK-lics18} provides an equivalent characterization (modulo an encoding) for FO
transductions with basic list functions and combinators. 

Contrary to \cite{BDK-lics18} where expressions crucially rely on function composition, we
focus on concatenation and iteration as first class combinators, in the spirit of Kleene's
theorem and of Schützenberger's characterisation AP=SD. We are able to characterise 2DFTs
with such SD-regular expressions without using composition.  Hence, our result is fully
independent and complementary to the work in \cite{BDK-lics18}: both formalisms, \SDRTEs
and list functions are natural choices for describing first order transductions.
  Our basic functions and combinators are inspired from the back and forth traversal of a
  two way automaton, and the restrictions on the usage of the Kleene star comes from the
  unambiguous, stabilising nature of the expressions capturing the aperiodic domain
  of the 2DFT.
  We also study in Section~\ref{sec:composition} how composition may be used to simplify
  our \SDRTEs (Theorem \ref{thm-withComp}).  With composition, $k$-chained Kleene star
  ($k>1$) is no more necessary, resulting in an equivalent formalism, namely, \SDRTE where
  we only use $1$-star.  Yet another equivalent formalism is obtained by restricting
  \SDRTE to simple functions, unambiguous sum, Cauchy product and 1-star, but adding the
  functions duplicate and reverse along with composition.

\medskip\noindent{\bf{Structure of the paper}.}
In Section~\ref{sec:Prelim}, we introduce preliminary notions used throughout
the paper.  In Section~\ref{sec:SDexpr} we give a procedure to construct
complement-free expressions for aperiodic languages that suits our approach.
This is a generic result on languages, independent of two-way transducers.
Section~\ref{sec:comb-expr} presents the  combinators and the chain-star
operators for our characterization. 
The main theorem and technical proofs, which is constructing SD-regular transducer expressions from a two-way aperiodic
transducer, are in Section~\ref{sec:Equiv}.

\begin{gpicture}[name=auto:even,ignore]
  \gasset{Nw=6,Nh=6}
  \node[Nmarks=ir](1)(0,0){1}
  \node(2)(20,0){2}
  \drawedge[curvedepth=2](1,2){$a,b$}
  \drawedge[curvedepth=2](2,1){$a,b$}
\end{gpicture}
\begin{gpicture}[name=auto:U2,ignore]
  \gasset{Nw=6,Nh=6,loopdiam=5}
  \node(1)(0,0){1}
  \node(2)(20,0){2}
  \drawloop[loopangle=180](1){$a,c$}
  \drawloop[loopangle=0](2){$b,c$}
  \drawedge[curvedepth=2](1,2){$a$}
  \drawedge[curvedepth=2](2,1){$b$}
\end{gpicture}
\section{Preliminaries}\label{sec:Prelim}

\subsection{Rational languages and monoids}\label{sec:RatPrelim}
We call a finite set $\Sigma$ an \emph{alphabet} and its elements \emph{letters}.
A finite sequence of letters of $\Sigma$ is called a \emph{word}, and a set of words is a \emph{language}.
The empty word is denoted $\epsilon$, and we denote by $\Sigma^*$ the set of all words over the alphabet $\Sigma$. More generally, given any language $L\subseteq \Sigma^*$, we write $L^*$ for the \emph{Kleene star} of $L$, i.e., the set of words which can be written as a (possibly empty) sequence of words of $L$.
Given a word $u$, we write $|u|$ for the \emph{length} of $u$, i.e., its number of letters, and we denote by $u_i$ its $i^{th}$ letter.

A \emph{monoid} $M$ is a set equipped with a binary associative law, usually denoted $\cdot$ or omitted when clear from context, and a neutral element $1_M$ for this law, meaning that for any $s\in M$, $1_M\cdot s=s\cdot 1_M= s$.
The set of words $\Sigma^*$ can be seen as  the free monoid generated by $\Sigma$ using the concatenation of words as binary law.
Given a \emph{morphism} $\varphi: \Sigma^* \to M$, i.e., a function between 
monoids that satisfies $\varphi(\varepsilon)=1_M$ and
$\varphi(xy)=\varphi(x)\varphi(y)$ for any $x,y$, we say that $\varphi$
recognizes a language $L\subseteq \Sigma^*$ if $M$ is finite and 
$L=\varphi^{-1}(P)$ for some $P\subseteq M$.
A monoid is called \emph{aperiodic} if there exists an integer $n$ such that for any element $s$ of $M$, $s^n=s^{n+1}$. 

\begin{example}\label{ex:Un}
  We define the monoids $\widetilde{U}_n$, for $n\geq 0$, as the set of
  elements $\{1,s_1,\ldots,s_n\}$, with $1$ being the neutral element, and for
  any $1\leq i,j\leq n$, $s_i\cdot s_j= s_i$.  Clearly, $\widetilde{U}_n$ is
  aperiodic, actually idempotent, as $s_i\cdot s_i =s_i$ for any $1\leq i\leq
  n$.  For instance, the monoid $\widetilde{U}_2$ is the transition monoid 
  (defined below)
  of  the automaton below with $\varphi(a)=s_1$, $\varphi(b)=s_2$ and 
  $\varphi(c)=1$.

  \medskip\noindent\hfil\gusepicture{auto:U2}
\end{example}

\emph{Rational} languages are languages that can be described by rational 
expressions, i.e., sets of words constructed from finite sets
using the operations of concatenation, union and Kleene star. 
It is well-known that rational languages are equivalent to 
\emph{regular} languages, i.e., languages accepted by finite automata, and to 
languages recognized by finite monoids (and Monadic Second-order logic~\cite{Buchi60}).
\emph{Star-free} rational expressions are built from finite sets using
the operations of concatenation, union and complement (instead of Kleene star).
They have the same expressive power as finite aperiodic monoids~\cite{Schutzenberger_1965}
(as well as counter-free automata and first-order logic~\cite{McNaughtonPapert}).

\subsection{Two-way transducers}\label{sec:Twoway}

\begin{definition}[Two-way transducer]
  A \emph{(deterministic) two-way transducer} (2DFT) is a tuple
  $\A=(Q,\Sigma,\Gamma,\delta,\gamma, q_0,F)$ defined as follows:
  \begin{itemize} \itemsep0cm
    \item $Q$ is a finite set of \emph{states}.
    \item $\Sigma$ and $\Gamma$ are the finite \emph{input} and \emph{output alphabets}.
    \item $\delta: Q\times (\Sigma\uplus\{\leftend,\rightend\}) \to Q\times\{-1,+1\}$
    is the partial \emph{transition function}.  Contrary to one-way machines, the
    transition function also outputs an integer, indicating the move of the
    reading head.
    The alphabet is enriched with two new symbols $\leftend$ and $\rightend$, which are
    endmarkers that are added respectively at the beginning and at the end of the input
    word, such that for all $q\in Q$, we have ${\delta(q,\leftend)\in Q\times\{+1\}}$ (if
    defined), $\delta(q,\rightend)\in Q\times\{-1\}$ (if defined) and
    $\delta(q,\rightend)$ is undefined for $q\in F$.
    
    \item $\gamma: Q\times (\Sigma\uplus\{\leftend,\rightend\}) \to \Gamma^*$ is
    the partial \emph{production function} with same domain as $\delta$.
    
    \item $q_0\in Q$ is the \emph{initial state}.
    \item $F \subseteq Q$ is the set of final states.
  \end{itemize}
\end{definition}

A \emph{configuration} $c$ of $\A$ over an input word $w=w_1 \cdots w_{|w|}$ is
simply a pair $(p,i)$ where $p\in Q$ is the current state and $0\leq i\leq
|w|+1$ is the position of the head on the input tape containing $\leftend w
\rightend$.  Two configurations $c=(p,i)$ and $c'=(q,j)$ are \emph{successive}
if we have $\delta(p,w_i)=(q,d)$ and $i+d=j$, with $w_0=\leftend$ and
$w_{|w|+1}=\rightend$.  In this case, they produce an output $v=\gamma(p,w_i)$.
Abusing notations we will sometime write $\gamma(c)$ when the input word $w$ is
clear.  A run $\rho$ is a sequence of successive configurations $c_0\cdots c_n$.
The run $\rho$ is \emph{initial} if $c_0=(q_0,0)$ and is \emph{final} if
$c_n=(q,|w|+1)$ for some $q\in F$.  It is \emph{accepting} if it is both initial
and final.
  
The \emph{output} of a run $\rho=c_0\cdots c_n$ is the concatenation of the
output of the configurations, and will be denoted
$\sem{\rho}=\gamma(c_0)\cdots \gamma(c_{n-1})$.  Given a deterministic
two-way transducer $\A$ and an input word $w$, there is at most one accepting
run
of $\A$ over $\leftend w\rightend$, which we will denote
$\rho(w)$.
The output of $\A$ over $w$ is then $\sem{\A}(w)=\sem{\rho(w)}$.
The \emph{domain} of $\A$ is the set $\dom{\A}$ of words $w$ such that there exists an accepting run of $\A$ over $w$.
Finally, the semantics of $\A$ is the partial function 
$\sem{\A}\colon\Sigma^{*}\to\Gamma^{*}$ defined on $\dom{\A}$ by $w\mapsto\sem{\A}(w)$.

Let $\rho=(p_0,i_0)\cdots(p_n,i_n)$ be a run over a nonempty word
$w\in\Sigma^{+}$ such that $1\leq i_j\leq|w|$ for all $0\leq j< n$.  It is a
\emph{left-right} run if $i_0=1$ and $i_n=|w|+1$.  If this is the case, we say
that $\rho$ is a $\LR{p_0}{p_n}$-run.  Similarly, it is a \emph{left-left}
$\LL{p_0}{p_n}$-run if $i_0=1$ and $i_n=0$.  It is a \emph{right-left}
$\RL{p_0}{p_n}$-run if $i_0=|w|$ and $i_n=0$ and it is a \emph{right-right}
$\RR{p_0}{p_n}$-run if $i_0=|w|$ and $i_n=|w|+1$.  Notice that if $|w|=1$, then
left-right runs and right-right runs coincide, also right-left runs and
left-left runs coincide.
\begin{remark}
  Given our semantics of two-way transducers, a run associates states to each position,
  whereas the classical semantics of one-way automata keeps the states between two positions.
  Then, if we consider a word $w=uv$ and a left-left run $\LL{p}{q}$ on $v$, 
  we begin on the first position of $v$ in state $p$, and   the state $q$
  is reached at the end of the run on the last position of $u$.  This allows for
  easy sequential composition of partial runs when concatenating non empty words, as the
  end of a partial run is the start of the next one.

  However, in order to keep our figures as readable as possible, we will represent these
  states between words.  A state $q$ between two words $u$ and $v$ is to be placed on the
  first position of $v$ if it is the start of a run going to the right, and on the last
  position of $u$ otherwise. For instance, in Figure~\ref{fig:LR-run}, state $q_1$ is on 
  the first position of $u_{i+1}$ and state $q_3$ is on the last position of $u_i$.
\end{remark}
\paragraph*{Transition monoid of a two-way automaton}\label{sec:TrMon}
Let $\A$ be a deterministic two-way automaton (2DFA) with set of states $Q$.
When computing the transition monoid of a two-way automaton, we are
interested in the behaviour of the partial runs, i.e., how these partial runs
can be concatenated.  Thus we abstract a given $(d,p,q)$-run $\rho$ over a word
$w$ to a \emph{step}
$(d,p,q)\in\{\leftright,\leftleft,\rightright,\rightleft\}\times Q^{2}$ and we say
that $w$ realises the step $(d,p,q)$.  The transition monoid $\TrMon$ of $\A$ is
a subset of the powerset of steps:
$\TrMon\subseteq\mathcal{P}(\{\leftright,\leftleft,\rightright,\rightleft\}\times Q^{2})$.
The canonical surjective morphism
$\varphi\colon(\Sigma\uplus\{\leftend,\rightend\})^{*}\to\TrMon=\varphi((\Sigma\uplus\{\leftend,\rightend\})^{*})$ is
defined for a word $w\in(\Sigma\uplus\{\leftend,\rightend\})^{*}$ as the set of steps realised by $w$, i.e.,
$\varphi(w)=\{(d,p,q)\mid \text{there is a } (d,p,q)\text{-run on }w\}
\subseteq\{\leftright,\leftleft,\rightright,\rightleft\}\times Q^{2}$.  
As an example, in Figure~\ref{fig:2dft}, we have 
$$\varphi(a\#)=\{(\leftright, q_1, q_2), (\rightright, q_1, q_2), 
(\leftleft, q_3, q_3), (\rightleft,q_3,q_4),
(\leftleft, q_4, q_4), 
(\leftright, q_5, q_6), (\rightright, q_5, q_6)\}\,.$$
The unit of $\TrMon$ is $\mathbf{1}=\{\LR{p}{p},\RL{p}{p}\mid p\in Q\}$ 
and $\varphi(\varepsilon)=\mathbf{1}$.

A 2DFA is \emph{aperiodic} if its transition monoid $\TrMon$ is aperiodic.
Also, a 2DFT is aperiodic if its underlying input 2DFA is aperiodic.

When talking about a given step $(d,p,q)$ belonging to an element of $\TrMon$,
we will sometimes forget $p$ and $q$ and talk about a $d$-step, for
$d\in\{\leftleft,\rightright,\leftright,\rightleft\}$ if the states $p, q$ are
clear from the context, or are immaterial for the discussion. In this case we
also refer to a step $(d, p, q)$ as a $d$-step having $p$ as the starting state
and $q$ as the final state.

\section{Complement-free expressions for aperiodic languages}\label{sec:SDexpr}

As the aim of the paper is to obtain rational expressions corresponding to
transformations computed by aperiodic two-way transducers, we cannot rely on
extending the classical (SF=AP) star-free characterization of aperiodic languages, since
the complement of a function is not a function.  We solve this problem by
considering the SD=AP characterization of aperiodic languages, namely prefix codes
with bounded synchronisation delay, introduced by
Schützenberger~\cite{Schutzenberger1975d}.

A language $L$ is called a \emph{code} if for any word $u\in L^*$, there is a
unique decomposition $u=v_1\cdots v_n$ such that $v_i\in L$ for $1\leq i\leq n$.
For example, the language $W=\{a, ab, ba, bba\}$ is not a code: the words $abba,
aba \in W^*$ have decompositions $a\cdot bba=ab\cdot ba$ and $a\cdot ba=ab\cdot
a$ respectively.  A \emph{prefix code} is a language $L$ such that for any pair of words
$u,v$, if $u,uv\in L$, then $v=\epsilon$.  $W$ is not a prefix code, while
$W_1=W \setminus \{ab\}$ and $W_2=W \setminus \{a\}$ are prefix codes.
Prefix codes play a particular role in the sense that the
unique decomposition can be obtained on the fly while reading the word from left
to right.

\begin{definition}
  Let $d$ be a positive integer.
  A \emph{prefix} code $C$ over an alphabet $\Sigma$ has a synchronisation delay
  $d$ (denoted $d$-SD) if for all $u,v,w\in \Sigma^*$, $uvw\in C^*$ and
  $v\in C^d$ implies $uv\in C^*$ (hence also $w\in C^{*}$). An SD prefix code 
  is a prefix code with a bounded synchronisation delay. 
\end{definition}

As an example, consider the prefix code $C=\{aa, ba\}$ and the word
$ba(aa)^{d}\in C^{*}$.  We have $ba(aa)^{d}=uvw$ with $u=b$, $v=(aa)^{d}\in
C^{d}$ and $w=a$.  Since $uv\notin C^{*}$, the prefix code $C$ is not of bounded
synchronisation delay.  Likewise, $C=\{aa\}$ is also not of bounded
synchronisation delay.
On the other hand, the prefix code $C=\{ba\}$ is 1-SD.

The syntax of regular expressions over the alphabet $\Sigma$ is given by the grammar
$$
E ::= \emptyset \mid \varepsilon \mid a \mid E\cup E \mid E\cdot E \mid E^{*}
$$
where $a\in\Sigma$. We say that an expression is $\varepsilon$-free (resp.\ 
$\emptyset$-free) if it does not use $\varepsilon$ (resp.\ $\emptyset$) as 
subexpressions.  The semantics of a regular expression $E$ is a regular language
over $\Sigma^{*}$ denoted $\lang{E}$.

An SD-regular expression is a regular expression where Kleene-stars are
restricted to SD prefix codes: If $E^{*}$ is a sub-expression then $\lang{E}$ is
a prefix code with bounded synchronization delay.  Thus, the regular expression
$(ba)^*$ is a SD-regular expression while $(aa)^*$ is not.

The relevance of SD-regular expressions comes from the fact that they are a
complement-free characterization of aperiodic languages.

\begin{theorem}\cite{Schutzenberger1975d}
  A language $L$ is recognized by an aperiodic monoid if, and only if, there
  exists an SD-regular expression $E$ such that $L=\lang{E}$.
\end{theorem}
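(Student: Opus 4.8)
The plan is to prove the two directions of this equivalence separately, leaning on Schützenberger's SF=AP theorem and the standard closure properties of aperiodic monoids.

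For the direction that every SD-regular expression defines an aperiodic language, I would proceed by structural induction on the SD-regular expression $E$, showing that $\lang{E}$ is recognized by an aperiodic monoid. The base cases $\emptyset$, $\varepsilon$, and $a\in\Sigma$ are trivial, as each defines a finite language which is aperiodic. For the inductive step, union $E_1\cup E_2$ and concatenation $E_1\cdot E_2$ are handled by the fact that aperiodic languages are closed under Boolean operations and concatenation (equivalently, star-free languages are closed under these by the SF=AP theorem, since star-free expressions explicitly include union, complement, hence intersection, and concatenation). The only delicate case is the Kleene star $E^{*}$, but here the SD restriction is exactly what saves us: we are guaranteed that $\lang{E}$ is a prefix code with bounded synchronisation delay. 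The \textbf{key lemma} for this direction is that if $K$ is an aperiodic language which is a prefix code of bounded synchronisation delay, then $K^{*}$ is aperiodic. I expect this to be the main obstacle in this direction, and I would prove it by exhibiting a star-free expression for $K^{*}$, or equivalently by a direct syntactic-monoid argument: the bounded synchronisation delay $d$ ensures that membership in $K^{*}$ can be determined by a sliding-window condition of width $d$ together with prefix/suffix conditions, all of which are expressible without unbounded iteration.

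For the converse direction, that every aperiodic language admits an SD-regular expression, the natural approach is to reduce to a statement about morphisms into aperiodic monoids and then proceed by induction on the structure of the monoid. Given an aperiodic language $L$ recognized by a morphism $\varphi\colon\Sigma^{*}\to M$ with $M$ finite aperiodic, it suffices to construct, for each $s\in M$, an SD-regular expression for $\varphi^{-1}(s)$, since $L$ is a finite union of such sets and SD-regular expressions are closed under union. I would set up an induction on the size (or some complexity measure such as $|M|$ together with a submonoid/quotient ordering) of the aperiodic monoid $M$. The crucial structural tool is that aperiodic monoids can be analysed via their $\mathcal{J}$-classes or via a minimal generating-letter argument: one picks a letter or a piece of the alphabet, splits words according to the last (or first) occurrence of an element mapping outside a suitable submonoid, and recursively describes the pieces. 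The sets $\Sigma_{t}^{*}$ formed by letters staying inside a submonoid fixed by some $t$, intersected with aperiodicity constraints, are precisely where star is introduced, and one must verify these generating languages form prefix codes of bounded synchronisation delay.

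The \textbf{main obstacle} will be this converse direction, specifically controlling the synchronisation delay of the star subexpressions produced by the induction. It is easy to produce some rational expression by Kleene's construction, but ensuring every star lands on a prefix code of bounded synchronisation delay requires care: the classical proof of SD=AP, following Schützenberger, organises the induction so that each iterated language is of the form $(u\Sigma_{0}^{*})$-type or arises from a "stabilising" choice of letters, for which the bounded delay $d=1$ or a delay bounded by $|M|$ can be read off from aperiodicity (the bound $s^{n}=s^{n+1}$). I would therefore structure the recursion to always iterate over a set where a distinguished synchronising letter or factor appears exactly once per block, so that reading $d$ consecutive blocks pins down the block boundaries; aperiodicity of $M$ guarantees such a bound $d$ uniformly. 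Discharging this delay bound cleanly, and checking that the recursively constructed pieces genuinely cover $\varphi^{-1}(s)$ without ambiguity, is the technical heart of the argument, and it is exactly the content that Theorem~\ref{thm:U-S-SD-expressions} later refines into the unambiguous, stabilising form.
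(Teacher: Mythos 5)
Your plan is sound and follows the classical route to SD=AP, but note that the paper does not actually prove this statement: it is imported verbatim from \cite{Schutzenberger1975d}, and what the paper proves instead is the strengthening in Theorem~\ref{thm:U-S-SD-expressions}, namely that each $\varphi^{-1}(s)$ admits an SD-regular expression that is moreover \emph{unambiguous} and \emph{stabilising}. Comparing the two: for the easy direction you correctly isolate the key lemma that $K^{*}$ is aperiodic whenever $K$ is an aperiodic prefix code of bounded synchronisation delay (the paper silently relies on this closure property, e.g.\ in Lemma~\ref{lem-SDRTEdomAp}); your sliding-window justification is the right idea, though writing the actual star-free expression for $K^{*}$ from the delay bound $d$ is where the real work sits. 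For the converse, your ``$\mathcal{J}$-class or last-occurrence'' induction is the generic Schützenberger-style argument, whereas the paper's proof of Theorem~\ref{thm:U-S-SD-expressions} commits to the specific local-divisor trichotomy of Perrin and Pin \cite{Perrin-Pin-Infinite-words}: either $M$ is cyclic, or $M\cong\widetilde{U}_n$, or there is a partition $\Sigma=A\uplus B$ with $\varphi(A^{*})$ and $\varphi((A^{*}B)^{*})$ proper submonoids, the last case being handled through marked substitutions $c\mapsto G_c\subseteq A^{*}B$ which bump the synchronisation delay by at most one (Lemma~\ref{lem:MarkSubst}). That machinery is exactly what discharges the delay-control obstacle you flag, and it additionally yields unambiguity and the stabilising property for free, which your coarser induction would not; conversely, your version suffices for the bare statement being reviewed and is somewhat lighter. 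I see no genuine gap, only the expected amount of unexpanded detail in the star case of both directions.
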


Theorem~\ref{thm:U-S-SD-expressions} concretizes this result, and extends it to
get more specific expressions which are (i) \emph{unambiguous}, a property required for the
regular combinators expressing functions over words, and (ii) \emph{stabilising}, which
is a new notion introduced below that suits our need for characterizing runs of
aperiodic two-way transducers. Our proof technique follows the local divisor technique, which was notably used
by Diekert and Kufleitner to lift the result of Schützenberger to infinite
words~\cite{DiekertK-CSR12,DiekertK-ToCS2015}.

A regular expression $E$ is unambiguous, if it satisfies the following:
\begin{enumerate}
  \item  for each subexpression $E_1\cup E_2$ we have 
  $\lang{E_1}\cap\lang{E_2}=\emptyset$,

  \item  for each subexpression $E_1\cdot E_2$, each word
  $w\in\lang{E_1\cdot E_2}$ has a \emph{unique} factorisation $w=uv$ with 
  $u\in\lang{E_1}$ and $v\in\lang{E_2}$,

  \item for each subexpression $E_1^{*}$, the language $\lang{E_1}$ is a
  \emph{code}, i.e., each word $w\in\lang{E_1^{*}}$ has a \emph{unique}
  factorisation $w=v_1\cdots v_n$ with $v_i\in\lang{E_1}$ for $1\leq i\leq n$.
\end{enumerate}

\begin{definition}
  Given an aperiodic monoid $M$ and $X\subseteq M$, we say that $X$
  is $n$-\emph{stabilising} if $xy=x$ for all $x\in X^{n}$ and $y\in X$. 
  We say that $X$ is stabilising if it is $n$-stabilising for some $n\geq1$.
\end{definition}

\noindent{\bf{Remark}}.  Stabilisation generalizes aperiodicity in some sense.  For
aperiodicity, we require $x^n=x^{n+1}$ for each element $x\in M$ and some $n \in
\mathbb{N}$, i.e., all \emph{singleton} subsets of $M$ should be stabilising.

\begin{example}\label{ex:deux}
  Continuing Example~\ref{ex:Un}, any subset
  $X\subseteq\{s_1,\ldots,s_n\}\subseteq\widetilde{U}_n$ is 1-stabilising.
  
  As another example, consider the aperiodic 2DFT $\A$ in Figure \ref{fig:2dft}, and consider
  its transition monoid $\TrMon$.  Clearly, $\TrMon$ is an aperiodic monoid.  Let
  $\varphi$ be the morphism from $(\Sigma \uplus \{\leftend, \rightend\}) ^*$ to $\TrMon$.
  Consider the subset $Z=\{Y, Y^{2}\}$ of $\TrMon$ where $Y=\varphi(a\#a\$)$:
  \begin{align*}
    Y &= \{ (\leftleft,q_1,q_4), (\leftleft,q_3,q_3), 
    (\leftleft,q_4,q_4), (\leftright,q_5,q_1), (\rightright,q_0,q_1), 
    (\rightleft,q_2,q_4), (\rightright,q_4,q_5), (\rightright,q_6,q_1)\}
    \\
    Y^{2} &= \{ (\leftleft,q_1,q_4), (\leftleft,q_3,q_3), 
    (\leftleft,q_4,q_4), (\leftright,q_5,q_1), (\rightright,q_0,q_1), 
    (\rightright,q_2,q_1), (\rightright,q_4,q_5), (\rightright,q_6,q_1)\} \,.
  \end{align*}
  It can be seen that $Y^{3}=Y^{2}$, hence $Z$ is 2-stabilising.

\end{example}

Let $\varphi\colon\Sigma^{*}\to M$ be a morphism.  We say that a regular
expression $E$ is $\varphi$-\emph{stabilising} (or simply stabilising when
$\varphi$ is clear from the context) if for each subexpression $F^{*}$ of $E$,
the set $\varphi(\lang{F})$ is stabilising. 

Continuing Example~\ref{ex:deux}, we can easily see that $\varphi(a)$ is idempotent and we
get $\varphi(a^{+}\#a^{+}\$)=\{Y\}$.  Since $Y^{3}=Y^{2}$, we deduce that
$(aa^*\#aa^*\$)^{*}$ is a stabilising expression.  Notice also that, by definition,
expressions without a Kleene-star are stabilising vacuously.

\begin{example}\label{eg:stabilising}
  As a more non trivial example to illustrate stabilising expressions, consider the 2DFT
  $\A$ in Figure~\ref{fig:eg-2dft}, whose domain is the language $b(a^{*}b)^{\geq 2}a^*$.
  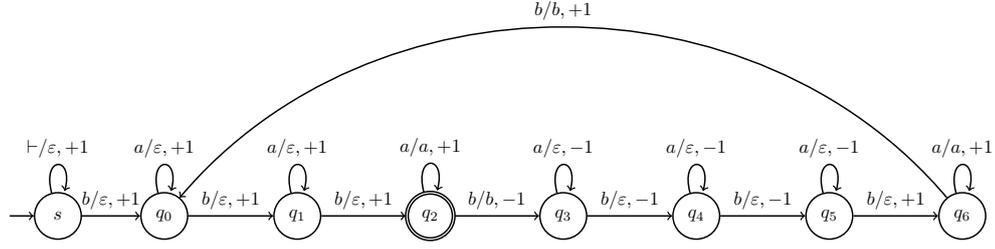
\begin{figure} [t]
  \begin{center}                                                                                                                                                         
    \scalebox{0.7}{                                                                                                                                                       
    \begin{tikzpicture}[->,thick]                                     
      \node[initial, state, initial text ={}] at (-2,-2) (A00) {$s$} ;
      \node[state] at (0,-2) (A0) {$q_0$} ;
      \node[state] at (2.5,-2) (A) {$q_1$} ;
      \node[state,accepting] at (5,-2) (B) {$q_2$}; 
      \node[state] at (7.5,-2) (C) {$q_3$};
      \node[state] at (10,-2) (D) {$q_4$};
      \node[state] at (12.5,-2) (E) {$q_5$};
      \node[state] at (15,-2) (F) {$q_6$};
      \path (A00) edge[loop above] node [above]{$\leftend/\epsilon, +1$}node {}(A00);
      \path (A00) edge node [above]{$b/\epsilon, +1$}node{} (A0);
      \path (A0) edge[loop above] node [above]{$a/\epsilon, +1$}node {}(A0);
      \path (A0) edge node [above]{$b/\epsilon, +1$}node{} (A);
      \path (A) edge[loop above] node [above]{$a/\epsilon, +1$}node {}(A);
      \path (A) edge node [above]{$b/\epsilon, +1$}node {}(B);
      \path (B) edge[loop above] node [above]{$a/a, +1$}node {}(B);
      \path (B) edge node [above]{$b/b, -1$}node{} (C);
      \path (C) edge[loop above] node [above]{$a/\epsilon, -1$}node {}(C);
      \path (C) edge node [above]{$b/\epsilon, -1$}node{} (D);
      \path (D) edge[loop above] node [above]{$a/\varepsilon, -1$}node {}(D);
      \path (D) edge node [above]{$b/\epsilon, -1$}node{} (E);
      \path (E) edge[loop above] node [above]{$a/\epsilon, -1$}node {}(E);
      \path (E) edge node [above]{$b/\epsilon, +1$}node {}(F);
      \path (F) edge[loop above] node [above]{$a/a, +1$}node {}(F);
      \path (F) edge[bend right=50] node [above]{$b/b,+1$}node {}(A0);
    \end{tikzpicture}
    } 
  \end{center}                                                                                                                                                         
  \caption{For $u_i\in a^*b$, an aperiodic 2DFT $\cal{A}$ computing the partial
  function $\sem{{\mathcal{A}}}(bu_1u_2\cdots u_na^k)=u_3u_1u_4u_2 \cdots u_nu_{n-2}a^k$
  if $n \geq 3$, and $a^k$  if $n=2$.  The domain is 
  $b(a^{*}b)^{\geq2}a^{*}$. }
  \label{fig:eg-2dft}                                                                                                                                                        
\end{figure} 
 Consider $b(a^*b)^{\geq 3} \subseteq \dom{\A}$.      
 Note that $a^*b$ is a prefix code with synchronisation delay 1.  Let $X=\varphi(a^*b)$, where 
 $\varphi$ is the morphism from $(\Sigma \uplus \{\leftend, \rightend\})
 ^*$ to $\TrMon$.   We will see that $X$ stabilises.  
 \begin{itemize}
 \item First, we have $X=\{Y_1,Y_2\}$ where
 
 $Y_1=\varphi(b)=\{ \\ (\leftright,s,q_0), 
 (\leftright,q_0,q_1), (\leftright,q_1,q_2), (\leftleft,q_2,q_3), 
 (\leftleft,q_3,q_4), (\leftleft,q_4,q_5), (\leftright,q_5,q_6), (\leftright,q_6,q_0),
 \\
 (\rightright,s,q_0), (\rightright,q_0,q_1), (\rightright,q_1,q_2), (\rightleft,q_2,q_3), 
 (\rightleft,q_3,q_4), (\rightleft,q_4,q_5), (\rightright,q_5,q_6), (\rightright,q_6,q_0) 
 \}$ 

 $Y_2=\varphi(a^+b)=\varphi(ab)=\{ \\
 (\leftright,q_0,q_1), (\leftright,q_1,q_2), (\leftleft,q_2,q_3),
 (\leftleft,q_3,q_3), (\leftleft,q_4,q_4), (\leftleft,q_5,q_5), (\leftright,q_6,q_0), 
 \\
 (\rightright,s,q_0), (\rightright,q_0,q_1), (\rightright,q_1,q_2), (\rightleft,q_2,q_3), 
 (\rightleft,q_3,q_4), (\rightleft,q_4,q_5), (\rightright,q_5,q_6), (\rightright,q_6,q_0) 
 \}$
 
 \item  
 Next, we can check that $X^{2}=\{Y_3,Y_4\}$ where
 
 $Y_3=Y_1Y_1=Y_1Y_2=\{ \\ (\leftright,s,q_1), 
 (\leftright,q_0,q_2), (\leftleft,q_1,q_4), (\leftleft,q_2,q_3), 
 (\leftleft,q_3,q_4), (\leftleft, q_4, q_5), (\leftright,q_5,q_0), (\leftright,q_6,q_1),
 \\
 (\rightright,s,q_0), (\rightright,q_0,q_1), (\rightright,q_1,q_2), (\rightleft,q_2,q_4), 
 (\rightleft,q_3,q_5), (\rightright,q_4,q_0), (\rightright,q_5,q_6), (\rightright,q_6,q_0) 
 \}$ 
 
 $Y_4=Y_2Y_1=Y_2Y_2=\{ \\
 (\leftright,q_0,q_2), (\leftleft,q_1,q_4), (\leftleft, q_2, q_3), 
 (\leftleft,q_3,q_3), (\leftleft, q_4, q_4), (\leftleft,q_5,q_5), (\leftright,q_6,q_1),
 \\
 (\rightright,s,q_0), (\rightright,q_0,q_1), (\rightright,q_1,q_2), (\rightleft, q_2, q_4), 
 (\rightleft,q_3,q_5), (\rightright,q_4,q_0), (\rightright,q_5,q_6), (\rightright,q_6,q_0) 
 \}$ 
 
 \item Then, we have $X^{4}=\{Z_1,Z_2\}$ where
 
 $Z_1=Y_3Y_3=Y_3Y_4=\{ \\ (\leftright,s,q_2), 
 (\leftleft,q_0,q_5), (\leftleft,q_1,q_4), (\leftleft, q_2, q_3), 
 (\leftleft,q_3,q_4), (\leftleft, q_4, q_5), (\leftright,q_5,q_2), (\leftright,q_6,q_2),
 \\
 (\rightright,s,q_0), (\rightright,q_0,q_1), (\rightright,q_1,q_2), (\rightright,q_2,q_2), 
 (\rightright,q_3,q_1), (\rightright,q_4,q_0), (\rightright,q_5,q_6), (\rightright,q_6,q_0) 
 \}$ 
 
 $Z_2=Y_4Y_3=Y_4Y_4=\{ \\
 (\leftleft,q_0,q_5), (\leftleft,q_1,q_4), (\leftleft,q_2,q_3), 
 (\leftleft,q_3,q_3), (\leftleft,q_4,q_4), (\leftleft,q_5,q_5), (\leftright,q_6,q_2),
 \\
 (\rightright,s,q_0), (\rightright,q_0,q_1), (\rightright,q_1,q_2), (\rightright,q_2,q_2), 
 (\rightright,q_3,q_1), (\rightright,q_4,q_0), (\rightright,q_5,q_6), (\rightright,q_6,q_0) 
 \}$ 
 
 \item Finally, we can easily check that $Z_1Y_1=Z_1Y_2=Z_1$ and $Z_2Y_1=Z_2Y_2=Z_2$. 
 Therefore, $X$ is 4-stabilising. Moreover, 
 ${b(a^{*}b)^{\geq3}}\subseteq\varphi^{-1}(Z_1)$ and 
 ${a^{+}b(a^{*}b)^{\geq3}}\subseteq\varphi^{-1}(Z_2)$.
\end{itemize}
\end{example}

Given a morphism $\varphi$ from $\Sigma^{*}$ to some aperiodic monoid $M$, our goal is to
build, for each language $\varphi^{-1}(s)$ with $s\in M$, an SD-regular expression which
is both \emph{unambiguous} and \emph{stabilising}.  The proof is by induction on the
monoid $M$ via the local divisor technique, similar to Diekert and Kufleitner
\cite{DiekertK-CSR12,DiekertK-ToCS2015,Diekert_2016}, and to Perrin and Pin \cite[Chapter VIII, Section
6.1]{Perrin-Pin-Infinite-words}, with the objective to get stronger forms of SD-regular
expressions.

\begin{theorem}\label{thm:U-S-SD-expressions}
  Given a morphism $\varphi$ from the free monoid $\Sigma^{*}$ to a finite
  aperiodic monoid $M$, for each $s\in M$ there exists an
  unambiguous, stabilising, SD-regular expression $E_s$ such
  that $\lang{E_s}=\varphi^{-1}(s)$.
\end{theorem}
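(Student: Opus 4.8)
The plan is to prove the statement by induction on the pair $(|M|,|\Sigma|)$ ordered lexicographically, following the local divisor technique of Diekert and Kufleitner. First I would reduce to the case where $\varphi$ is surjective: otherwise I replace $M$ by the aperiodic submonoid $\varphi(\Sigma^*)$, and for $s\notin\varphi(\Sigma^*)$ I take $E_s=\emptyset$, which is trivially unambiguous, stabilising and SD. The base case $M=\{1\}$ is handled by $E_1=\Sigma^*$: the full alphabet $\Sigma$ is a prefix code of synchronisation delay $1$, the expression is unambiguous, and it is vacuously stabilising since $\varphi(\Sigma)=\{1\}$ is $1$-stabilising.

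For the inductive step, since $M\neq\{1\}$ and $\varphi$ is surjective, I pick a letter $a$ with $c=\varphi(a)\neq 1$ and set $\Sigma'=\Sigma\setminus\{a\}$. Every word is uniquely written either with no occurrence of $a$, i.e.\ in $\Sigma'^*$ (treated by the inner induction on the alphabet, applied to $\varphi|_{\Sigma'^*}$), or as $w=w_0\,a\,\beta'\,w_n$ where $w_0,w_n\in\Sigma'^*$ and $\beta'\in B^*$ for the block code $B=\Sigma'^*a$. Two facts drive the argument: $B$ is a prefix code of synchronisation delay $1$ (the cut points of a word of $B^*$ are exactly the positions following an occurrence of $a$), and the local divisor $M_c=(cMc,\circ)$ with $cxc\circ cyc=cxcyc$ is aperiodic and satisfies $|M_c|<|M|$ (in an aperiodic monoid only $1$ is invertible, so $c\neq 1$ forces $1\notin cMc$). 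The morphism $\eta\colon B^*\to M_c$, $\eta(\beta)=c\,\varphi(\beta)$, factors as $\eta=\hat\eta\circ\pi$ through the finite coding $\pi\colon B^*\to\hat B^*$, where $\hat B=\{\hat t\mid t\in\varphi(\Sigma'^*)\}$ codes each block $w_0a$ by $\widehat{\varphi(w_0)}$. Applying the induction hypothesis to $\hat\eta$ (the monoid shrank) gives unambiguous, stabilising, SD-expressions $\hat E_r$ over $\hat B$ for each $\hat\eta^{-1}(r)$, and the inner induction gives expressions $F_t$ for $(\varphi|_{\Sigma'^*})^{-1}(t)$. Writing $\sigma$ for the substitution replacing each $\hat t$ by $F_t\cdot a$, and using the identity $\varphi(w)=\varphi(w_0)\cdot\eta(\beta')\cdot\varphi(w_n)$, I assemble
\[
  E_s = E_s^{\Sigma'} \cup \bigcup_{t_0\,r\,t_n=s}(F_{t_0}\cdot a)\cdot\sigma(\hat E_r)\cdot F_{t_n},
\]
the union ranging over $t_0,t_n\in\varphi(\Sigma'^*)$ and $r\in M_c$, where $E_s^{\Sigma'}$ is the no-$a$ part.

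Three properties then need checking. For \emph{unambiguity}: the factorisation $w=w_0\,a\,\beta'\,w_n$ is forced ($w_0$ ends before the first $a$, $w_n$ starts after the last $a$), so the outer union is disjoint and each Cauchy product is unambiguous; inside $\sigma(\hat E_r)$, unambiguity follows from that of $\hat E_r$ together with the fact that $B$ is a code. For \emph{SD-ness}: the stars of $E_s$ come either from the $F_t$ (already SD by induction) or from the stars of $\hat E_r$; in the latter case a star ranges over a prefix code $\hat K$ of bounded synchronisation delay over $\hat B$, and $\sigma(\hat K)$ stays a prefix code of bounded synchronisation delay over $\Sigma$ because bounded-delay codes are closed under substitution along the bounded-delay code $B$. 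For the \emph{stabilising} property of a star of $\hat E_r$ with $\hat\eta(\lang{\hat K})=Y\subseteq M_c$ stabilising in $M_c$, I must show $P=\varphi(L)$ is stabilising in $M$, where $L=\sigma(\lang{\hat K})\subseteq B^*$ and $Y=\eta(L)=cP$. The crucial observation is $P\subseteq Mc$ (blocks end in $a$). If $Y$ is $n$-stabilising in $M_c$, then $c\,\varphi(\gamma)\,\varphi(\beta)=c\,\varphi(\gamma)$ for all $\gamma\in L^n$, $\beta\in L$. Given $p\in P^{2n}$, write $p=p_1p_2$ with $p_1,p_2\in P^n$ and $p_1=u_1c$; then for any $q\in P$ we get $pq=u_1\,(c p_2)\,q=u_1\,(c p_2)=p_1p_2=p$, using the stabilising identity on $cp_2\in cP^n$. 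Hence $P$ is $2n$-stabilising in $M$.

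I expect this transfer of the stabilising property across the local divisor — together with the closure of bounded synchronisation delay under code substitution — to be the main obstacle. The stabilising transfer is the genuinely new ingredient, since the notion itself is specific to this paper and the left factor $c$ cannot be cancelled directly; the identity $P\subseteq Mc$ and the factorisation $p_1=u_1c$ are exactly what let the $M_c$-identity be lifted to $M$. By contrast, the unambiguity bookkeeping, the aperiodicity and size estimate for $M_c$, and the base and alphabet inductions are routine.
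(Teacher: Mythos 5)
Your proof is correct, but it follows a genuinely different induction scheme from the paper's. The paper does not actually run the local divisor recursion: after splitting off neutral letters via the substitution $a\mapsto\Sigma_0^*a$, it invokes the Perrin--Pin trichotomy (either $M$ is cyclic, or $M\cong\widetilde{U}_n$, or there is a partition $\Sigma=A\uplus B$ with $\varphi(A^*)$ and $\varphi((A^*B)^*)$ \emph{proper submonoids} of $M$), handles the first two cases by explicit star-light expressions, and in the third case recurses on the submonoid $M_B=\varphi((A^*B)^*)$ over the abstract alphabet $C=\varphi(A^*B)$. Because $M_B$ is a genuine submonoid with the same multiplication as $M$, the stabilising property of a star transfers for free there. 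Your route recurses instead on the local divisor, which is \emph{not} a submonoid, so you must pay for the smaller monoid with the transfer lemma you isolate ($P\subseteq Mc$, factor $p_1=u_1c$, cancel through the $M_c$-identity); that lemma is correct, it is indeed the one genuinely new step your scheme requires, and in exchange you avoid the structural trichotomy and the two special cases entirely. The remaining ingredients --- the block code $B=\Sigma'^*a$ of delay $1$, and the preservation of unambiguity and of SD prefix codes under your substitution $\sigma$ --- are exactly the paper's marked-substitution machinery (Lemma~\ref{lem:MarkSubst}, Corollary~\ref{cor:RMS}) instantiated with the partition $\Sigma=\Sigma'\uplus\{a\}$. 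One cosmetic point: take the local divisor to be $cM\cap Mc$ (or $cMc\cup\{c\}$) rather than $cMc$, since the unit of $\circ$ is $c=\eta(\varepsilon)$ itself, which need not lie in $cMc$; the size bound, via $1\notin cM\cap Mc$ for $c\neq 1$ in an aperiodic monoid, and the rest of your argument go through unchanged.
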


The proof of this theorem makes crucial use of \emph{marked substitutions}
(see~\cite{Perrin-Pin-Infinite-words}) that we define and study in the next
section.

\subsection{Marked substitutions}
Let $A, B$ be finite alphabets. A map $\alpha\colon A\to \mathcal{P}(B^{*})$ is called a \emph{marked
substitution} if it satisfies the following two properties:
\begin{itemize}
  \item There exists a partition $B=B_1\uplus B_2$ such that for all $a$ in $A$, $\alpha(a)\subseteq B_1^*B_2$,

  \item For all $a_1$ and $a_2$ in $A$, $a_1\neq a_2$ implies $\alpha(a_1)\cap \alpha(a_2)=\emptyset$.
\end{itemize}
A marked substitution $\alpha\colon A\to \mathcal{P}(B^{*})$ can be naturally
extended to words in $A^*$ using concatenation of languages, i.e., to a
morphism from the free monoid $A^{*}$ to 
$(\mathcal{P}(B^{*}),\cdot,\{\varepsilon\})$. It is then further lifted to 
languages $L\subseteq A^{*}$ by union: $\alpha(L)=\bigcup_{w\in L}\alpha(w)$.

\begin{lemma}[\cite{Perrin-Pin-Infinite-words} Chapter VIII, Proposition 6.2]\label{lem:MarkSubst}
  Let $\alpha\colon A\to \mathcal{P}(B^{*})$ be a marked substitution, and
  $X\subseteq A^{+}$ be a prefix code with synchronisation delay $d$.  
  Then $Y=\alpha(X)\subseteq(B_1^{*}B_2)^{+}$ is a prefix code with
  synchronisation delay $d+1$.
\end{lemma}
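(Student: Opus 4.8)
The plan is to exploit the \emph{block structure} induced by the marked substitution. Since $\alpha(a)\subseteq B_1^*B_2$ for every $a\in A$, each element of $\alpha(a)$ is a \emph{block}: a word containing exactly one letter of $B_2$, placed at its very end. Consequently every $w\in(B_1^*B_2)^*$ admits a \emph{unique} factorisation into such blocks, obtained by cutting immediately after each $B_2$-letter. Combined with the disjointness condition $\alpha(a_1)\cap\alpha(a_2)=\emptyset$ for $a_1\neq a_2$, each block lies in at most one $\alpha(a)$ and hence determines a unique letter of $A$. I would package this into a \emph{decoding} observation: if $w\in\alpha(\xi)$ for some $\xi\in A^*$, then $\xi$ is uniquely recovered from $w$ (factor $w$ into blocks and read off the letter of each block). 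In particular $\alpha$ is injective on $A^*$, and $Y^*=\alpha(X)^*=\alpha(X^*)$. The inclusion $Y\subseteq(B_1^*B_2)^+$ is immediate since the words of $X$ are nonempty.

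For the \emph{prefix code} property, take $y,yz\in Y$ and write $y\in\alpha(x)$, $yz\in\alpha(x')$ with $x,x'\in X$. As $y\in(B_1^*B_2)^+$ ends with a $B_2$-letter, the end of $y$ is a block boundary of $yz$; hence $z\in(B_1^*B_2)^*$ and the block factorisation of $y$ is a prefix of that of $yz$. Decoding then shows that $x$ is a prefix of $x'$, say $x'=x\,x''$ with $z\in\alpha(x'')$. Since $X$ is a prefix code and $x,x'\in X$, we get $x''=\varepsilon$, whence $z\in\alpha(\varepsilon)=\{\varepsilon\}$ and $z=\varepsilon$.

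The synchronisation delay is the heart of the argument. Let $u,v,w$ satisfy $uvw\in Y^*$ and $v\in Y^{d+1}$; I want $uv\in Y^*$. Fix the global block factorisation $uvw=\beta_1\cdots\beta_N$ with $\beta_i\in\alpha(c_i)$, so $\xi:=c_1\cdots c_N\in X^*$, and the standalone factorisation $v=\delta_1\cdots\delta_M$ with $\delta_i\in\alpha(e_i)$ and $e_1\cdots e_M\in X^{d+1}$, say $e_1\cdots e_M=f_1\cdots f_{d+1}$ with each $f_k\in X$ (so $M\geq d+1\geq2$). Because $v$ ends in a $B_2$-letter it ends exactly at a global block boundary. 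The only delicate point is the left end of $v$: either it too falls at a block boundary (Case A), or the last letter of $u$ lies inside a straddling global block $\beta_s$ (Case B). In Case A, $v$ is a concatenation of global blocks, so decoding gives $\xi=UVW\in X^*$ with $V=e_1\cdots e_M\in X^{d+1}=X^d\cdot X$; writing $V=V_0 f_{d+1}$ with $V_0\in X^d$ and applying the $d$-SD of $X$ with middle factor $V_0$ yields $UV_0\in X^*$, hence $UV\in X^*$ and $uv\in\alpha(X^*)=Y^*$.

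In Case B, the straddling block has the form $\beta_s=y_0\delta_1$ with $y_0\in B_1^+$ a nonempty suffix of $u$: indeed the first $B_2$-letter inside $v$ must close $\beta_s$, and it is precisely the end of the first standalone block $\delta_1$. After $\beta_s$ the two factorisations realign, giving $\beta_{s+1}=\delta_2,\dots,\beta_{s+M-1}=\delta_M$ and, by disjointness, $c_{s+i-1}=e_i$ for $i\geq2$. Thus the first factor $f_1=e_1\cdots e_\ell$ (with $\ell=|f_1|\geq1$) is ``sacrificed'' into the straddle, yet the surviving clean factor $f_2\cdots f_{d+1}=e_{\ell+1}\cdots e_M\in X^d$ still occurs as the infix $c_{s+\ell}\cdots c_{s+M-1}$ of $\xi$. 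Applying the $d$-SD of $X$ to this infix gives $c_1\cdots c_{s+M-1}\in X^*$; since these blocks $\beta_1\cdots\beta_{s+M-1}$ spell exactly $uv$ (recall $v$ ends at the boundary $\beta_{s+M-1}=\delta_M$), we conclude $uv\in\alpha(X^*)=Y^*$. This is precisely where the bound degrades from $d$ to $d+1$: the straddle consumes one full $X$-factor, so we must start from $Y^{d+1}$ to retain an $X^d$ synchronising infix. The main obstacle is therefore the bookkeeping of Case B — establishing that after absorbing $\delta_1$ the two factorisations realign and that the leftover is a genuine power $X^d$ — while the remaining steps are routine.
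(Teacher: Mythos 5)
Your proof is correct and follows essentially the same route as the paper's: the unique block factorisation of words in $(B_1^{*}B_2)^{*}$ together with disjointness of the images gives injective decoding (hence the prefix-code property), and the delay bound $d+1$ is obtained by sacrificing the first $Y$-factor of $v$ to realign with the block structure and then applying the $d$-SD property of $X$ to the remaining $X^{d}$. The only real difference is presentational: where you handle the realignment by an explicit case analysis on whether the $u$/$v$ boundary falls on a block boundary (your Case B), the paper gets the same realignment in one step by observing that $\alpha(A)\subseteq B_1^{*}B_2$ is itself a prefix code of synchronisation delay $1$ and applying that property to the factorisation $u\cdot v_0\cdot(v_1\cdots v_d w)$.
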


\begin{proof}
  First, since $B_1$ and $B_2$ are disjoint, $B_1^{*}B_2\subseteq B^{*}$ is a
  prefix code.  Hence, given a word $w\in\alpha(A^{*})\subseteq(B_1^*B_2)^*$,
  there exists a unique decomposition $w=w_1\cdots w_n$ such that $w_i\in
  B_1^*B_2$ for $1\leq i\leq n$.  Now since images of different letters from $A$
  are disjoint, there exists at most one $a_i$ such that $w_i\in\alpha(a_i)$.
  We deduce that there is exactly one word $w'\in A^{*}$ such that
  $\alpha(w')=w$.  This word is denoted $\alpha^{-1}(w)$.
  
  Now, we prove that $Y$ is a prefix code.  Let $v,w\in\alpha(A^{*})
  \subseteq(B_1^{*}B_2)^{*}$ and assume that $v$ is a prefix of $w$.  Write
  $w=w_1\cdots w_n$ with $w_i\in B_1^{*}B_2$.  Since $v$ ends with a letter from
  $B_2$ we deduce that $v=w_1\cdots w_i$ for some $1\leq i\leq n$.  Let
  $w'=\alpha^{-1}(w)=a_1\cdots a_n$.  We have $v'=\alpha^{-1}(v)=a_1\cdots a_i$.
  Now, if $v,w\in\alpha(X)$ then we get $v',w'\in X$.  Since $X$ is a prefix
  code we get $i=n$.  Hence $v=w$, proving that $Y$ is also a prefix code.
  
  Finally, we prove that $Y$ has synchronization delay $d+1$.
  Let $u,v,w$ in $B^*$ such that $uvw\in Y^{*}$ and $v\in Y^{d+1}$.
  We need to prove that $uv\in Y^{*}$.
  Since $v\in Y^{d+1}$, it can be written $v=v_0v_1\cdots v_d$ with $v_i\in Y$
  for $0\leq i\leq d$.
  Then, let us remark that $\alpha(A)\subseteq B_1^{*}B_2$ is a prefix code with
  synchronisation delay $1$.
  Since $uv_0\cdots v_d w\in Y^{*}\subseteq \alpha(A)^*$ and $v_0\in 
  Y\subseteq\alpha(A)^{+}$, we deduce that
  $uv_0$ belongs to $\alpha(A)^*=\alpha(A^{*})$, as well as $v_1\cdots v_d$ and
  $w$.  Let $r=\alpha^{-1}(uv_0)$, $s=\alpha^{-1}(v_1\cdots v_d)$ and
  $t=\alpha^{-1}(w)$. We have $rst=\alpha^{-1}(uvw)$ and since $uvw\in 
  Y^{*}=\alpha(X^{*})$, we deduce that $rst\in X^{*}$. Similarly, from
  $v_1\cdots v_d\in Y^{d}=\alpha(X)^d$, we get $s\in X^d$. Now, $X$ has 
  synchronisation delay $d$. Therefore, $rs\in X^*$, meaning that
  $uv=uv_0\cdots v_d\in\alpha(rs)\subseteq \alpha(X^*)=Y^{*}$. 
\end{proof}

Marked substitutions also preserve unambiguity of union, concatenation and 
Kleene star.

\begin{lemma}\label{lem:MS-unambiguity}
  Let $\alpha\colon A\to \mathcal{P}(B^{*})$ be a marked substitution and let 
  $L_1,L_2\subseteq A^{*}$.
  \begin{enumerate}
    \item  If the union $L_1\cup L_2$ is unambiguous then so is 
    $\alpha(L_1)\cup\alpha(L_2)$.
  
    \item  If the concatenation $L_1\cdot L_2$ is unambiguous then so is 
    $\alpha(L_1)\cdot\alpha(L_2)$.
  
    \item  If the Kleene star $L_1^{*}$ is unambiguous then so is 
    $\alpha(L_1)^{*}$.
  \end{enumerate}
\end{lemma}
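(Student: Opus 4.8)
The plan is to prove each of the three claims separately, exploiting the key structural fact established in the proof of Lemma~\ref{lem:MarkSubst}: a marked substitution $\alpha$ is injective on $A^{*}$ (different words of $A^{*}$ have disjoint images), and moreover every word $w\in\alpha(A^{*})$ has a \emph{unique} preimage $\alpha^{-1}(w)\in A^{*}$. This invertibility on its image is exactly what transports a unique factorisation from $A^{*}$ back to $B^{*}$. Throughout I would freely use that $\alpha(L_1\cdot L_2)=\alpha(L_1)\cdot\alpha(L_2)$ and $\alpha(L_1^{*})=\alpha(L_1)^{*}$ since $\alpha$ is a monoid morphism into $(\mathcal{P}(B^{*}),\cdot,\{\varepsilon\})$ lifted by union.

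For claim~(1), unambiguity of $L_1\cup L_2$ means $\lang{}$-disjointness, i.e.\ $L_1\cap L_2=\emptyset$. I would show $\alpha(L_1)\cap\alpha(L_2)=\emptyset$: if some $w$ lay in both, then applying $\alpha^{-1}$ gives a word in $L_1$ and a word in $L_2$, but the preimage is unique, so $\alpha^{-1}(w)\in L_1\cap L_2$, a contradiction. For claim~(2), suppose $w\in\alpha(L_1)\cdot\alpha(L_2)$ admits two factorisations $w=x_1y_1=x_2y_2$ with $x_i\in\alpha(L_1)$, $y_i\in\alpha(L_2)$. Since every factor lies in $\alpha(A^{*})$, each is a concatenation of blocks from $B_1^{*}B_2$, and the block decomposition of $w$ is unique; hence both cuts fall at block boundaries and correspond, via $\alpha^{-1}$, to two factorisations $\alpha^{-1}(w)=u_1v_1=u_2v_2$ with $u_i\in L_1$, $v_i\in L_2$. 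Unambiguity of $L_1\cdot L_2$ forces $u_1=u_2$, $v_1=v_2$, and applying $\alpha$ (using injectivity) gives $x_1=x_2$, $y_1=y_2$. For claim~(3), unambiguity of $L_1^{*}$ means $L_1$ is a code. Given $w\in\alpha(L_1)^{*}=\alpha(L_1^{*})$ with two decompositions into factors of $\alpha(L_1)$, each factor is again a union of $B_1^{*}B_2$-blocks, so both decompositions refine the unique block decomposition of $w$ and transport under $\alpha^{-1}$ to two $L_1$-decompositions of $\alpha^{-1}(w)$; since $L_1$ is a code these coincide, and injectivity of $\alpha$ lifts equality back to $B^{*}$.

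The single reusable lemma underlying all three cases is the claim that any $w\in\alpha(A^{*})$ factors uniquely through the $B_1^{*}B_2$-block decomposition and that this factorisation is \emph{compatible} with $\alpha^{-1}$: a cut in $w$ at a block boundary corresponds to a cut in $\alpha^{-1}(w)$ at a letter boundary. I expect the main obstacle to be making precise that any proposed factorisation cut necessarily occurs at a $B_2$-boundary, i.e.\ that one cannot split in the middle of a block. This follows because each factor $x\in\alpha(L_i)\subseteq\alpha(A^{+})$ itself ends in a letter of $B_2$ and is a concatenation of whole blocks, so no factor boundary can fall strictly inside a block of $w$; the disjointness $B=B_1\uplus B_2$ and the shape $\alpha(a)\subseteq B_1^{*}B_2$ are exactly what guarantee this. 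Once this alignment is established, each part reduces to the already-known uniqueness in $A^{*}$ transported through the bijection $\alpha\colon A^{*}\to\alpha(A^{*})$, so the remaining steps are routine.
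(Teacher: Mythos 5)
Your proposal is correct and follows essentially the same route as the paper's proof: both rest on the injectivity of a marked substitution on $A^{*}$ and on the observation that any factorisation cut of a word in $\alpha(A^{*})$ must fall at a boundary of the unique $B_1^{*}B_2$-block decomposition, so that uniqueness of factorisations is transported through $\alpha^{-1}$. The alignment-of-cuts argument you flag as the main obstacle is exactly the step the paper makes explicit in the concatenation case, so there is nothing missing.
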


\begin{proof}
As stated in the previous proof, a marked substitution is one-to-one.
We denote by $\alpha^{-1}(w)$ the unique inverse of $w$, for $w$ in $\alpha(A^*)$.
  
  If $w\in\alpha(L_1)\cup\alpha(L_2)$ then $\alpha^{-1}(w)\in L_1\cup L_2$. 
  This shows that unambiguity of union is preserved by $\alpha$.

  Assume now that the concatenation $L_1\cdot L_2$ is unambiguous.  Let
  $w\in\alpha(L_1)\cdot\alpha(L_2)$ and consider its unique factorisation
  $w=v_1b_1\cdots v_nb_n$ as above and $\alpha^{-1}(w)=a_1\cdots a_n$.  Since
  $\alpha(L_1)\subseteq(B_1^{*}B_2)^{*}$, a factorisation of $w$ according to
  $\alpha(L_1)\cdot\alpha(L_2)$ must be of the form $w=(v_1b_1\cdots
  v_ib_i)\cdot(v_{i+1}b_{i+1}\cdots v_nb_n)$ with $a_1\cdots a_i\in L_1$ and
  $a_{i+1}\cdots a_n\in L_2$.  From unambiguity of the product $L_1\cdot L_2$ we
  deduce that such a factorisation of $w$ is unique.  Hence, the product
  $\alpha(L_1)\cdot\alpha(L_2)$ is unambiguous.
  
  We can prove similarly that $\alpha$ preserves unambiguity of Kleene stars.
\end{proof}

We will be interested in marked substitutions that are defined by regular
expressions.  A \emph{regular marked substitution} (RMS) is a map $\alpha\colon
A\to\reg{B^{*}}$ which assigns a regular expression $\alpha(a)$ over $B$ to each
letter $a\in A$ such that $\tilde{\alpha}\colon A\to\mathcal{P}(B^{*})$ defined
by $\tilde{\alpha}(a)=\lang{\alpha(a)}$ is a marked substitution.  

Let $\alpha\colon A\to\reg{B^{*}}$ be an RMS and let $E$ be a regular expression
over $A$.  We define $\alpha(E)=E[\alpha(a)/a, \forall a\in A]$ to be the 
regular expression over $B$ obtained from $E$ by substituting each
occurrence of a letter $a\in A$ with the expression $\alpha(a)$.  Notice
that $\alpha$ is compositional:
$\alpha(E_1\cup E_2)=\alpha(E_1)\cup\alpha(E_2)$,
$\alpha(E_1\cdot E_2)=\alpha(E_1)\cdot\alpha(E_2)$ and
$\alpha(E_1^{*})=\alpha(E_1)^{*}$. 
In particular, we have $\tilde{\alpha}(\lang{E})=\lang{\alpha(E)}$.

Further, we say that a RMS $\alpha$ is \emph{unambiguous} (URMS) if $\alpha(a)$
is unambiguous for each $a\in A$.  Similarly, an RMS $\alpha$ is SD-regular
(SDRMS) if $\alpha(a)$ is an SD-regular expression for each $a\in A$.
We obtain:

\begin{corollary}\label{cor:RMS}
  Let $\alpha\colon A\to\reg{B^{*}}$ be an RMS and $E$ be a regular
  expression over $A$.
  \begin{enumerate}
    \item  If $\alpha$ and $E$ are SD-regular, then $\alpha(E)$ is SD-regular.
  
    \item  If $\alpha$ and $E$ are unambiguous, then $\alpha(E)$ is unambiguous.
  \end{enumerate}
\end{corollary}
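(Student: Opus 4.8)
The plan is to prove both statements together by structural induction on the regular expression $E$, exploiting the compositionality of $\alpha$ recorded above, namely $\alpha(E_1\cup E_2)=\alpha(E_1)\cup\alpha(E_2)$, $\alpha(E_1\cdot E_2)=\alpha(E_1)\cdot\alpha(E_2)$ and $\alpha(E_1^{*})=\alpha(E_1)^{*}$, together with the identity $\tilde{\alpha}(\lang{E})=\lang{\alpha(E)}$. The key structural observation I would exploit is that every subexpression of $\alpha(E)$ is either the image $\alpha(F)$ of a subexpression $F$ of $E$, or a subexpression occurring inside some $\alpha(a)$ with $a\in A$. The former are controlled by the induction hypothesis; the latter are unambiguous (resp.\ SD-regular) precisely because $\alpha$ is assumed to be a URMS (resp.\ SDRMS). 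Thus the induction only has to handle the outermost operator contributed by $E$ at each step.

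The base cases are immediate: $\alpha(\emptyset)=\emptyset$ and $\alpha(\varepsilon)=\varepsilon$ are trivially unambiguous and SD-regular, while $\alpha(a)$ is unambiguous (resp.\ SD-regular) by the hypothesis on $\alpha$. For the union and concatenation cases, SD-regularity of $\alpha(E)$ follows at once from the induction hypothesis, since neither operator introduces a new Kleene star and hence no new prefix-code obligation arises. For unambiguity, if $E=E_1\cup E_2$ is unambiguous then $\lang{E_1}\cap\lang{E_2}=\emptyset$; applying Lemma~\ref{lem:MS-unambiguity}(1) with $L_i=\lang{E_i}$ gives $\lang{\alpha(E_1)}\cap\lang{\alpha(E_2)}=\emptyset$, which combined with the induction hypothesis on $\alpha(E_1),\alpha(E_2)$ shows $\alpha(E_1\cup E_2)$ is unambiguous. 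The case $E=E_1\cdot E_2$ is identical, using Lemma~\ref{lem:MS-unambiguity}(2) to transfer the unique factorisation from $\lang{E_1}\cdot\lang{E_2}$ to its image.

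The only real content lies in the Kleene-star case $E=E_1^{*}$, where $\alpha(E)=\alpha(E_1)^{*}$. For part (1), SD-regularity of $E_1^{*}$ means $\lang{E_1}$ is a prefix code of some bounded synchronisation delay $d$; after discarding the degenerate subcases $\lang{E_1}\in\{\emptyset,\{\varepsilon\}\}$ (where the image language is again trivial and clearly SD), I may assume $\lang{E_1}\subseteq A^{+}$, so Lemma~\ref{lem:MarkSubst} applies and yields that $\lang{\alpha(E_1)}=\tilde{\alpha}(\lang{E_1})$ is a prefix code of synchronisation delay $d+1$. Since $\alpha(E_1)$ is SD-regular by the induction hypothesis and its language is now known to be an SD prefix code, $\alpha(E_1)^{*}$ is SD-regular. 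For part (2), unambiguity of $E_1^{*}$ means $\lang{E_1}$ is a code, and Lemma~\ref{lem:MS-unambiguity}(3) transfers this to $\lang{\alpha(E_1)}$; together with the induction hypothesis giving unambiguity of $\alpha(E_1)$, this shows $\alpha(E_1)^{*}$ is unambiguous.

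I expect the star case to be where all the difficulty would ordinarily concentrate, but the two genuinely nontrivial facts it needs — that a marked substitution is one-to-one and raises the synchronisation delay by exactly one (Lemma~\ref{lem:MarkSubst}), and that it preserves the code and unique-factorisation properties (Lemma~\ref{lem:MS-unambiguity}) — have already been established. Consequently the corollary reduces to this routine bookkeeping induction, the only point requiring a little care being the restriction to $\varepsilon$-free prefix codes $\lang{E_1}\subseteq A^{+}$ so that Lemma~\ref{lem:MarkSubst} is applicable.
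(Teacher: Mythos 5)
Your proposal is correct and follows essentially the same route as the paper: both rest on the observation that every subexpression of $\alpha(E)$ is either inside some $\alpha(a)$ (covered by the hypothesis on $\alpha$) or of the form $\alpha(F)$ for a subexpression $F$ of $E$ (covered by Lemma~\ref{lem:MarkSubst} for the SD property of stars and Lemma~\ref{lem:MS-unambiguity} for unambiguity). Recasting this dichotomy as a structural induction on $E$, and explicitly setting aside the degenerate star cases $\lang{E_1}\in\{\emptyset,\{\varepsilon\}\}$ so that Lemma~\ref{lem:MarkSubst} applies, are only presentational refinements of the paper's argument.
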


\begin{proof}
  1. Let $F^{*}$ be a subexpression of $\alpha(E)$.  If $F^{*}$ is a subexpression
  of some $\alpha(a)$ then, $\alpha(a)$ being SD-regular we obtain than
  $\lang{F}$ is an SD prefix code. Otherwise, $F=\alpha(G)$ where $G^{*}$ is a 
  subexpression of $E$. Since $E$ is SD-regular,  $\lang{G}$ is a SD 
  prefix code. By lemma~\ref{lem:MarkSubst} we deduce that 
  $\lang{F}=\tilde{\alpha}(\lang{G})$ is a SD prefix code.
  
  2. First, we know that each $\alpha(a)$ is unambiguous. Next, a subexpression 
  of $\alpha(E)$ which is not a subexpression of some $\alpha(a)$ must be of 
  the form $\alpha(F)$ where $F$ is a subexpression of $E$. We conclude easily 
  using unambiguity of $E$ and Lemma~\ref{lem:MS-unambiguity}.
\end{proof}

\subsection{Proof of Theorem~\ref{thm:U-S-SD-expressions}}

\begin{proof}
  We first consider the set of neutral letters, i.e., letters whose image is the
  neutral element $1$ of $M$.  To ease the proof, we first explain how to handle
  them, and in the rest of the proof, focus on the case where we do not have neutral letters.

       Let
  $\varphi\colon\Sigma^{*}\to M$ be a morphism and
  $\Sigma_0=\{a\in\Sigma\mid\varphi(a)=1\}$ be the set of neutral letters.
  Further, let $\Sigma_1=\Sigma\setminus\Sigma_0$ and let
  $\varphi_1\colon\Sigma_1^{*}\to M$ be the restriction of $\varphi$ to
  $\Sigma_1^{*}$.  Let $\alpha\colon\Sigma_1\to\reg{\Sigma^{*}}$ be the regular
  marked substitution defined by $\alpha(a)=\Sigma_0^{*}a$.  Clearly, $\alpha$
  is unambiguous and since $\Sigma_0$ is a 1-SD prefix code we get that $\alpha$
  is SD-regular.  By Corollary~\ref{cor:RMS} we deduce that $\alpha$ preserves
  unambiguity and also SD-expressions.  It also preserves stabilising
  expressions, i.e., if $E\in\reg{\Sigma_1^{*}}$ is $\varphi_1$-stabilising then
  $\alpha(E)\in\reg{\Sigma^{*}}$ is $\varphi$-stabilising.  Indeed,
  $\varphi(\Sigma_0)$ is 1-stabilising.  Further, if $G^{*}$ is a subexpression
  of $\alpha(E)$ different from $\Sigma_0^{*}$ then there is a subexpression
  $F^{*}$ of $E$ such that $G=\alpha(F)$.  Hence,
  $\lang{G}=\tilde{\alpha}(\lang{F})$ and
  $X=\varphi(\lang{G})=\varphi_1(\lang{F})$ is stabilising.

  Now, suppose we have unambiguous, stabilising, SD-expressions
  $E_s$ for $\varphi_1$ and each $s\in M$: $\lang{E_s}=\varphi_1^{-1}(s)$.  
  We deduce that $E'_s=\alpha(E_s) \cdot \Sigma_0^*$ is an unambiguous, 
  stabilising, SD-expression. Moreover, we have $\lang{E'_s}=\varphi^{-1}(s)$. 
  
  \medskip
  In the rest of the proof, we assume that the morphism $\varphi$ has no neutral
  letters.  The proof is by induction on the size of $M$, using a result from
  Perrin and Pin \cite[Chapter XI, Proposition 4.14]{Perrin-Pin-Infinite-words} 
  stating that if $\varphi$ is a surjective morphism from $\Sigma^*$ to a finite
  aperiodic monoid $M$, then one of the following cases hold:
  \begin{enumerate}
    \item\label{PP1} $M$ is a cyclic monoid, meaning that $M$ is generated by a single element.
    \item\label{PP2} $M$ is isomorphic to $\widetilde{U}_n$ for some $n\geq1$ 

    \item\label{PP3} There is a partition $\Sigma=A\uplus B$ such that
    $\varphi(A^*)$ and $\varphi((A^*B)^*)$ are proper submonoids of $M$.
  \end{enumerate}
  We now treat the three cases above.
  \begin{enumerate}
    \item $M$ is a cyclic monoid.
    Then $M$ is of the form $\{1,s,s^2,\ldots,s^n\}$ with $s^is^j=s^{i+j}$ if
    $i+j\leq n$ and $s^n$ otherwise.  Notice that since we have no neutral
    letters, $\varphi^{-1}(1)=\{\varepsilon\}$.
    For $1\leq i \leq n$, we denote by $\Sigma_i$ the set of letters whose image is $s^i$.
    Now, we define inductively stabilising, unambiguous, SD-regular expressions 
    $E_j$ such that $\lang{E_j}=\varphi^{-1}(s^{j})$ for $1\leq j\leq n$.
    Let $E_1=\Sigma_1$. Then, for $1<j<n$ we let 
    $$
    E_j=\Sigma_j\cup \bigcup_{1\leq i<j} E_i\Sigma_{j-i} \,.
    $$
    Notice that expressions $E_j$ for $1\leq j<n$ are unambiguous and do not 
    use the Kleene star, hence are stabilising and SD. Finally, let 
    $$
    E_n=\Big( \Sigma_n\cup \bigcup_{1\leq i,j<n\mid n\leq i+j} E_i\Sigma_{j}
    \Big) \Sigma^{*} \,.    
    $$
    Notice that the separation between the first part of $E_n$ and $\Sigma^*$ is done at the letter at which the image of the prefix reaches $s^n$. Then as above, $E_n$ is unambiguous, and $\Sigma$ is an $n$-stabilising, 1-SD
    prefix code. Moreover, $\lang{E_j}=\varphi^{-1}(s^{j})$ for $1\leq j\leq 
    n$, which concludes the proof in the case of cyclic monoids.
    
    \item $M$ is isomorphic to $\widetilde{U}_n$ for some $n$.  Then $M$ is of the
    form $\{1,s_1,\ldots,s_n\}$ where $s_is_j=s_i$ for all $1\leq i,j\leq n$.
    As above, since we have no neutral letters, we deduce that
    $\varphi^{-1}(1)=\{\varepsilon\}$.  We similarly define
    $\Sigma_i=\varphi^{-1}(s_i)\cap\Sigma$.  Clearly,
    $\varphi^{-1}(s_i)=\lang{\Sigma_i\Sigma^{*}}$, and $\Sigma_i\Sigma^{*}$ is
    an unambiguous, 1-stabilising, 1-SD regular expression.

    \item There is a partition $\Sigma=A\uplus B$ such that $M_A=\varphi(A^*)$
    and $M_B=\varphi((A^*B)^*)$ are proper submonoids of $M$.
    We set $C=\varphi(A^*B)\subseteq M_B$ and view $C$ as a new alphabet.
    Note that $C$ generates $M_B$.  Finally, let $f\colon A^*\to M_A$ be the
    restriction of $\varphi$ to $A$ and $g\colon C^*\to M_B$ be the evaluation
    morphism defined by $g(c)=c$ for $c\in C$.

    Then $f$ and $g$ are surjective morphisms to monoids $M_A$ and $M_B$ whose
    size are smaller than $M$.  We can thus use the induction hypothesis and get
    unambiguous, stabilising, SD-expressions for elements of $M_A$ and $M_B$
    with respect to $f$ and $g$ respectively.  Given an element $s$ in $M_A$, we
    get an unambiguous, $f$-stabilising, SD-expression $E_s$ over $A$ for
    $f^{-1}(s)$.  Similarly for an element $t$ in $M_B$, we get an unambiguous,
    $g$-stabilising, SD-expression $F_t$ over the alphabet $C$ for $g^{-1}(t)$.  
    Notice that we have in particular expressions $E_1$ for $f^{-1}(1)$ and 
    $F_1$ for $g^{-1}(1)$. 
    
    To go back from expressions over $A, C$ to expressions over $\Sigma^{*}$, we
    first define expressions $G_c$ for $\varphi^{-1}(c)\cap A^{*}B$, for each
    $c\in C$:
    $$
    G_c=\bigcup\limits_{s\in M_A,b\in B \mid s\varphi(b)=c} E_s\cdot b \,.
    $$  
    Notice that $\alpha\colon C\to\reg{\Sigma^{*}}$ defined by $\alpha(c)=G_c$
    is a regular marked substitution with respect to the partition
    $\Sigma=A\uplus B$.  Indeed, $\alpha(c)\subseteq A^{*}B$ and
    $\varphi(\alpha(c))=c$ which implies that the images are pairwise disjoint.
    Moreover, each $G_c$ is unambiguous, $\varphi$-stabilising and SD-regular.
    
    Let $t\in M_B$.  By Corollary~\ref{cor:RMS}, $\alpha(F_t)$ is unambiguous
    and SD-regular.  It is also $\varphi$-stabilising.  Indeed, let $G^{*}$ be a
    subexpression of $\alpha(F_t)$.  If $G^{*}$ is a subexpression of some
    $\alpha(c)$ we get the result since $G_c$ is $\varphi$-stabilising.  Otherwise, there
    is a subexpression $F^{*}$ of $F_t$ such that $G=\alpha(F)$.  Hence,
    $\lang{G}=\tilde{\alpha}(\lang{F})$ and
    $X=\varphi(\lang{G})=g(\lang{F})$ is stabilising.

    For each $t\in M_B$, we have
    $\lang{\alpha(F_t)}=\varphi^{-1}(t)\cap(A^{*}B)^{*}$.  
    Finally, we need to combine these elements to get an expression for any word
    over $\Sigma$.  Noticing that $\Sigma^*=(A^*B)^*A^*$, we define for $s\in M$
    $$
    E'_s=\bigcup_{t\in M_B,r\in M_A,tr=s} \alpha(F_t)\cdot E_r \,.
    $$
    We can easily show that $E'_s$ is unambiguous and satisfies
    $\lang{E_s}=\varphi^{-1}(s)$.  It is also clearly a $\varphi$-stabilising 
    SD-expressions. Hence, we get the result.  \qedhere
  \end{enumerate}
\end{proof}

\section{Combinator expressions}\label{sec:comb-expr}
In this section, we present our combinators to compute first order definable functions
from finite words to finite words.  The simpler combinators of unambiguous concatenation
and sum are similar to those in \cite{AlurFreilichRaghothaman14,DGK-lics18}, but we
differ in the equivalent of the Kleene-Star to match the aperiodicity that we tackle.

\subsection{Simple Functions}

For each $v\in\Gamma^{*}$ we have a constant function $f_v$ defined by
$f_v(u)=v$ for all $u\in\Sigma^{*}$.  Abusing notations, we simply denote the
constant function $f_v$ by $v$.  We denote by
$\bot\colon\Sigma^{*}\to\Gamma^{*}$ the function with empty domain.  These
atomic functions are the most simple ones.

\subsection{Unambiguous sums}\label{sec:ifthenelse}

We will use two equivalent ways of defining a function by cases. First, the 
if-then-else construct is given by $h=\Ifthenelse{L}{f}g$ where 
$f,g\colon\Sigma^{*}\to\D$ are functions and $L\subseteq\Sigma^{*}$ is a 
language. We have $\dom{h}=(\dom{f}\cap L)\cup(\dom{g}\setminus L)$.
Then, for $w\in\dom{h}$ we have 
$$
h(w)=
\begin{cases}
  f(w) & \text{if } w\in L \\
  g(w) & \text{otherwise.}
\end{cases}
$$
We will often use this case definition with $L=\dom{f}$. To simplify notations 
we define $f+g = \Ifthenelse{\dom{f}}{f}g$. Note that 
$\dom{f+g}=\dom{f}\cup\dom{g}$ but the sum is not 
commutative and $g+f = \Ifthenelse{\dom{g}}{g}f$. For $w\in\dom{f}\cap\dom{g}$ we 
have $(f+g)(w)=f(w)$ and $(g+f)(w)=g(w)$. When the domains of $f$ and $g$ 
are disjoint then $f+g$ and $g+f$ are equivalent functions with domain 
$\dom{f}\uplus\dom{g}$. In all cases the sum is associative and the
sum notation is particularly useful when applied to a sequence 
$f_1,\ldots,f_n$ of functions:
$$
\sum_{1\leq i\leq n}f_i = f_1+\cdots+f_n = \Ifthenelse{\dom{f_1}}{f_1}{
\Ifthenelse{\dom{f_2}}{f_2}{\cdots\Ifthenelse{\dom{f_{n-1}}}{f_{n-1}}{f_n}}}
$$
If the domains of the functions are pairwise disjoint then this sum is 
associative and commutative.

Further, we let $\SimpleFun{L}{f}=\Ifthenelse{L}{f}\bot$ the function $f$ 
restricted to $L\cap\dom{f}$.
When $L=\{w\}$ is a singleton, we simply write $\SimpleFun{w}{f}$.

\subsection{The Hadamard product}

The Hadamard product of two functions $f,g\colon\Sigma^{*}\to\Gamma^{*}$ first
applies $f$ and then applies $g$.  It is denoted by $f\odot g$.  Its domain is
$\dom{f}\cap\dom{g}$ and $(f\odot g)(u)=f(u)g(u)$ for each input word $u$ in its
domain.

\subsection{The unambiguous Cauchy product}

Consider two functions $f,g\colon\Sigma^{*}\to\Gamma^{*}$. The unambiguous 
Cauchy product of $f$ and $g$ is the function $f\cdot g$ whose domain is the set 
of words $w\in\Sigma^{*}$ which admit a unique factorization $w=uv$ with 
$u\in\dom{f}$ and $v\in\dom{g}$, and in this case, the computed output is 
$f(u)g(v)$.

Contrary to the Hadamard product which reads its full input word $w$ twice, 
first applying $f$ and then applying $g$, the Cauchy product \emph{splits 
unamgibuously} its input word $w$ as $uv$, applies $f$ on $u$ and then $g$ on 
$v$.

Sometimes we may want to reverse the output and produce $g(v)f(u)$. This 
\emph{reversed} Cauchy product can be defined using the Hadamard product as
$$
f \rcdot g = ((\SimpleFun{\dom{f}}{\varepsilon})\cdot g)
\odot(f\cdot(\SimpleFun{\dom{g}}{\varepsilon}))
$$

\subsection{The $k$-chained Kleene-star and its reverse}

Let $L\subseteq\Sigma^*$ be a code, let $k\geq1$ be a natural number and let
$f\colon\Sigma^*\to\D$ be a partial function.  We define the $k$-chained
Kleene-star $\kstar{k}{L}{f}\colon\Sigma^{*}\to\D$ and its reverse
$\krstar{k}{L}{f}\colon\Sigma^{*}\to\D$ as follows.  

The domain of both these functions is contained in $L^{*}$, the set of words having a
(unique) factorization over the code $L$.
Let $w\in L^{*}$ and consider its unique factorization $w=u_1u_2\cdots u_n$ with
$n\geq0$ and $u_i\in L$ for all $1\leq i\leq n$.  Then, 
$w\in\dom{\kstar{k}{L}{f}}=\dom{\krstar{k}{L}{f}}$ if $u_{i+1}\cdots u_{i+k}\in\dom{f}$ 
for all $0\leq i\leq n-k$ and in this case we set
\begin{align*}
  \kstar{k}{L}{f}(w) & =
  f(u_1\cdots u_k)\cdot f(u_2\cdots u_{k+1}) \cdots f(u_{n-k+1}\cdots u_n) \\
  \krstar{k}{L}{f}(w) & = f(u_{n-k+1}\cdots u_n) \cdots 
  f(u_2\cdots u_{k+1}) \cdot f(u_1\cdots u_k) \,.
\end{align*}
Notice that when $n<k$, the right-hand side is an empty product and we get
$\kstar{k}{L}{f}(w)=\varepsilon$ and $\krstar{k}{L}{f}(w)=\varepsilon$.
When $k=1$ and $L=\dom{f}$ is a code then we simply write 
$f^{\star}=\kstar{1}{\dom{f}}{f}$ and $\rstar{f}=\krstar{1}{\dom{f}}{f}$.
We have $\dom{f^{\star}}=\dom{\rstar{f}}=L^{*}$.

The k-chained Kleene star was also defined in
\cite{AlurFreilichRaghothaman14,DGK-lics18}; however as we will see below, we use it in a
restricted way for aperiodic functions.

\subsection{SD-regular transducer expressions (SDRTE)}

SD-regular transducer expressions (\SDRTEs) are obtained from classical regular transducer
expressions (\RTEs) \cite{AlurFreilichRaghothaman14,DGK-lics18} by restricting the
$k$-chained Kleene-star $\kstar{k}{L}{f}$ and its reverse $\krstar{k}{L}{f}$ to
aperiodic languages $L$ that are prefix codes of bounded synchronisation delay.
The if-then-else choice $\Ifthenelse{L}{f}g$ is also restricted to aperiodic languages
$L$.  Hence, the syntax of \SDRTEs is given by the grammar:
$$
C ::= \bot \mid v \mid \Ifthenelse{L}{C}{C} \mid C\odot C \mid C\cdot C 
\mid \kstar{k}{L}{C} \mid \krstar{k}{L}{C}
$$
where $v\in\Gamma^{*}$, and $L\subseteq\Sigma^*$ ranges over aperiodic languages
(or equivalently SD-regular expressions), which are also prefix codes with
bounded synchronisation delay for $\kstar{k}{L}{C}$ and $\krstar{k}{L}{C}$.

The semantics of \SDRTEs is defined inductively. $\sem{\bot}$ is the function 
which is nowhere defined, $\sem{v}$ is the constant function such as 
$\sem{v}(u)=v$ for all $u\in\Sigma^{*}$, and the semantics of the other 
combinators has been defined in the above sections.

As discussed in Section~\ref{sec:ifthenelse}, we will use binary sums
$C+C' = \Ifthenelse{\dom{C}}{C}{C'}$ and generalised sums $\sum_{i}C_i$. Also, 
we use the abbreviation $\SimpleFun{L}{C}=\Ifthenelse{L}{C}{\bot}$ and the 
reversed Cauchy product 
$C \rcdot C' = ((\SimpleFun{\dom{C}}{\varepsilon})\cdot C')
\odot(C\cdot(\SimpleFun{\dom{C'}}{\varepsilon}))$.

\begin{lemma}\label{lem-SDRTEdomAp}
If $C$ is an \SDRTE, then $\dom C$ is an aperiodic language.
\end{lemma}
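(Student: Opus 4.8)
The plan is to proceed by structural induction on the \SDRTE $C$, showing at each step that $\dom{C}$ is aperiodic. Recall that aperiodic languages are closed under the Boolean operations and under unambiguous concatenation and unambiguous (code) star; equivalently, they are exactly the languages recognized by a finite aperiodic monoid, so I will freely use these closure properties. The base cases are immediate: $\dom{\bot}=\emptyset$ and $\dom{v}=\Sigma^{*}$ are both aperiodic. For each composite operator I will read off $\dom{C}$ from the semantics given in Section~\ref{sec:comb-expr} and verify aperiodicity using the induction hypothesis together with the syntactic side-conditions imposed on the languages $L$ appearing in the expression.

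First I would handle the three ``easy'' combinators. For the if-then-else $\Ifthenelse{L}{C_1}{C_2}$ we have $\dom{C}=(\dom{C_1}\cap L)\cup(\dom{C_2}\setminus L)$; since $L$ is required to be aperiodic by the grammar of \SDRTEs, and $\dom{C_1},\dom{C_2}$ are aperiodic by induction, closure of aperiodic languages under intersection, set difference and union gives the result. For the Hadamard product, $\dom{C_1\odot C_2}=\dom{C_1}\cap\dom{C_2}$, again aperiodic by closure under intersection. For the Cauchy product $C_1\cdot C_2$, the domain is the set of words admitting a \emph{unique} factorization $w=uv$ with $u\in\dom{C_1}$ and $v\in\dom{C_2}$; this is exactly the \emph{unambiguous} concatenation $\dom{C_1}\cdot\dom{C_2}$, and aperiodic languages are closed under unambiguous product, so $\dom{C}$ is aperiodic.

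The main work lies in the $k$-chained Kleene-star $\kstar{k}{L}{C'}$ (and its reverse, which has the same domain). Here $L$ is, by the side-condition of the grammar, an aperiodic prefix code of bounded synchronisation delay, so $L^{*}$ is aperiodic by Schützenberger's SD=AP theorem. However $\dom{\kstar{k}{L}{C'}}$ is a \emph{proper} subset of $L^{*}$: it consists of those $w=u_1\cdots u_n\in L^{*}$ (unique factorisation over the code $L$) such that every length-$k$ window $u_{i+1}\cdots u_{i+k}$ lies in $\dom{C'}$. The obstacle is to express this ``sliding-window'' constraint as an aperiodic language. The plan is to fix a morphism $\psi\colon\Sigma^{*}\to N$ onto a finite aperiodic monoid that simultaneously recognises $L$ (hence tracks the code factorisation) and $\dom{C'}$ (aperiodic by induction), for instance the product of aperiodic monoids recognising these languages; aperiodic monoids are closed under finite products, so $N$ is aperiodic. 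One then observes that membership of $w$ in $\dom{\kstar{k}{L}{C'}}$ depends only on the sequence of factors read through $\psi$, and the window condition is an aperiodic (indeed star-free) constraint on that sequence, which pulls back along the unambiguous factorisation to an aperiodic language over $\Sigma$.

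The cleanest way to carry this out, and the step I expect to require the most care, is to phrase the window condition directly as a finite Boolean combination of unambiguous concatenations inside $L^{*}$. Concretely, using the prefix-code structure one can write
\[
\dom{\kstar{k}{L}{C'}}
= L^{*}\setminus\bigcup_{0\le r<k}
\Bigl(L^{r}\cdot\bigl(L^{k}\setminus\dom{C'}\bigr)\cdot L^{*}\Bigr),
\]
capturing that no window of $k$ consecutive factors (after some initial offset $r<k$) falls outside $\dom{C'}$, with short words of length $n<k$ (which impose no constraint) handled by the convention $L^{r}$ for $r<k$. Each $L^{j}$ is an unambiguous power of the code $L$ hence aperiodic; $\dom{C'}$ is aperiodic by induction, so $L^{k}\setminus\dom{C'}$ is aperiodic; and all the concatenations here are unambiguous because $L$ is a code. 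Thus the displayed language is built from aperiodic languages by unambiguous product, union and complement within the aperiodic language $L^{*}$, and is therefore aperiodic. I would verify the boundary behaviour for $n<k$ carefully, since that is where the empty-window convention in the semantics of $\kstar{k}{L}{C'}$ interacts with the offsets $r$, but no genuinely new idea is needed beyond the closure properties and the code hypothesis on $L$.
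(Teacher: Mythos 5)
Your overall strategy---structural induction plus closure properties of aperiodic languages, with an explicit set-difference formula for the $k$-star domain---is the same as the paper's, and the base cases, the if-then-else and the Hadamard product are handled correctly. However, your displayed formula for the $k$-star is wrong. The semantics of $\kstar{k}{L}{C'}$ requires \emph{every} window $u_{i+1}\cdots u_{i+k}$ with $0\le i\le n-k$ to lie in $\dom{C'}$, so a violation may occur at any offset $i\ge 0$, not only at offsets $r<k$. The union $\bigcup_{0\le r<k} L^{r}\cdot(L^{k}\setminus\dom{C'})\cdot L^{*}$ only removes words one of whose \emph{first} $k$ windows is bad: a word $u_1\cdots u_{2k}$ whose unique bad window starts at $i=k$ (i.e.\ $u_{k+1}\cdots u_{2k}\notin\dom{C'}$ while all windows with $i<k$ are good) survives your subtraction yet is not in the domain. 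The repair is exactly the paper's formula, $\dom{C}=L^{*}\setminus\bigl(L^{*}(L^{k}\setminus\dom{C'})L^{*}\bigr)$, with $L^{*}$ on the left so that every offset is covered; since $L$ is a code the correspondence between membership in $L^{*}(L^{k}\setminus\dom{C'})L^{*}$ and ``some window is bad'' is exact, and aperiodicity follows from the closure properties as you argue.

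There is a second, smaller gap in the Cauchy product case. You assert that $\dom{C_1\cdot C_2}$ ``is exactly the unambiguous concatenation $\dom{C_1}\cdot\dom{C_2}$'' and invoke closure under unambiguous product. That closure property applies when the product of two given languages happens to be unambiguous; here the product may be ambiguous, and the domain is the \emph{proper subset} of $\dom{C_1}\dom{C_2}$ consisting of words with a unique factorization. One must still prove that this subset is aperiodic, which does not follow from the quoted closure property. The paper does this by fixing a morphism $\varphi$ onto a finite aperiodic monoid recognizing both $\dom{C_1}=\varphi^{-1}(P_1)$ and $\dom{C_2}=\varphi^{-1}(P_2)$ and writing the set of words with at least two factorizations as the finite union of $\varphi^{-1}(r)(\varphi^{-1}(s)\setminus\{\varepsilon\})\varphi^{-1}(t)$ over $r,s,t$ with $r,rs\in P_1$ and $st,t\in P_2$, which is aperiodic; the domain is then $(\dom{C_1}\dom{C_2})$ minus this set. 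Your plan needs this (or an equivalent) argument to go through.
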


\begin{proof}
  We prove the statement by induction on the syntax of \SDRTEs.  We recall that
  aperiodic languages are closed under concatenation, union, intersection and complement.
  \begin{itemize}
    \item $\dom{\bot}=\emptyset$ and $\dom{v}=\Sigma^{*}$ are aperiodic languages.

    \item $C=\Ifthenelse{L}{C_1}{C_2}$. 
    By induction, $\dom{C_1}$ and $\dom{C_2}$ are aperiodic.
    We have $\dom{C}=(L\cap\dom{C_1})\cup(\dom{C_2}\setminus L)$, which is
    aperiodic thanks to the closure properties of aperiodic languages.
    
    \item $C=C_1\odot C_2$. By induction, $\dom{C_1}$ and $\dom{C_2}$ are aperiodic.
    We deduce that $\dom{C}=\dom{C_1}\cap\dom{C_2}$ is aperiodic.
    
    \item $C=C_1\cdot C_2$.  By induction, $L_1=\dom{C_1}$ and $L_2=\dom{C_2}$
    are aperiodic.  We have $\dom{C}\subseteq\dom{C_1}\cdot\dom{C_2}$.
    However, $C$ is undefined on words having more than one decomposition. 
    A word which admits at least two decompositions can be written $uvw$ with 
    $v\neq\varepsilon$,
    $u,uv\in L_1$ and $vw,w\in L_2$. Let $\varphi\colon\Sigma^{*}\to M$ be a 
    morphism to a finite aperiodic monoid recognising both $L_1$ and $L_2$. We 
    have $L_1=\varphi^{-1}(P_1)$ and $L_2=\varphi^{-1}(P_2)$ for some 
    $P_1,P_2\subseteq M$. The set $L_3$ of words having at least two 
    decompositions is precisely
    $$
    L_3=\bigcup_{\substack{r,s,t\mid r,rs\in P_1 \wedge st,t\in P_2}}
    \varphi^{-1}(r) ( \varphi^{-1}(s) \setminus \{\varepsilon\}) \varphi^{-1}(t)
    $$
    which is aperiodic. We deduce that $\dom{C}=(L_1\cdot L_2)\setminus L_3$ is 
    aperiodic.
    
    \item $C=\kstar{k}{L}{C'}$. By induction, $\dom{C'}$ is aperiodic and by 
    definition $L$ is an aperiodic SD prefix code. Hence $L^{*}$ is aperiodic.
    Notice that $\dom{C}\subseteq L^{*}$ but $C$ is undefined on words 
    $w=u_1\cdots u_n$ with $u_i\in L$ if there is a factor $u_{i+1}\cdots 
    u_{i+k}$ which is not in $\dom{C'}$. We deduce that
    $\dom{C}=L^{*}\setminus(L^*(L^{k}\setminus\dom{C'})L^*)$ which is 
    aperiodic thanks to the closure properties given above.
    
    \item Notice that $\dom{\krstar{k}{L}{C'}}=\dom{\kstar{k}{L}{C'}}$, which 
    is aperiodic as proved above.
    \qedhere
  \end{itemize}
\end{proof}

\begin{proposition}\label{prop:LQ-SDRTE}
  Given an \SDRTE $C$ and a letter $a\in\Sigma$, 
  \begin{enumerate}
    \item we can construct an \SDRTE $\LQ{a}{C}$ such that
    $\dom{\LQ{a}{C}}=a^{-1}\dom{C}$ and $\sem{\LQ{a}{C}}(w)=\sem{C}(aw)$ for all
    $w\in a^{-1}\dom{C}$,
  
    \item we can construct an \SDRTE $\RQ{a}{C}$ such that
    $\dom{\RQ{a}{C}}=\dom{C}a^{-1}$ and $\sem{\RQ{a}{C}}(w)=\sem{C}(wa)$ for all
    $w\in\dom{C}a^{-1}$.
  \end{enumerate}
\end{proposition}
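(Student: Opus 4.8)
The plan is to prove both statements by structural induction on the \SDRTE $C$, treating $\LQ{a}{C}$ in detail since $\RQ{a}{C}$ is entirely symmetric (it is obtained from the left-quotient construction by the mirror/reversal duality that swaps the two arguments of a Cauchy product, exchanges $\kstar{k}{L}{C}$ with $\krstar{k}{L}{C}$, and replaces the distinguished \emph{first} block by the \emph{last} one). Throughout I rely on two elementary facts. First, aperiodic (equivalently star-free) languages are closed under left quotient by a letter, so every language $a^{-1}L$ occurring below is aperiodic, and by Lemma~\ref{lem-SDRTEdomAp} so is $a^{-1}\dom{C}$; this lets me freely use $\SimpleFun{a^{-1}L}{\cdot}$ and the if-then-else combinator on these languages. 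Second, since $\dom{C_i}$ is regular, I can decide whether $\epsilon\in\dom{C_i}$ and, if so, compute the constant $c=\sem{C_i}(\epsilon)\in\Gamma^{*}$. The base cases are immediate: $\LQ{a}{\bot}=\bot$ and, since $\sem{v}(aw)=v$ for all $w$, $\LQ{a}{v}=v$.

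The two ``reading-twice'' cases are routine. For $C=\Ifthenelse{L}{C_1}{C_2}$, I set $\LQ{a}{C}=\Ifthenelse{a^{-1}L}{\LQ{a}{C_1}}{\LQ{a}{C_2}}$; using $aw\in L\iff w\in a^{-1}L$ one checks that the domain is $a^{-1}\big((L\cap\dom{C_1})\cup(\dom{C_2}\setminus L)\big)=a^{-1}\dom{C}$ and that the output matches. For the Hadamard product $C=C_1\odot C_2$ I set $\LQ{a}{C}=\LQ{a}{C_1}\odot\LQ{a}{C_2}$, whose domain is $a^{-1}\dom{C_1}\cap a^{-1}\dom{C_2}=a^{-1}\dom{C}$ and whose output on $w$ is $\sem{C_1}(aw)\sem{C_2}(aw)=\sem{C}(aw)$.

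The unambiguous Cauchy product $C=C_1\cdot C_2$ is the first genuinely delicate case, because the leading letter $a$ of $aw=uv$ may be consumed by $C_1$ (when $u\neq\varepsilon$) or, when $\varepsilon\in\dom{C_1}$, fall entirely inside $v$ (when $u=\varepsilon$). I define $D_1=\LQ{a}{C_1}\cdot C_2$ for the first situation and, if $\varepsilon\in\dom{C_1}$ with $c=\sem{C_1}(\varepsilon)$, I set $D_2=c\odot\LQ{a}{C_2}$ (so $\sem{D_2}(w)=c\cdot\sem{C_2}(aw)$), and $D_2=\bot$ otherwise. The naive sum $D_1+D_2$ can spuriously accept words $w$ for which $aw$ admits \emph{two} factorisations (one empty, one nonempty), hence $aw\notin\dom{C}$; this is exactly where the unambiguity of the original product must be exploited. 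I therefore take $\LQ{a}{C}=\SimpleFun{a^{-1}\dom{C}}{(D_1+D_2)}$. On $a^{-1}\dom{C}$ the word $aw$ has a \emph{unique} factorisation $uv$, so precisely one of $D_1,D_2$ is defined at $w$ (a defined $D_2$ would give the empty-$u$ factorisation, a defined $D_1$ the nonempty one), and the ordered sum picks the right one; the restriction to the aperiodic language $a^{-1}\dom{C}$ pins the domain exactly.

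The main obstacle is the $k$-chained star $C=\kstar{k}{L}{C'}$. Writing $aw=u_1\cdots u_n$ with $u_i\in L$ (unique since $L$ is a prefix code) and $u_1=au_1'$, only the \emph{first} factor $P=\sem{C'}(u_1\cdots u_k)$ sees the letter $a$, while the remaining product is exactly $Q=\sem{\kstar{k}{L}{C'}}(u_2\cdots u_n)$; the difficulty is that the window $u_1\cdots u_k$ overlaps the blocks $u_2,\dots,u_k$ that also drive $Q$, so a single Cauchy split cannot separate them. I overcome this with a Hadamard product of two \SDRTEs read on all of $w$. For $Q$, I peel off one block with $H_2=\SimpleFun{a^{-1}L}{\varepsilon}\cdot\kstar{k}{L}{C'}$, whose unique split $w=u_1'\cdot(u_2\cdots u_n)$ follows from $L$ being a prefix code; for $P$ I use the induction hypothesis on $C'$ via $H_1=\big(\SimpleFun{(a^{-1}L)\,L^{k-1}}{\LQ{a}{C'}}\big)\cdot\SimpleFun{L^{*}}{\varepsilon}$, so that $\sem{H_1}(w)=\sem{\LQ{a}{C'}}(u_1'u_2\cdots u_k)=\sem{C'}(u_1\cdots u_k)=P$ when $n\ge k$ and $\varepsilon$ otherwise. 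Then $\LQ{a}{C}=\SimpleFun{a^{-1}\dom{C}}{\big(H_1\odot H_2\big)}$ (augmented by the short case $\SimpleFun{a^{-1}(L^{<k})}{\varepsilon}$ for $n<k$) yields $P\cdot Q=\sem{C}(aw)$, with domain cut down to the aperiodic language $a^{-1}\dom{C}$. The reverse star $\krstar{k}{L}{C'}$ is handled identically except that the distinguished factor $P$ is now appended last, so the two components are combined as $H_2'\odot H_1$ using the reverse chained star for the tail. This $k$-chained (and reverse) star step, together with the verification that the induced windows match $a^{-1}\dom{C}$ exactly, is where essentially all the work lies; the corresponding right-quotient constructions are the mirror images of the above.
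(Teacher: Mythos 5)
Your proposal is correct and follows essentially the same route as the paper's proof: the same induction, the same two-summand split for the unambiguous Cauchy product (restricted to $a^{-1}\dom{C}$ to exploit unambiguity), and the same Hadamard combination of a peeled first window $\big(\SimpleFun{(a^{-1}L)L^{k-1}}{\LQ{a}{C'}}\big)\cdot(\SimpleFun{L^{*}}{\varepsilon})$ with $(\SimpleFun{a^{-1}L}{\varepsilon})\cdot\kstar{k}{L}{C'}$ for the $k$-chained star. The only cosmetic differences are using $c\odot\LQ{a}{C_2}$ instead of $(\SimpleFun{\varepsilon}{\sem{C_1}(\varepsilon)})\cdot\LQ{a}{C_2}$ in the Cauchy case, which computes the same function.
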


\begin{proof}
We recall that aperiodic languages are closed under left and right quotients.
  The proof is by structural induction on the given \SDRTE $C$ over alphabet $\Sigma$.  We
  only construct below the \SDRTEs for the left quotient.  Formulas for the right quotient
  can be obtained similarly.  A point to note is that, unlike the left quotient, the right
  quotient of a language might break its prefix code property, which could be a problem if
  applied to a parsing language $L$ used for $k$-star or its reverse.  
  However, the quotient by a letter only modifies the first or last copy of $L$, which can
  be decoupled so that the remaining iterations are still performed with the same parsing
  language $L$.
  
  \begin{description}
    \item[Basic cases.]  We define $\LQ{a}{\bot}=\bot$ and $\LQ{a}{v}=v$ for
    $v\in\Gamma^{*}$.
  
    \item[If-then-else.]  Let $C=\Ifthenelse{L}{C_1}{C_2}$.  We define
    $\LQ{a}{C}=\Ifthenelse{a^{-1}L}{\LQ{a}{C_1}}{\LQ{a}{C_2}}$.
    
    Recall that $\dom{C}=(\dom{C_1}\cap L)\cup(\dom{C_2}\setminus L)$. We 
    deduce that
    \begin{align*}
      a^{-1}\dom{C} & = 
      ((a^{-1}\dom{C_1})\cap(a^{-1}L))\cup((a^{-1}\dom{C_2})\setminus(a^{-1}L))\\
       & = \dom{\Ifthenelse{a^{-1}L}{\LQ{a}{C_1}}{\LQ{a}{C_2}}}
    \end{align*}
    Moreover, for $w\in a^{-1}\dom{C}$, we have
    \begin{align*}
      \sem{C}(aw) & = 
      \begin{cases}
        \sem{C_1}(aw) & \text{if } aw\in L  \\
        \sem{C_2}(aw) & \text{otherwise.}
      \end{cases}
      =
      \begin{cases}
        \sem{\LQ{a}{C_1}}(w) & \text{if } w\in a^{-1}L  \\
        \sem{\LQ{a}{C_2}}(w) & \text{otherwise.}
      \end{cases}
      \\
      &= \sem{\Ifthenelse{a^{-1}L}{\LQ{a}{C_1}}{\LQ{a}{C_2}}}(w)
    \end{align*}
    
    \item[Hadamard product.]  Let $C=C_1\odot C_2$.  We define
    $\LQ{a}{C}=\LQ{a}{C_1}\odot\LQ{a}{C_2}$.
    
    Recall that $\dom{C}=\dom{C_1}\cap\dom{C_2}$.  We deduce that
    \begin{align*}
      a^{-1}\dom{C} & = (a^{-1}\dom{C_1})\cap(a^{-1}\dom{C_2})
      = \dom{\LQ{a}{C_1}\odot\LQ{a}{C_2}}
    \end{align*}
    Moreover, for $w\in a^{-1}\dom{C}$, we have
    \begin{align*}
      \sem{C}(aw) & = \sem{C_1}(aw)\sem{C_2}(aw)
      = \sem{\LQ{a}{C_1}}(w) \sem{\LQ{a}{C_2}}(w)
      = \sem{\LQ{a}{C_1}\odot\LQ{a}{C_2}}(w)
    \end{align*}
    
    \item[Cauchy product.] Let $C=C_1\cdot C_2$. The \SDRTE $\LQ{a}{C}$ is the 
    unambiguous sum of two expressions depending on whether the letter $a$ is 
    removed from $C_1$ or from $C_2$. Hence, we let
    $C'=(\LQ{a}{C_1})\cdot{C_2}$ and
    $C''=(\SimpleFun{\varepsilon}{\sem{C_1}(\varepsilon)})\cdot(\LQ{a}{C_2})$. 
    Notice that $\dom{C''}=\emptyset$ when $\varepsilon\notin\dom{C_1}$ 
    (i.e., $\sem{C_1}(\varepsilon)=\bot$).
    Now, we define $\LQ{a}{C}=\SimpleFun{(a^{-1}\dom{C})}{(C'+C'')}$.
    
    Let $w\in a^{-1}\dom{C}$. Then $aw$ admits a unique factorization
    $aw=uv$ with $u\in\dom{C_1}$ and $v\in\dom{C_2}$. There are two exclusive cases.
    
    If $u\neq\varepsilon$ then $u=au'$ with $u'\in a^{-1}\dom{C_1}$. The word 
    $w$ admits a unique factorization according to $\dom{\LQ{a}{C_1}}\dom{C_2}$ 
    which is $w=u'v$. Hence, $w\in\dom{C'}$ and
    $$
    \sem{C}(aw) = \sem{C_1}(u)\sem{C_2}(v) 
    = \sem{\LQ{a}{C_1}}(u')\sem{C_2}(v)
    = \sem{C'}(w) \,.
    $$
    
    If $u=\varepsilon$ then $v=av'$ and $v'\in a^{-1}\dom{C_2}$.  The word $w=v'$
    admits a unique factorization according to
    $\{\varepsilon\}\cdot\dom{\LQ{a}{C_2}}$ which is $w=\varepsilon\cdot w$.
    Hence, $w\in\dom{C''}$ and
    $$
    \sem{C}(aw) = \sem{C_1}(\varepsilon)\sem{C_2}(v) 
    = \sem{C_1}(\varepsilon)\sem{\LQ{a}{C_2}}(w)
    = \sem{C''}(w) \,.
    $$
    
    We deduce that $a^{-1}\dom{C}\subseteq\dom{C'}\cup\dom{C''}=\dom{C'+C''}$ and 
    $\dom{a^{-1}C}=a^{-1}\dom{C}$ as desired.
    
    Finally, assume that $w\in\dom{C'}\cap\dom{C''}$. Then, $w$ admits two 
    factorizations $w=u'v=\varepsilon v'$ with $u'\in\dom{\LQ{a}{C_1}}$, 
    $v\in\dom{C_2}$, $\varepsilon\in\dom{C_1}$ and $v'\in\dom{\LQ{a}{C_2}}$. We 
    deduce that $aw$ admits two distinct factorizations $aw=(au')v=\varepsilon(av')$ 
    with $au',\varepsilon\in\dom{C_1}$ and $v,av'\in\dom{C_2}$. This is a 
    contradiction with $aw\in\dom{C}$.
    
    We deduce that in both cases above, we have
    $$
    \sem{C}(aw) = \sem{\SimpleFun{(a^{-1}\dom{C})}{(C'+C'')}}(w) \,.
    $$
    
    \item[$k$-star.] Let $L\subseteq\Sigma^{*}$ be an aperiodic prefix code with bounded 
    synchronisation delay and let $C$ an \SDRTE. 
    Notice that, since $L$ is a code, $\varepsilon\notin L$. Also, 
    $a^{-1}\dom{\kstar{k}{L}{C}}\subseteq a^{-1}L^{*}=(a^{-1}L)L^{*}$. Let 
    $w\in a^{-1}L^{*}$. It admits a unique factorization $w=u'_1u_2\cdots 
    u_n$ with $u_1=au'_1\in L$ and $u_2,\ldots,u_n\in L$. The unique 
    factorization of $aw$ according to the code $L$ is $aw=u_1u_2\cdots u_n$.
    
    Now, by defintion of $k$-star, when $n<k$ we have 
    $\sem{\kstar{k}{L}{C}}(aw)=\varepsilon$. Hence, we let 
    $C'=\SimpleFun{\big( (a^{-1}L)\cdot L^{<k-1} \big)}{\varepsilon}$ so that 
    in the case $n<k$ we get
    $$
    \sem{\kstar{k}{L}{C}}(aw)=\varepsilon=\sem{C'}(w) \,.
    $$
    Next we assume that $n\geq k$. We define two \SDRTEs:
    \begin{align*}
      C'' & = \big( \SimpleFun{\big( (a^{-1}L)\cdot L^{k-1} \big)}{\LQ{a}{C}} \big) 
      \cdot (\SimpleFun{L^{*}}{\varepsilon}) \\
      C''' & = (\SimpleFun{a^{-1}L}{\varepsilon}) \cdot \kstar{k}{L}{C}
    \end{align*}
    Notice that $L^{*}$ is aperiodic since $L$ is an aperiodic prefix code with bounded 
    synchronisation delay. Hence, $C''$ is indeed an \SDRTE.
    We get
    \begin{align*}
      \sem{C''}(w) & = \sem{\LQ{a}{C}}(u'_1u_2\cdots u_{k})
      = \sem{C}(u_1u_2\cdots u_{k}) \\
      \sem{C'''}(w) & = \sem{C}(u_2\cdots u_{k+1}) \sem{C}(u_3\cdots u_{k+2})
      \cdots \sem{C}(u_{n-k+1}\cdots u_n) \\
      \sem{C'' \odot C'''}(w) & = \sem{\kstar{k}{L}{C}}(aw) \,.
    \end{align*}
    Therefore, we define the \SDRTE
    $$
    \LQ{a}{(\kstar{k}{L}{C})} = \SimpleFun{(a^{-1}L^{*})}{\big(C'+(C''\odot C''')\big)} 
    \,.
    $$
    Notice that $\dom{C'}=a^{-1}L^{<k}$ and $\dom{C''\odot C'''}\subseteq a^{-1}L^{\geq
    k}$ are disjoint.
    
    \item[Reverse $k$-star.]  This case is similar.  Let $L\subseteq\Sigma^{*}$
    be an aperiodic prefix code with bounded synchronisation delay and let $C$ an \SDRTE.
    We define
    $$
    \LQ{a}{(\krstar{k}{L}{C})} = \SimpleFun{(a^{-1}L^{*})}{\big(C'+(C''''\odot C'')\big)}
    $$
    where $C',C''$ are as above and $C''''=(\SimpleFun{a^{-1}L}{\varepsilon}) 
    \cdot \krstar{k}{L}{C}$.
    \qedhere
  \end{description}
\end{proof}

\begin{lemma}\label{lem-projection}
	Given an \SDRTE $C$ over an alphabet $\Sigma$ and a sub-alphabet $\Sigma'\subseteq\Sigma$, 
  we can construct an \SDRTE $C'$ \emph{over alphabet $\Sigma'$} such that
  $\dom{C'}\subseteq \Sigma'^{*}$ and for any word
  $w$ in $\Sigma'^*$, $\sem{C}(w)=\sem{C'}(w)$.
\end{lemma}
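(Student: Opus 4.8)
The plan is to prove, by structural induction on $C$, the slightly stronger invariant that the constructed \SDRTE $C'$ over $\Sigma'$ satisfies $\dom{C'}=\dom{C}\cap\Sigma'^{*}$ together with $\sem{C'}(w)=\sem{C}(w)$ for every $w\in\dom{C'}$. This is stronger than the statement: it yields $\dom{C'}\subseteq\Sigma'^{*}$ directly, and since for $w\in\Sigma'^{*}$ we then have $w\in\dom{C'}\iff w\in\dom{C}$ with matching outputs, the two semantics agree on all of $\Sigma'^{*}$. For the base cases I set $\bot'=\bot$ and $v'=\SimpleFun{\Sigma'^{*}}{v}$; since $\Sigma'^{*}$ is aperiodic, $v'$ is a legal \SDRTE over $\Sigma'$ with domain $\Sigma'^{*}=\Sigma^{*}\cap\Sigma'^{*}$.

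For the if-then-else, Hadamard and Cauchy products I recurse on the subexpressions and, in the first case, restrict the guard: $(\Ifthenelse{L}{C_1}{C_2})'=\Ifthenelse{(L\cap\Sigma'^{*})}{C_1'}{C_2'}$, $(C_1\odot C_2)'=C_1'\odot C_2'$, and $(C_1\cdot C_2)'=C_1'\cdot C_2'$. Aperiodicity of the new guard $L\cap\Sigma'^{*}$ follows from closure of aperiodic languages under intersection, and the domain identities are routine because every factor of a word in $\Sigma'^{*}$ again lies in $\Sigma'^{*}$. In particular, for the Cauchy product any factorisation $w=uv$ of a word $w\in\Sigma'^{*}$ has $u,v\in\Sigma'^{*}$, so the set of admissible factorisations of $w$ — and hence unambiguity — is unchanged when passing from $\dom{C_i}$ to $\dom{C_i'}=\dom{C_i}\cap\Sigma'^{*}$.

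The only delicate case, and the one I expect to be the main obstacle, is the $k$-chained Kleene star $\kstar{k}{L}{C_1}$ (and symmetrically its reverse $\krstar{k}{L}{C_1}$). I set $(\kstar{k}{L}{C_1})'=\kstar{k}{L'}{C_1'}$ with $L'=L\cap\Sigma'^{*}$, and the crux is to verify that $L'$ is still an aperiodic prefix code with bounded synchronisation delay over $\Sigma'$, so that this remains a legal \SDRTE. Aperiodicity of $L'$ is immediate from closure under intersection, and $L'\subseteq L$ is a prefix code since any subset of a prefix code is one. For the synchronisation delay the key observation is the identity $L'^{*}=L^{*}\cap\Sigma'^{*}$: the inclusion from left to right is clear, and conversely a word of $L^{*}\cap\Sigma'^{*}$ factorises uniquely over the code $L$, each factor being a factor of a word in $\Sigma'^{*}$ and hence lying in $L\cap\Sigma'^{*}=L'$. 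Using this, if $L$ has delay $d$ and $u,v,w\in\Sigma'^{*}$ satisfy $uvw\in L'^{*}$ and $v\in L'^{d}$, then $uvw\in L^{*}$ and $v\in L^{d}$, so $uv\in L^{*}$ by the delay of $L$; since $uv\in\Sigma'^{*}$ as well, the identity gives $uv\in L'^{*}$. Thus $L'$ has delay $d$.

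It remains to match domains and outputs. For $w\in\Sigma'^{*}\cap L^{*}$ the factorisation of $w$ over $L'$ coincides with its unique factorisation $w=u_1\cdots u_n$ over $L$ (all factors lie in $\Sigma'^{*}$, hence in $L'$). Each window $u_{i+1}\cdots u_{i+k}\in\Sigma'^{*}$ belongs to $\dom{C_1'}$ iff it belongs to $\dom{C_1}$ by the induction hypothesis, and when defined the outputs coincide; hence $\dom{\kstar{k}{L'}{C_1'}}=\dom{\kstar{k}{L}{C_1}}\cap\Sigma'^{*}$ and the outputs agree on this domain. The reverse $k$-star is handled identically, replacing $\kstar{k}{L'}{C_1'}$ by $\krstar{k}{L'}{C_1'}$, which completes the induction.
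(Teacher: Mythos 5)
Your proposal is correct and follows essentially the same route as the paper: a structural induction in which every guard and parsing language $L$ is replaced by $L\cap\Sigma'^{*}$, with the $k$-star being the only delicate case. The paper is slightly more explicit on one syntactic point you gloss over — it constructs an actual SD-regular expression \emph{over the alphabet $\Sigma'$} for $L\cap\Sigma'^{*}$ by inductively replacing letters of $\Sigma\setminus\Sigma'$ with $\emptyset$ — whereas you are more detailed on the semantic side (the invariant $\dom{C'}=\dom{C}\cap\Sigma'^{*}$ and the preservation of the synchronisation delay via $L'^{*}=L^{*}\cap\Sigma'^{*}$); both are sound.
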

\begin{proof}
	The proof itself is rather straightforward, and simply amounts to get rid of letters
	that do not appear in $\Sigma'$.  We first construct $C'$ by structural induction from
	$C$, and then prove that it is indeed an \SDRTE. Thus $C'$ is defined as follows:
  \begin{itemize}
     \item if $C=\bot$ then $C'=\bot$, 
     \item if $C=v$ then $C'=v$, with $\dom{v}=\Sigma'^{*}$ here since $C'$ is over $\Sigma'$, 
    \item if $C=\Ifthenelse{L}{C_1}{C_2}$ then $C'=\Ifthenelse{(L\cap\Sigma'^{*})}{C'_1}{C'_2}$,
    \item if $C=C_1\odot C_2$ then $C'=C'_1\odot C'_2$, 
    \item if $C=C_1\cdot C_2$ then $C'=C'_1\cdot C'_2$,
    \item  if $C=\kstar{k}{L}{C_1}$ then $C'=\kstar{k}{L\cap\Sigma'^{*}}{C'_1}$,
    \item if $C=\krstar{k}{L}{C_1}$ then $C'=\krstar{k}{L\cap\Sigma'^{*}}{C'_1}$.
  \end{itemize}
  To prove that $C'$ is SD-regular,
	we construct, given an SD-expression $E$ for $L$ over $\Sigma$,
	an SD-expression $E'$ over $\Sigma'$ for $L\cap \Sigma'^*$.
	Again, the proof is an easy structural induction:
\begin{itemize}
	\item   if $E=\emptyset$ then $E'=\emptyset$, 
  \item if $E=a\in\Sigma'$ then $E'=a$, 
 \item if $E=a\in\Sigma\setminus\Sigma'$ then $E'=\emptyset$,
  \item if $E=E_1+E_2$ then $E'=E'_1+E'_2$, 
  \item if $E=E_1\cdot E_2$ then $E'=E'_1\cdot E'_2$, 
  \item if $E=E_1^{*}$ then $E'=E_1'^{*}$.
  \end{itemize}
  We conclude by stating that being a prefix code with bounded synchronisation delay is a
  property preserved by subsets, hence $E'$ is an SD-expression.
\end{proof}

%
\begin{gpicture}[name=gpic:InternalRun,ignore]
  \gasset{Nframe=n}
  \node(ui)(15,2){$w$}
  \node(ui+1)(45,2){$w'$}
  \gasset{AHnb=0}
  \drawline(0,0)(0,-31)
  \drawline(30,0)(30,-31)
  \drawline(60,0)(60,-31)
  \gasset{AHnb=1,arcradius=0.8}
  \drawline(30,-5)(50,-5)(50,-7)(35,-7)(35,-9)(45,-9)(45,-11)(30,-11)
  \node(x0)(53,-6){$x_0$}
  \drawline(30,-11)(20,-11)(20,-13)(25,-13)(25,-15)(18,-15)(18,-17)(30,-17)
  \node(x1)(17,-12){$x_1$}
  \drawline(30,-17)(40,-17)(40,-19)(30,-19)
  \node(x2)(43,-18){$x_2$}
  \drawline(30,-19)(15,-19)(15,-21)(22,-21)(22,-23)(10,-23)(10,-25)(30,-25)
  \node(x3)(12,-20){$x_3$}
  \gasset{Nw=1.6,Nh=1.6,Nfill=y,ExtNL=y,NLdist=.7}
  \node[NLangle=45](p1)(30,-5){$p_1$}
  \node[NLangle=135](q1)(30,-11){$q_1$}
  \node[NLangle=45](q2)(30,-17){$q_2$}
  \node[NLangle=-135](q2)(30,-19){$q_3$}
  \node[NLangle=0](q3)(30,-25){$p_2$}
\end{gpicture}
\begin{gpicture}[name=gpic:LR-concat,ignore]
  \gasset{Nframe=n}
  \node(ui)(15,3){$u$}
  \node(ui+1)(45,3){$v$}
  \gasset{AHnb=0}
  \drawline(0,2)(0,-31)
  \drawline(30,2)(30,-31)
  \drawline(60,2)(60,-31)
  \gasset{AHnb=1,arcradius=0.8}
  \drawline(0,-1)(20,-1)(20,-3)(5,-3)(5,-5)(30,-5)
  \node(x0)(23,-2){$\rho_0$}
  \drawline[linecolor=red](30,-5)(50,-5)(50,-7)(35,-7)(35,-9)(45,-9)(45,-11)(30,-11)
  \node(x0)(53,-6){\color{red}$\rho_1$}
  \drawline[linecolor=red](30,-11)(20,-11)(20,-13)(25,-13)(25,-15)(18,-15)(18,-17)(30,-17)
  \node(x1)(17,-12){\color{red}$\rho_2$}
  \drawline[linecolor=red](30,-17)(40,-17)(40,-19)(30,-19)
  \node(x2)(43,-18){\color{red}$\rho_3$}
  \drawline[linecolor=red](30,-19)(15,-19)(15,-21)(22,-21)(22,-23)(10,-23)(10,-25)(30,-25)
  \node(x3)(12,-20){\color{red}$\rho_4$}
  \drawline(30,-25)(50,-25)(50,-27)(40,-27)(40,-29)(60,-29)
  \node(x2)(53,-26){$\rho_5$}
  \gasset{Nw=1.6,Nh=1.6,Nfill=y,ExtNL=y,NLdist=.7}
  \node[NLangle=45](p)(0,-1){$p$}
  \node[fillcolor=red,NLangle=45](p1)(30,-5){\color{red}$p_1$}
  \node[fillcolor=red,NLangle=135](p2)(30,-11){\color{red}$p_2$}
  \node[fillcolor=red,NLangle=45](p3)(30,-17){\color{red}$p_3$}
  \node[fillcolor=red,NLangle=-135](p4)(30,-19){\color{red}$p_4$}
  \node[fillcolor=red,NLangle=45](p5)(30,-25){\color{red}$p_5$}
  \node[NLangle=0](q)(60,-29){$q$}
\end{gpicture}
\begin{gpicture}[name=gpic:LL-concat,ignore]
  \gasset{Nframe=n}
  \node(ui)(15,3){$u$}
  \node(ui+1)(45,3){$v$}
  \gasset{AHnb=0}
  \drawline(0,2)(0,-31)
  \drawline(30,2)(30,-31)
  \drawline(60,2)(60,-31)
  \gasset{AHnb=1,arcradius=0.8}
  \drawline(0,-1)(20,-1)(20,-3)(5,-3)(5,-5)(30,-5)
  \node(x0)(23,-2){$\rho_0$}
  \drawline[linecolor=red](30,-5)(50,-5)(50,-7)(35,-7)(35,-9)(45,-9)(45,-11)(30,-11)
  \node(x0)(53,-6){\color{red}$\rho_1$}
  \drawline[linecolor=red](30,-11)(20,-11)(20,-13)(25,-13)(25,-15)(18,-15)(18,-17)(30,-17)
  \node(x1)(17,-12){\color{red}$\rho_2$}
  \drawline[linecolor=red](30,-17)(40,-17)(40,-19)(30,-19)
  \node(x2)(43,-18){\color{red}$\rho_3$}
  \drawline[linecolor=black](30,-19)(15,-19)(15,-21)(22,-21)(22,-23)(10,-23)(10,-25)(15,-25)
  (15,-27)(0,-27)
  \node(x3)(12,-20){\color{black}$\rho_4$}
  \gasset{Nw=1.6,Nh=1.6,Nfill=y,ExtNL=y,NLdist=.7}
  \node[NLangle=45](p)(0,-1){$p$}
  \node[fillcolor=red,NLangle=45](p1)(30,-5){\color{red}$p_1$}
  \node[fillcolor=red,NLangle=135](p2)(30,-11){\color{red}$p_2$}
  \node[fillcolor=red,NLangle=45](p3)(30,-17){\color{red}$p_3$}
  \node[fillcolor=red,NLangle=-135](p4)(30,-19){\color{red}$p_4$}
  \node[NLangle=180](q)(0,-27){$q$}
\end{gpicture}
\begin{gpicture}[name=gpic:LR-run,ignore]
  \gasset{Nframe=n}
  \node(u1)(0,2){$u_1$}
  \node(u2)(10,2){$u_2$}
  \node(ui)(30,2){$u_i$}
  \node(ui+1)(40,2){$u_{i+1}$}
  \node(ui+k)(70,2){$u_{i+k}$}
  \node(ui+k+1)(82,2){$u_{i+k+1}$}
  \node(un)(110,2){$u_n$}
  \gasset{AHnb=0}
  \drawline(-5,0)(-5,-35)
  \drawline(35,0)(35,-35)
  \drawline(75,0)(75,-35)
  \drawline(115,0)(115,-35)
  \gasset{AHnb=1,arcradius=0.8}
  \drawline(-5,-2)(20,-2)(20,-4)(10,-4)(10,-6)(40,-6)(40,-8)
  (30,-8)(30,-10)(34,-10)
  \node(rho1)(15,-8){$\rho_1$}
  \drawline[linecolor=red](35,-10)(60,-10)(60,-12)(50,-12)(50,-14)(74,-14)
  \node(rho2)(55,-8){\color{red}$\rho_2$}
  \drawline[linecolor=blue](75,-14)(95,-14)(95,-16)(65,-16)
  (65,-18)(85,-18)(85,-20)(40,-20)(40,-22)(50,-22)(50,-24)(36,-24)
  \node(rho3)(60,-22){\color{blue}$\rho_3$}
  \drawline[linecolor=black](35,-24)(25,-24)(25,-26)(65,-26)(65,-28)(55,-28)
  (55,-30)(85,-30)(85,-32)(65,-32)(65,-34)(114,-34)
  \node(rho4)(50,-28){\color{black}$\rho_4$}
  \gasset{Nw=1.6,Nh=1.6,Nfill=y,ExtNL=y,NLdist=.7}
  \node[NLangle=180](p)(-5,-2){$p$}
  \node[NLangle=-45](q1)(35,-10){$q_1$}
  \node[NLangle=45](q2)(75,-14){$q_2$}
  \node[NLangle=135](q3)(35,-24){$q_3$}
  \node[NLangle=0](q)(115,-34){$q$}
\end{gpicture}
\begin{gpicture}[name=gpic:RL-run,ignore]
  \gasset{Nframe=n}
  \node(u1)(0,2){$u_1$}
  \node(u2)(10,2){$u_2$}
  \node(ui)(30,2){$u_i$}
  \node(ui+1)(40,2){$u_{i+1}$}
  \node(ui+k)(70,2){$u_{i+k}$}
  \node(ui+k+1)(82,2){$u_{i+k+1}$}
  \node(un)(110,2){$u_n$}
  \gasset{AHnb=0}
  \drawline(-5,0)(-5,-35)
  \drawline(35,0)(35,-35)
  \drawline(75,0)(75,-35)
  \drawline(115,0)(115,-35)
  \gasset{AHnb=1,arcradius=0.8}
  \drawline(115,-2)(90,-2)(90,-4)(100,-4)(100,-6)
  (50,-6)(50,-8)(60,-8)(60,-10)(20,-10)(20,-12)(40,-12)(40,-14)
  (30,-14)(30,-16)(34,-16)
  \node(rho1)(95,-8){$\rho_1$}
  \drawline[linecolor=red](35,-16)(65,-16)(65,-18)(55,-18)(55,-20)(74,-20)
  \node(rho2)(50,-18){\color{red}$\rho_2$}
  \drawline[linecolor=blue](75,-20)(95,-20)(95,-22)(65,-22)(65,-24)
  (85,-24)(85,-26)(40,-26)(40,-28)(50,-28)(50,-30)(36,-30)
  \node(rho3)(60,-28){\color{blue}$\rho_3$}
  \drawline[linecolor=black](35,-30)(15,-30)(15,-32)(45,-32)(45,-34)(-4,-34)
  \node(rho4)(10,-32){\color{black}$\rho_4$}
  \gasset{Nw=1.6,Nh=1.6,Nfill=y,ExtNL=y,NLdist=.7}
  \node[NLangle=0](p)(115,-2){$p$}
  \node[NLangle=-45](q1)(35,-16){$q_1$}
  \node[NLangle=45](q2)(75,-20){$q_2$}
  \node[NLangle=135](q3)(35,-30){$q_3$}
  \node[NLangle=180](q)(-5,-34){$q$}
\end{gpicture}
\begin{gpicture}[name=gpic:RR-run,ignore]
  \gasset{Nframe=n}
  \node(u1)(0,2){$u_1$}
  \node(u2)(10,2){$u_2$}
  \node(ui)(30,2){$u_i$}
  \node(ui+1)(40,2){$u_{i+1}$}
  \node(ui+k)(70,2){$u_{i+k}$}
  \node(ui+k+1)(82,2){$u_{i+k+1}$}
  \node(un)(110,2){$u_n$}
  \gasset{AHnb=0}
  \drawline(-5,0)(-5,-35)
  \drawline(25,0)(25,-35)
  \drawline(35,0)(35,-35)
  \drawline(75,0)(75,-35)
  \drawline(115,0)(115,-35)
  \gasset{AHnb=1,arcradius=0.8}
  \drawline(115,-2)(90,-2)(90,-4)(100,-4)(100,-6)
  (50,-6)(50,-8)(60,-8)(60,-10)(36,-10)
  \node(rho1)(90,-8){$\rho_1$}
  \drawline[linecolor=red](35,-10)(30,-10)(30,-12)(45,-12)(45,-14)
  (32,-14)(32,-16)(65,-16)(65,-18)(27,-18)(27,-20)(34,-20)
  \node(rho2)(50,-13){\color{red}$\rho_2$}
  \drawline[linecolor=blue](35,-20)(50,-20)(50,-22)(40,-22)(40,-24)
  (80,-24)(80,-26)(70,-26)(70,-28)(105,-28)(105,-30)(95,-30)(95,-32)(114,-32)
  \node(rho3)(90,-26){\color{blue}$\rho_3$}
  \gasset{Nw=1.6,Nh=1.6,Nfill=y,ExtNL=y,NLdist=.7}
  \node[NLangle=0](p)(115,-2){$p$}
  \node[NLangle=135](q1)(35,-10){$q_1$}
  \node[NLangle=-45](q2)(35,-20){$q_2$}
  \node[NLangle=0](q)(115,-32){$q$}
\end{gpicture}
\begin{gpicture}[name=gpic:LR-run2,ignore]
  \gasset{Nframe=n}
  \node(u1)(0,2){$u_1$}
  \node(u2)(10,2){$u_2$}
  \node(u3)(20,2){$u_3$}
  \node(u4)(30,2){$u_4$}
  \node(u5)(40,2){$u_5$}
  \node(u6)(50,2){$u_6$}
  \node(u7)(60,2){$u_7$}
  \node(u8)(70,2){$u_8$}
  \node(u9)(80,2){$u_9$}
  \node(u10)(90,2){$u_{10}$}
  \node(u11)(100,2){$u_{11}$}
  \node(u12)(110,2){$u_{12}$}
  \gasset{AHnb=0}
  \drawline(-5,0)(-5,-35)
  \drawline( 5,-1)( 5,-35)
  \drawline(15,0)(15,-35)
  \drawline(25,0)(25,-35)
  \drawline(35,0)(35,-35)
  \drawline(45,0)(45,-35)
  \drawline(55,0)(55,-35)
  \drawline(65,0)(65,-35)
  \drawline(75,0)(75,-35)
  \drawline(85,0)(85,-35)
  \drawline(95,0)(95,-35)
  \drawline(105,0)(105,-35)
  \drawline(115,0)(115,-35)
  \gasset{AHnb=1,arcradius=0.8}
  \drawline(-5,-2)(14,-2)
  \node(rho0)(3,-0){\color{black}$\rho_0$}
  \drawline[linecolor=red](15,-2)(20,-2)(20,-4)(0,-4)(0,-6)(24,-6)
  \node(rho1)(22,-1){\color{red}$\rho_1$}
  \drawline[linecolor=blue](25,-6)(34,-6)
  \node(rho2)(31,-4){\color{blue}$\rho_2$}
  \drawline(35,-6)(40,-6)(40,-8)(20,-8)(20,-10)(40,-10)(40,-12)(30,-12)(30,-14)(44,-14)
  \node(rho3)(42,-5){\color{black}$\rho_3$}
  \drawline[linecolor=red](45,-14)(50,-14)(50,-16)(40,-16)(40,-18)(54,-18)
  \node(rho4)(52,-13){\color{red}$\rho_4$}
  \drawline[linecolor=blue](55,-18)(64,-18)
  \node(rho5)(61,-16){\color{blue}$\rho_5$}
  \drawline[linecolor=black](65,-18)(74,-18)
  \node(rho6)(71,-16){\color{black}$\rho_6$}
  \drawline[linecolor=red](75,-18)(80,-18)(80,-20)(60,-20)(60,-22)(70,-22)(70,-24)
  (60,-24)(60,-26)(84,-26)
  \node(rho7)(82,-17){\color{red}$\rho_7$}
  \drawline[linecolor=blue](85,-26)(90,-26)(90,-28)(70,-28)(70,-30)(94,-30)
  \node(rho8)(92,-25){\color{blue}$\rho_8$}
  \drawline[linecolor=black](95,-30)(104,-30)
  \node(rho9)(101,-28){\color{black}$\rho_9$}
  \drawline[linecolor=red](105,-30)(110,-30)(110,-32)(100,-32)(100,-34)(114,-34)
  \node(rho10)(112,-29){\color{red}$\rho_{10}$}
  \gasset{Nw=1.6,Nh=1.6,Nfill=y,ExtNL=y,NLdist=.5}
  \node[NLangle=180](p)(-5,-2){$p$}
  \node[NLangle=45](q)(15,-2){$q$}
  \node[NLangle=-45](q1)(15,-6){\color{red}$q_1$}
  \node[NLangle=45](q)(25,-6){$q$}
  \node[NLangle=45](q)(35,-6){$q$}
  \node[NLangle=-45](q3)(35,-14){$q_3$}
  \node[NLangle=45](q)(45,-14){$q$}
  \node[NLangle=-45](q4)(45,-18){\color{red}$q_4$}
  \node[NLangle=45](q)(55,-18){$q$}
  \node[NLangle=45](q)(65,-18){$q$}
  \node[NLangle=45](q)(75,-18){$q$}
  \node[NLangle=45](q7)(75,-26){\color{red}$q_7$}
  \node[NLangle=45](q)(85,-26){$q$}
  \node[NLangle=-45](q8)(85,-30){\color{blue}$q_8$}
  \node[NLangle=45](q)(95,-30){$q$}
  \node[NLangle=45](q)(105,-30){$q$}
  \node[NLangle=-45](q10)(105,-34){\color{red}$q_{10}$}
  \node[NLangle=0](q)(115,-34){$q$}
  \node[NLangle=-90,Nfill=n,NLdist=1](q2)(35,-34){$\textcolor{red}{q_2}=\textcolor{blue}{q_5}=q_6=q_9=q$}
\end{gpicture}
\begin{gpicture}[name=gpic:RL-run2,ignore]
  \gasset{Nframe=n}
  \node(u1)(0,4){$u_1$}
  \node(u2)(10,4){$u_2$}
  \node(u3)(20,4){$u_3$}
  \node(u4)(30,4){$u_4$}
  \node(u5)(40,4){$u_5$}
  \node(u6)(50,4){$u_6$}
  \node(u7)(60,4){$u_7$}
  \node(u8)(70,4){$u_8$}
  \node(u9)(80,4){$u_9$}
  \node(u10)(90,4){$u_{10}$}
  \node(u11)(100,4){$u_{11}$}
  \node(u12)(110,4){$u_{12}$}
  \gasset{AHnb=0}
  \drawline(-5,2)(-5,-35)
  \drawline( 5,2)( 5,-35)
  \drawline(15,2)(15,-35)
  \drawline(25,2)(25,-35)
  \drawline(35,2)(35,-35)
  \drawline(45,2)(45,-35)
  \drawline(55,2)(55,-35)
  \drawline(65,2)(65,-35)
  \drawline(75,2)(75,-35)
  \drawline(85,2)(85,-35)
  \drawline(95,2)(95,-35)
  \drawline(105,2)(105,-35)
  \drawline(115,2)(115,-35)
  \gasset{AHnb=1,arcradius=0.8}
  \drawline(115,-2)(96,-2)
  \node(rho0)(109,-0){\color{black}$\rho_{11}$}
  \drawline[linecolor=red](95,-2)(90,-2)(90,-4)(110,-4)(110,-6)(86,-6)
  \node(rho1)(88,-1){\color{red}$\rho_{10}$}
  \drawline[linecolor=blue](85,-6)(76,-6)
  \node(rho2)(79,-4){\color{blue}$\rho_9$}
  \drawline(75,-6)(70,-6)(70,-8)(90,-8)(90,-10)(70,-10)(70,-12)(80,-12)(80,-14)(66,-14)
  \node(rho3)(68,-5){\color{black}$\rho_8$}
  \drawline[linecolor=red](65,-14)(60,-14)(60,-16)(70,-16)(70,-18)(56,-18)
  \node(rho4)(58,-13){\color{red}$\rho_7$}
  \drawline[linecolor=blue](55,-18)(46,-18)
  \node(rho5)(49,-16){\color{blue}$\rho_6$}
  \drawline[linecolor=black](45,-18)(36,-18)
  \node(rho6)(39,-16){\color{black}$\rho_5$}
  \drawline[linecolor=red](35,-18)(30,-18)(30,-20)(50,-20)(50,-22)(40,-22)(40,-24)
  (50,-24)(50,-26)(26,-26)
  \node(rho7)(28,-17){\color{red}$\rho_4$}
  \drawline[linecolor=blue](25,-26)(20,-26)(20,-28)(40,-28)(40,-30)(16,-30)
  \node(rho8)(18,-25){\color{blue}$\rho_3$}
  \drawline[linecolor=black](15,-30)(6,-30)
  \node(rho9)(9,-28){\color{black}$\rho_2$}
  \drawline[linecolor=red](5,-30)(0,-30)(0,-32)(10,-32)(10,-34)(-4,-34)
  \node(rho10)(-2,-29){\color{red}$\rho_{1}$}
  \gasset{Nw=1.6,Nh=1.6,Nfill=y,ExtNL=y,NLdist=.3}
  \node[NLangle=0](p)(115,-2){$p$}
  \node[NLangle=135](q)(95,-2){\color{red}$q_{10}$}
  \node[NLangle=-45](qp)(95,-6){\color{red}$q'_{10}$}
  \node[NLangle=135](q)(85,-6){\color{blue}$q_{9}$}
  \node[NLangle=135](q)(75,-6){$q_8$}
  \node[NLangle=-45](qp)(75,-14){$q'_{8}$}
  \node[NLangle=135](q)(65,-14){\color{red}$q_7$}
  \node[NLangle=-45](qp)(65,-18){\color{red}$q'_{7}$}
  \node[NLangle=135](q)(55,-18){\color{blue}$q_6$}
  \node[NLangle=-45](qp)(55,-18){\color{blue}$q'_{6}$}
  \node[NLangle=135](q)(45,-18){$q_5$}
  \node[NLangle=135](q)(35,-18){\color{red}$q_4$}
  \node[NLangle=135](qp)(35,-26){\color{red}$q'_{4}$}
  \node[NLangle=135](q)(25,-26){\color{blue}$q_3$}
  \node[NLangle=-45](qp)(25,-30){\color{blue}$q'_{3}$}
  \node[NLangle=135](q)(15,-30){$q_2$}
  \node[NLangle=-45](qp)(15,-30){$q'_{2}$}
  \node[NLangle=135](q)(5,-30){\color{red}$q_1$}
  \node[NLangle=-45](qp)(5,-34){\color{red}$q'_{1}$}
  \node[NLangle=180](q)(-5,-34){$q$}
  \node[NLangle=-90,Nfill=n,NLdist=1](q5)(45,-34){$\textcolor{black}{q'_5=q_5}$}
  \node[NLangle=-90,Nfill=n,NLdist=1](q9)(85,-34){$\textcolor{blue}{q'_9=q_9}$}
\end{gpicture}
\begin{gpicture}[name=KleSt-f2,ignore]
  \gasset{Nframe=n,Nw=5,Nh=5}
 \node(ihash0)(0,30){$\#$}
 \node(iu1)(10,30){$u_1$}
 \node(ihash1)(20,30){$\#$}
 \node(iu2)(30,30){$u_2$}
 \node(ihash2)(40,30){$\#$}
 \node(iu3)(50,30){$u_3$}
 \node(ihash3)(60,30){$\#$} 
 \node(iu4)(70,30){$u_4$}
  \node(ihash4)(80,30){$\#$}
 \node(iu5)(90,30){$u_5$}
  \node(ihash5)(100,30){$\#$}
  
 \node(ohash0)(0,20){$\#$}
 \node(ou1)(10,20){$u_1$}
 \node(ohash1)(20,20){$\#$}
 \node(ou2)(30,20){$u_2$}
 \node(ohash2)(40,20){$\#$}
 \node(ou3)(50,20){$u_3$}
 \node(ohash3)(60,20){$\#$} 
 \node(o2u2)(30,10){$u_2$}
 \node(o2u3)(50,10){$u_3$}
 \node(o2u4)(70,10){$u_4$}

 \node(o3u3)(50,0){$u_3$}
 \node(o3u4)(70,0){$u_4$}
  \node(o3u5)(90,0){$u_5$}
  
 \node(input)(-30,30){input :}
 \node(c1)(-30,20){output copy 1:}
 \node(c2)(-30,10){output copy 2:}
 \node(c3)(-30,0){output copy 3:}
 \drawedge(ohash0,ou1){}
 \drawedge(ou1,o2u2){}
 \drawedge(o2u2,o3u3){}
 \drawedge(o3u3,ohash1){}
 \drawedge(ohash1,ou2){}
 \drawedge(ou2,o2u3){}
 \drawedge(o2u3,o3u4){}
 \drawedge(o3u4,ohash2){}
 \drawedge(ohash2,ou3){}
 \drawedge(ou3,o2u4){}
 \drawedge(o2u4,o3u5){}
 \drawedge(o3u5,ohash3){}
\end{gpicture}
\begin{gpicture}[name=SSTap,ignore]
  \gasset{Nframe=n,Nw=5,Nh=5}
 \node(iu1)(10,30){$u_1$}
 \node(iu2)(30,30){$u_2$}
 \node(iu3)(50,30){$u_3$}
 \node(iu4)(70,30){$u_4$}
 \node(iu5)(90,30){$u_5$}
  
 {\gasset{Nw=1,ExtNL=y,NLangle=180,NLdist=1}
 \node(input)(-10,30){input:}
 \node(x1)(-10,20){$X_1$:}
 \node(x2)(-10,10){$X_2$:}
 \node(x3)(-10,0){$X_3$:}
 \node(varO)(-10,-10){$O$:}
 }
 
 \node(pu11)(2,20){$.$}
 \node(su11)(18,20){$.$}
 \node(pu21)(22,20){$.$}
 \node(su21)(38,20){$.$}
 \node(pu31)(42,20){$.$}
 \node(su31)(58,20){$.$}
 \node(pu41)(62,20){$.$}
 \node(su41)(78,20){$.$}
 \node(pu51)(82,20){$.$}
 \node(su51)(98,20){$.$}
 
 \node(pu12)(2,10){$.$}
 \node(su12)(18,10){$.$}
 \node(pu22)(22,10){$.$}
 \node(su22)(38,10){$.$}
 \node(pu32)(42,10){$.$}
 \node(su32)(58,10){$.$}
 \node(pu42)(62,10){$.$}
 \node(su42)(78,10){$.$}
 \node(pu52)(82,10){$.$}
 \node(su52)(98,10){$.$}

 \node(pu13)(2,0){$.$}
 \node(su13)(18,0){$\times$}
 \node(pu23)(22,0){$.$}
 \node(su23)(38,0){$\times$}
 \node(pu33)(42,0){$.$}
 \node(su33)(58,0){$.$}
 \node(pu43)(62,0){$.$}
 \node(su43)(78,0){$.$}
 \node(pu53)(82,0){$.$}
 \node(su53)(98,0){$.$}
 
 \node(in)(-8,-10){$.$}
 \node(o0)(2,-10){$.$}
 \node(so3)(62,-10){$.$}
 \node(so4)(82,-10){$.$}
 \node(so5)(102,-10){$.$}

\drawedge(pu11,su11){$u_1$}
\drawedge(pu12,su12){$u_1$} 
\drawedge(pu13,su13){$u_1$} 
\drawedge(pu21,su21){$u_2$}
\drawedge(pu22,su22){$u_2$} 
\drawedge(pu23,su23){$u_2$} 
\drawedge(pu31,su31){$u_3$}
\drawedge(pu32,su32){$u_3$} 
\drawedge(pu33,su33){$u_3$} 
\drawedge(pu41,su41){$u_4$}
\drawedge(pu42,su42){$u_4$} 
\drawedge(pu43,su43){$u_4$} 
\drawedge(pu51,su51){$u_5$}
\drawedge(pu52,su52){$u_5$} 
\drawedge(pu53,su53){$u_5$} 

\drawedge(su11,pu22){}
\drawedge(su12,pu23){}
\drawedge(su21,pu32){}
\drawedge(su22,pu33){}
\drawedge(su31,pu42){}
\drawedge(su32,pu43){}
\drawedge(su41,pu52){}
\drawedge(su42,pu53){}
\drawedge(su33,so3){$\#$}
\drawedge(su43,so4){$\#$}
\drawedge(su53,so5){$\#$}
\drawedge(in,o0){$\#$}
\drawedge(o0,so3){}
\drawedge(so3,so4){}
\drawedge(so4,so5){}
\end{gpicture}

\subsection{Can the 2-chained Kleene star suffice for all aperiodic functions?} 
\label{sec:inexp}
It is known \cite{DGK-lics18} that the 2-chained Kleene star can simulate the 
$k$-chained Kleene-star for regular functions. However,  
we believe that, contrary to the case of regular functions, the $k$-chained
Kleene-star operator cannot be simulated by the $2$-chained Kleene-star
while preserving the aperiodicity of the expression. 
 The key idea is that, in order to simulate a $k$-chained Kleene-star on a SD 
prefix code $L$ using a $2$-chained Kleene-star, one needs to use $L^{\lceil k/2
\rceil}$ as a parser. 
However, for any given prefix code $L$, the language $L^n$ for $n>1$, while
still a prefix code, is not of bounded synchronisation delay (for the same
reason that $\{aa\}$ is not, i.e., for $v=(aa)^d$ that we consider, $a v
a$ belongs to $(aa)^*$ but $av$ does not). Intuitively, 
parsing $L^n$ reduces to  \emph{count}ing factors of $L$ \emph{modulo $n$}, which
is a classical example of non-aperiodicity.

As an example, consider the prefix code $L=(a+b)^{*}c$ which has  
synchronisation delay $1$.  Define a function $f$ with domain $L^{3}$ by
$f(u_1u_2u_3)=u_3u_1$ when $u_1,u_2,u_3\in L$, which can be written using combinators as 
$\big( (\SimpleFun{L^2}{\epsilon}) \cdot (\SimpleFun{L}{id}) \big) \odot 
\big((\SimpleFun{L}{id}) \cdot (\SimpleFun{L^2}{\epsilon})\big)$.
The identity function $id$ can itself be written as
$(\SimpleFun{a}{a}+\SimpleFun{b}{b}+\SimpleFun{c}{c})^{\star}$ (see also Figure \ref{fig:eg-2dft}, which is a simplification of the same function, but neverthless has the same 
inexpressiveness with 2 chained star). 
Then we believe that the function $\kstar{3}{L}{f}$, which associates to a word
$u_1\cdots u_n\in L^{*}$ the word $u_3u_1 u_4u_2\cdots u_nu_{n-2}$ is not
definable using only $2$-chained Kleene-stars.
While not a proof, the intuition behind this is that, in order to construct
$u_{i+1}u_{i-1}$, we need to highlight words from $L^3$.
In order to do this with a $2$-chained Kleene-star, it seems necessary to apply a
chained star with parser $L^2$, which is a prefix code but not of bounded
synchronisation delay.
A similar argument would hold for any $\kstar{k}{L}{f}$, $k\geq 3$ with a function
$f(u_1u_2\cdots u_k)=u_ku_1$.

\section{The Equivalence of SDRTE and Aperiodic 2DFT}\label{sec:Equiv}

In this section, we prove the main result of the paper, namely the equivalence
between \SDRTE and aperiodic 2DFT stated in Theorem~\ref{thm:intro}. The first direction, given an \SDRTE $C$,
constructing an equivalent aperiodic 2DFT $\A$ is given by
Theorem~\ref{thm:onedir}, while Theorem~\ref{thm:main} handles the converse.

\subsection{Aperiodic 2DFTs for SD-regular transducer expressions }\label{sec:A2DFTtoSDRET}

\begin{theorem}\label{thm:SDRTEtoAperiodic2DFT}
  Given an \SDRTE $C$, we can construct an equivalent aperiodic 2DFT $\A$ with 
  $\sem{C}=\sem{\A}$.
  \label{thm:onedir}
\end{theorem}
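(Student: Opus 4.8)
The plan is to proceed by structural induction on the \SDRTE $C$, constructing for each subexpression an aperiodic 2DFT realising its semantics. The base cases are immediate: $\bot$ is realised by a transducer with empty domain, and the constant $v$ by a transducer that sweeps the input left to right and emits $v$ at the right endmarker; both are trivially aperiodic. For the if-then-else $\Ifthenelse{L}{C_1}{C_2}$, I would first make a left-to-right pass simulating the counter-free (hence aperiodic) minimal automaton recognising the aperiodic language $L$, carry the single membership bit back to the left endmarker, and then launch the inductively obtained transducer for $C_1$ or for $C_2$ accordingly. Since $L$ is aperiodic and the two sub-transducers are aperiodic by induction, the composite transition monoid stays aperiodic. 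For the Hadamard product $C_1\odot C_2$, I run the transducer for $C_1$ to completion, return the head to the left endmarker producing nothing, and then run the transducer for $C_2$, concatenating the two outputs; the accepting runs are exactly those over $\dom{C_1}\cap\dom{C_2}$, as required.

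For the unambiguous Cauchy product $C_1\cdot C_2$, the only real work is to locate the unique cut point $w=uv$ with $u\in\dom{C_1}$ and $v\in\dom{C_2}$. Since $\dom{C_1}$ and $\dom{C_2}$ are aperiodic (Lemma~\ref{lem-SDRTEdomAp}) and the factorisation is unique, the unique legal boundary is determined by an aperiodic lookahead on the prefix and the suffix; I would detect it by passes simulating the aperiodic automaton for $\dom{C_1}$ and the reverse automaton for $\dom{C_2}$, then run the transducer for $C_1$ on the prefix and the transducer for $C_2$ on the suffix, concatenating the outputs.

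The genuinely involved case is the $k$-chained Kleene star $\kstar{k}{L}{C}$, which I would realise, as sketched in the introduction, as a composition $f_3\circ f_2\circ f_1$ of three aperiodic functions and then invoke closure of aperiodic 2DFTs under composition (which holds because they coincide with FO transductions~\cite{CD15,FKT14}, and the latter are closed under composition). Here $f_1$ inserts separators exposing the unique $L$-factorisation, $u_1\cdots u_n\mapsto \#u_1\#\cdots\#u_n\#$; $f_2$ forms the sliding windows of width $k$, $\#v_1\#\cdots\#v_m\#\mapsto\#v_1\cdots v_k\#\cdots\#v_{m-k+1}\cdots v_m\#$; and $f_3$ applies $\sem{C}$ blockwise, $\#w_1\#\cdots\#w_\ell\#\mapsto\sem{C}(w_1)\cdots\sem{C}(w_\ell)$, using the inductively constructed transducer for $C$ restarted between consecutive separators. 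The reverse star $\krstar{k}{L}{C}$ is handled identically except that $f_3$ emits the blocks in reverse order.

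I expect the main obstacle to be establishing that $f_1$, $f_2$, $f_3$ are each aperiodic. For $f_1$ this is exactly where the hypothesis that $L$ is a prefix code of \emph{bounded} synchronisation delay is used: bounded delay guarantees that the boundaries of the unique $L$-factorisation can be detected by inspecting a window of bounded width, so $f_1$ is FO-definable and its underlying domain automaton is aperiodic; without bounded delay (e.g.\ for $L^{n}$) factorisation would require counting factors modulo $n$, which is not aperiodic, as discussed in Section~\ref{sec:inexp}. The functions $f_2$ and $f_3$ are more routine: $f_2$ is a bounded-fan-out duplication in which each block is copied into the at most $k$ windows containing it, performed by a two-way machine that, for each window, reads its $k$ constituent blocks and copies them out; and $f_3$ is a blockwise application of the already-aperiodic transducer for $C$ between consecutive separators. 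In both cases aperiodicity of the resulting transition monoid follows because the separators reset the control and no modular counting over the unboundedly many blocks is ever needed.
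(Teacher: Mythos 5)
Your proposal is correct and follows essentially the same route as the paper: structural induction with the same constructions for the base cases, if-then-else, Hadamard and Cauchy products, and the same three-function decomposition $f_3\circ f_2\circ f_1$ for the $k$-chained star, relying on closure of aperiodic 2DFTs under composition. Your identification of where bounded synchronisation delay is needed (aperiodicity of the factorisation marker $f_1$) matches the paper's Lemma~\ref{lem-fisAperiodic2}, which carries out the aperiodicity arguments for $f_1,\dots,f_4$ in detail.
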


\begin{proof}
We construct $\A$ by induction on the structure of the \SDRTE $C$.
In the suitable cases, we will suppose thanks to induction that we have
aperiodic transducers $\A_i$ for expressions $C_i$, $i\leq 2$.  We also have a 
deterministic and complete
aperiodic automaton $A_L$ for any aperiodic language $L$.
\begin{itemize}
  \item $C=\bot$.  Then $\A$ is a single state transducer with no final state so that its
  domain is empty.

  \item $C=v$.  Then $\A$ is a single state transducer which produces $v$ and
  accepts any input word.  Clearly, $\A$ is aperiodic.

  \item $C= \Ifthenelse{L}{C_1}{C_2}$. 
  The transducer $\A$ first reads its input, simulating $A_L$. Upon reaching the end of the input word, it goes back to the beginning of the word, and either executes $\A_1$ if the word was accepted by $A_L$, or executes $\A_2$ otherwise. Since every machine was aperiodic, so is $\A$.

  \item $C=C_1\odot C_2$.
  The transducer $\A$ does a first pass executing $\A_1$, then resets to the beginning of the word and simulates $\A_2$. Since both transducers are aperiodic, so is $\A$.

  \item $C=C_1\cdot C_2$.
  We express $\A$ as the composition of three functions $f_1,f_2,f_3$, each
  aperiodic.  Since aperiodic functions are closed under composition, we get the
  result.
  The first function $f_1$ associates to each word $w\in\Sigma^{*}$ the word $u_1\# u_2
  \#\cdots \# u_n$, such that $w=u_1u_2 \cdots u_n$ and for any prefix $u$ of $w$, $u$
  belongs to the domain of $C_1$ if, and only if, $u=u_1\cdots u_i$ for some
  $1\leq i<n$.
  Notice that $u_1=\varepsilon$ iff $\varepsilon\in\dom{C_1}$ and 
  $u_n=\varepsilon$ iff $w\in\dom{C_1}$. The other $u_i$'s must be nonempty.
  The second function $f_2$ takes as input a word in $(\Sigma\cup\{\#\})^{*}$,
  reads it from right to left, and suppresses all $\#$ symbols except for the ones
  whose corresponding suffix belongs to the domain of $C_2$.  
  Then, $f_2(f_1(w))$ contains exactly one $\#$ symbol if and only if $w$ has a 
  unique factorisation $w=uv$ with $u\in\dom{C_1}$ and $v\in\dom{C_2}$. In this 
  case, $f_2(f_1(w))=u\#v$.

  Finally, the function $f_3$ has domain $\Sigma^{*}\#\Sigma^{*}$ and first executes
  $\A_1$ on the prefix of its input upto the $\#$ symbol, treating it as the right
  endmarker $\rightend$, and then executes $\A_2$ on the second part, treating $\#$ as the
  left endmarker $\leftend$.
  
  The functions $f_1$ and $f_2$ can be realised by aperiodic transducers as they
  only simulate automata for the aperiodic domains of $C_1$ and the reverse of
  $C_2$ respectively, and the function $f_3$ executes $\A_1$ and $\A_2$ one after
  the other, and hence is also aperiodic.

  \item $C=\kstar{k}{L}{C_1}$ or $C=\krstar{k}{L}{C_1}$.
  Here $L\subseteq\Sigma^*$ is an aperiodic language which is also a prefix code
  with bounded synchronisation delay, and $k\geq1$ is a natural number. Let
  $f=\sem{C_1}\colon\Sigma^*\to\D$ be the aperiodic
  function defined by $C_1$.
  We write $\kstar{k}{L}{f}=
  \Ifthenelse{L^{<k}}{(\SimpleFun{\Sigma^*}{\epsilon})}{(f_3\circ f_2\circ f_1)}$, 
  where $\epsilon$ is the output produced when the input has less than $k$ $L$ factors;
  otherwise the output is produced by the composition of 3 aperiodic functions.  As
  aperiodic functions are closed under composition, this gives the result.  The first one,
  $f_1\colon\Sigma^{*}\to(\Sigma\cup\{\#\})^{*}$ splits an input word $w\in
  L^{*}$ according to the unique factorization $w=u_1 u_2\cdots u_n$ with
  $n\geq0$ and $u_i\in L$ for all $1\leq i\leq n$ and inserts $\#$ symbols:
  $f_1(w)=\#u_1\#u_2\#\cdots\#u_n\#$. The domain of $f_1$ is $L^{*}$.

  The second function $f_2$ constructs the sequence of $k$ factors.  Its domain
  is $\#(\Sigma^{*}\#)^{\geq k}$ and it is defined as follows, with
  $u_i\in\Sigma^{*}$:
  $$
  f_2(\#u_1\#u_2\#\cdots\#u_n\#)=\#u_1u_2\cdots u_k\#u_2\cdots u_{k+1}\#\cdots\#
  u_{n-k+1}\cdots u_n\# \,.
  $$
  
  Finally, the third function simply applies $f$ and erases the $\#$ symbols:
  $$
  f_3(\#v_1\#v_2\#\cdots\#v_m\#)=f(v_1) f(v_2) \cdots f(v_m) \,.
  $$
  In particular, $f_3(\#)=\varepsilon$.
  We have $\dom{f_3}=\#(\dom{f}\#)^{*}$.
  
  For the reverse iteration, we simply change the last function and use instead
  $$
  f_4(\#v_1\#v_2\#\cdots\#v_m\#)=f(v_m)\cdots f(v_2) f(v_1) \,.
  $$
  Lemma~\ref{lem-fisAperiodic2} below proves that the functions $f_i$ for $i\leq 
  4$ are aperiodic.
  \qedhere
\end{itemize}
\end{proof}

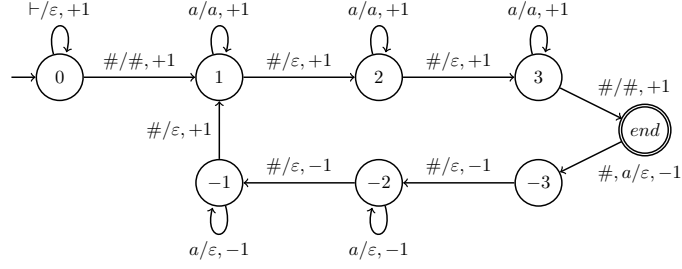
\begin{figure} [t]                                                                                                                                                           
 \begin{center}                                                                                                                                                         
    \scalebox{0.7}{                                                                                                                                                       
    \begin{tikzpicture}[->,thick]   
    
	\node[initial, state, initial text={}] (s0) at (0,0) {$0$};      
 	\node[state] (s1) at (3,0) {$1$};         
 	\node[state] (s2) at (6,0) {$2$}; 
 	\node[state] (s3) at (9,0) {$3$}; 
 	\node[state,] (sm3) at (9,-2) {$-3$};
 	\node[state,accepting] (end) at ((11,-1) {$end$};
 	\node[state] (sm1) at (3,-2) {$-1$}; 
 	\node[state] (sm2) at (6,-2) {$-2$};  	 	

    \path (s0) edge[loop above] node [above]{$\leftend/ \epsilon, +1$}node {}(); 	 	
    \path (s1) edge[loop above] node [above]{$a/ a, +1$}node {}();
    \path (s2) edge[loop above] node [above]{$a/ a, +1$}node {}();
    \path (s3) edge[loop above] node [above]{$a/ a, +1$}node {}();

    \path (sm1) edge[loop below] node [below]{$a/\epsilon, -1$}node {}();
    \path (sm2) edge[loop below] node [below]{$a/\epsilon, -1$}node {}();
    
    \path (s0) edge node [above]{$\#/\#, +1$}node {}(s1);
   \path (s1) edge node [above]{$\#/\epsilon, +1$}node {}(s2);
   \path (s2) edge node [above]{$\#/\epsilon, +1$}node {}(s3);
   \path (s3) edge node [above right]{$\#/\#, +1$}node {}(end);
   \path (end) edge node [below right]{$\#,a/\epsilon, -1$}node {}(sm3);
   \path (sm3) edge node [above]{$\#/\epsilon, -1$}node {}(sm2);
   \path (sm2) edge node [above]{$\#/\epsilon, -1$}node {}(sm1);
   \path (sm1) edge node [left]{$\#/\epsilon, +1$}node {}(s1);
    \end{tikzpicture}
    } 
  \end{center}                                                                                                                                                         
  \caption{The transducer $T_2$ for $k=3$.}
  \label{fig:SDto2DFT:f2}                                                                                                                                                        
\end{figure}

\begin{lemma}\label{lem-fisAperiodic2}
  The functions $f_1,f_2,f_3,f_4$ are realised by aperiodic 2DFTs.
\end{lemma}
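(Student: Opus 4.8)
My plan is to exhibit, for each $f_i$, an explicit deterministic two-way transducer and to verify that its transition monoid $\TrMon$ is aperiodic. I would treat the functions in increasing order of difficulty: the one-way function $f_1$, the windowing transducer $f_2$ of Figure~\ref{fig:SDto2DFT:f2}, and finally $f_3$ and $f_4$, which call the transducer $\A_1$ computing $f=\sem{C_1}$.

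For $f_1$ I would use a one-way transducer. Since $L$ is an aperiodic SD prefix code, $L^{*}$ is aperiodic, so it is recognised by a deterministic counter-free (hence aperiodic) automaton $A$. The transducer emits a leading $\#$, copies every input letter, and, after reading a prefix $p$, outputs a $\#$ exactly when $p\in L^{*}$ (a condition tested by the state of $A$). As $L$ is a prefix code, the positions $p$ with $p\in L^{*}$ are exactly the factor boundaries of the unique decomposition $w=u_1\cdots u_n$, so the output is $\#u_1\#\cdots\#u_n\#$ as required. Since the machine is one-way and its underlying input automaton $A$ is aperiodic, $f_1$ is realised by an aperiodic transducer.

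For $f_2$ I would generalise Figure~\ref{fig:SDto2DFT:f2} to arbitrary $k$: the states are $\{0,1,\dots,k,-k,\dots,-1,\mathit{end}\}$, the forward states $1,\dots,k$ copy the content of the $j$-th block of the current window and insert the window-delimiting $\#$, and the backward states $-k,\dots,-1$ skip the $k-1$ blocks needed to reposition the head at the start of the next window. Every $\#$-boundary is thus crossed a bounded number of times. For aperiodicity the key point is that a state encodes a window \emph{offset} that only ever advances monotonically and saturates (at $k$, then $\mathit{end}$): there is no residue-class behaviour, so for lengths at least $k$ the realised step-sets become independent of the length and $X^{n}=X^{n+1}$ holds in $\TrMon$. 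Equivalently, one may observe that $f_2$ is an FO-definable transduction, since windows of fixed width $k$ are definable from the $\#$-positions, and invoke the equivalence of FO-transductions with aperiodic $2$DFTs~\cite{CD15,FKT14}.

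For $f_3$ the transducer $\A_3$ simulates $\A_1$ on each block $v_i$, treating the surrounding $\#$ as the endmarkers $\leftend$ and $\rightend$, and performs a bounded routing step to pass from one block to the next. The crucial structural fact is that, because $\#$ is read as an endmarker, no run of $\A_1$ ever crosses a $\#$; hence in $\A_3$ each $\#$ is crossed only by the main left-to-right progression together with a bounded number of endmarker bounces. Consequently $\TrMon$ of $\A_3$ can be analysed block by block and divides a finite product of the aperiodic transition monoid of $\A_1$ with a bounded, aperiodic routing monoid; as aperiodic monoids are closed under finite products and division, it is aperiodic. The function $f_4$ is symmetric: $\A_4$ processes the blocks from right to left, and the same mirror argument applies. \textbf{The main obstacle} is exactly this last aperiodicity argument: one must ensure that composing the bounded two-way routing with the (aperiodic but otherwise arbitrary) behaviour of $\A_1$ inside each block introduces no modular counting. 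The confinement of $\A_1$-runs to single blocks, forced by the endmarker role of $\#$, is what reduces the global transition monoid to a division of a product of aperiodic monoids and makes the argument go through.
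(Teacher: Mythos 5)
Your constructions for $f_1$, $f_2$, $f_3$, $f_4$ are exactly the ones the paper uses (one-way transducer stamping $\#$ at the $L^{*}$-boundaries, the windowing 2DFT of Figure~\ref{fig:SDto2DFT:f2} generalised to arbitrary $k$, and block-wise simulation of the transducer for $f$ with $\#$ playing both endmarker roles), and you correctly isolate the two structural facts that drive everything: for $f_1$, that the prefixes of $w$ lying in $L^{*}$ are precisely the factor boundaries of the unique $L$-decomposition (a consequence of $L$ being a prefix code, which the paper proves as a separate claim), and for $f_3$, that the head never re-crosses a $\#$ from right to left once it has passed it. Your alternative justification for $f_2$ via FO-definability of fixed-width windowing and the equivalence of FO-transductions with aperiodic 2DFTs \cite{CD15,FKT14} is a legitimate shortcut the paper does not take (it instead computes directly that $v^{k+1}$ and $v^{k+2}$ realise the same steps, since both have either no $\#$ or at least $k+1$ of them and share their first and last letters).

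The one genuine gap is the aperiodicity argument for $f_3$ (and its mirror $f_4$). The claim that $\TrMon$ of $\A_3$ ``divides a finite product of the aperiodic transition monoid of $\A_1$ with a bounded routing monoid'' is asserted, not proved, and taken literally it does not typecheck: a word over $\Sigma\cup\{\#\}$ contains unboundedly many complete blocks, so there is no a priori bound on the number of factors in such a product. What makes the argument work is that the contribution of \emph{all} the internal complete blocks collapses to a single Boolean (are they all in $\dom{f}$?), so that the step-set of a word $v=u_1\#w_1\#\cdots w_m\#u_2$ is determined by $\varphi_T(u_1)$, $\varphi_T(u_2)$ and that Boolean alone; this collapse is exactly the missing finite abstraction, and you do not supply it. The paper instead argues directly on powers: for $v$ containing a $\#$, there is no $\rightleft$ traversal of $v^{n}$ for $n\geq2$; the $\leftright$ traversals of $v^{n}$, $n\geq2$, exist iff $u_2u_1,w_1,\ldots,w_m\in\dom{f}$ and are the $\LR{p}{q}$ with $\LR{p}{f'}\in\varphi(u_1)$ for some final $f'$ and $\LR{\iota}{q}\in\varphi(u_2)$, hence coincide for $v^{2}$ and $v^{3}$; and the $\leftleft$ and $\rightright$ step-relations of $(v^{n})_{n}$ are monotone and eventually stable for \emph{any} finite 2DFT, a general observation you never invoke but which is needed to cover those step types. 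Either supply the explicit finite abstraction (the triple $(\varphi_T(u_1),\text{flag},\varphi_T(u_2))$ with its induced aperiodic multiplication) or switch to the paper's direct computation of the steps of $v^{n}$; as written, the $f_3$ case is not established.
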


\begin{proof}
  \textbf{The function $f_1$.}
  First, since $L$ is an aperiodic language which is a prefix code with
  bounded synchronisation delay, $L^*$ is  aperiodic.  
  Let $\A_1$ be an aperiodic deterministic automaton that recognizes
  $L^*$.  Let $w$ be a word in $L^*$ and $w=u_1\cdots u_n$ with $u_i \in L$.
  Since $L$ is a code, this decomposition is unique.  Notice that
  $\varepsilon\notin L$.  We claim that the run of $\A_1$ over $w$ reaches
  final states exactly at the end of each $u_i$.  Should this hold, then we
  can easily construct a (one-way) aperiodic transducer $T_1$ realising $f_1$ by simply
  simulating $\A_1$ and copying its input, adding $\#$ symbols each time $\A_1$
  reaches a final state.
  
  It remains to prove the claim.  First, since for any $1\leq i\leq n$,
  $u_1\cdots u_i$ belongs to $L^*$, $\A_1$ reaches a final state after reading
  $u_i$.  Conversely, suppose $\A_1$ reaches a final state after reading some
  nonempty prefix $v$ of $w$.  Then $v$ can be written $u_1\cdots u_i u'$ for
  some index $0\leq i<n$ and some nonempty prefix $u'$ of $u_{i+1}$.  But
  since $\A_1$ reaches a final state on $v$, we have $v\in L^*$.  Hence, there
  is a unique decomposition $v=v_1\cdots v_m$ with $v_j \in L$.  Since
  $v=u_1\cdots u_i u' = v_1 \cdots v_m$, either $u_1$ is a prefix of $v_1$ or
  conversely.  Since $L$ is a prefix code, and both $u_1$ and $v_1$ belong to $L$,
  we obtain $u_1=v_1$.  By induction, we get that $u_j=v_j$ for $j\leq i$.
  Now, $u'=v_{i+1}\cdots v_m$ is a nonempty prefix of $u_{i+1}$.  Using again
  that $L$ is a prefix code, we get $m=i+1$ and $u'=v_{i+1}=u_{i+1}$,
  which concludes the proof of the claim.
  
  \textbf{The function $f_2$.}    
  The domain of $f_2$ is the language $K=\#(\Sigma^{*}\#)^{\geq k}$.
  We construct an aperiodic 2DFT $T_2$ for $f_2$  (see
  Figure~\ref{fig:SDto2DFT:f2} for $T_2$ where $k=3$).
  Let $T_2=(\{-k,-k+1,\ldots,0,\ldots,k-1,k\}\cup\{end\},\Sigma\cup\{\#\},\Sigma\cup\{\#\},
  \delta_2,\gamma_2,0,\{end\})$ be the 2DFT realising $f_2$.
  The transition function $\delta_2$ is defined as:
  \begin{itemize}
    \item $\delta_2(0,\leftend)=(0,+1)$,
    \item $\delta_2(i,a)=(i,+1)$ for $0<i\leq k$ and $a\in\Sigma$,
    \item $\delta_2(i,a)=(i,-1)$ for $-k<i<0$ and $a\in\Sigma$,
    \item $\delta_2(end,a)=\delta_2(end,\#)=(-k,-1)$,
    \item $\delta_2(i,\#)=(i+1,+1)$ for $0\leq i < k$, 
    \item $\delta_2(k,\#)=(end,+1)$,
    \item $\delta_2(i,\#)=(i+1,-1)$ for $-k\leq i < -1$, 
    \item $\delta_2(-1,\#)=(1,+1)$.
  \end{itemize}
  
  The production function $\gamma_2$ is then simply $\gamma_2(i,a)=a$ for $i>0$,
  $\gamma_2(i,\#)=\#$ for $i=0$ and $i=k$, and is set to $\epsilon$ for all other 
  transitions.

  The way the transducer $T_2$ works is that it reads forward, in the strictly positive
  states, a factor of the input containing $k$ $\#$ symbols, copying it to the output.
  Upon reaching the $k^{th}$ $\#$ symbol, it goes to state $end$ to check if it was the
  last $\#$ symbol, and otherwise reads back the last $k-1$ $\#$ symbols and starts
  again.  

Let us prove the aperiodicity of $T_2$. 
First, notice that the $\leftleft$ and $\rightright$ are always aperiodic relations, for
any finite 2DFT. This is due to the fact that if a $\leftleft$ step exists in some
$u\neq\varepsilon$, it also appears in $uv$.  So for $(v^n)_{n>0}$, the $\leftleft$
and $\rightright$ relations are monotone, and since we consider finite state machines,
they eventually stabilize.
So we turn to traversal steps. These traversal steps only depend on the number of $\#$ symbols in the word, as well as the starting and ending symbols.
In particular, if a word $v$ has $k+1$ or more $\#$ symbols, the only
traversals it realises are in $\{\LR{0}{k},\LR{0}{end},\LR{1}{k},\LR{1}{end}\}$, starting 
from $0$ is possible only if $v$ starts with $\#$, the target state is $end$ if the last 
letter of $v$ is $\#$, otherwise it is $k$. Notice that both $\LR{0}{end}$ and $\LR{1}{end}$ 
are possible if $v\in\#(\Sigma^{*}\#)^{\geq k}$. Similarly, both $\LR{0}{k}$ and $\LR{1}{k}$ 
are possible if $v\in\#(\Sigma^{*}\#)^{\geq k}\Sigma^{+}$.
Then given any word $v\in(\Sigma\cup\{\#\})^{+}$, both $v^{k+1}$ and $v^{k+2}$ have either
no $\#$ symbol, or at least $k+1$ $\#$ symbols; further they have the same starting and
ending letters.  Thus they realise the same steps.

  \textbf{The function $f_3$.}
	The goal of $f_3$ is to iteratively simulate $f$ on each factor appearing between $\#$
	symbols.  To this end, $T_3$ is defined as the transducer $T$ realising $f$, with the
	exception that it reads the $\#$ symbols as endmarkers, and upon reaching a final state
	while reading a $\#$ symbol, it first checks if the next symbol is $\rightend$, and in this
	case ends the run, or simulates the move of $T$ reading $\leftend$ from the initial state.
	Note that $\#$ being used for both endmarkers could generate some non-determinism,
	however this can be avoided as the left endmarker can only be reached while moving to
	the left, and symmetrically for the right endmarker.  Then we solve non-determinism by
	duplicating the problematic states $q$ to states $q_\ell$ (where $\#$ is seen as the
	left endmarker) and $q_r$ (where $\#$ is seen a right endmarker), which can only be
	reached while moving to the left or the right respectively.

	We now turn to the aperiodicity of $T_3$. 
	If the input word $v$ does not contain any $\#$ symbol, then the 
  $(\leftleft,\rightright,\leftright,\rightleft)$-runs of $v^n$ are the
	same as the ones in $T$, and since $T$ is aperiodic then we get
	$\varphi(v^n)=\varphi(v^{n+1})$ for some $n$, where $\varphi$ is the syntactic morphism of $T$.
	
	Otherwise, let us remark that by design, once the reading head has gone right of a given
	$\#$ symbol, it never goes back to its left, and secondly the behavior of $T_3$ when
  going from left to right of a $\#$ symbol is always the same since it simulates the 
  initial transition of $T$. 
	So given a word $v$ with at least one $\#$ symbol, let $u_1$ and $u_2$ be the prefix and
	suffix of $v$ upto the first and from the last $\#$ respectively, i.e., 
  $v=u_1\#w_1\#\cdots w_m\#u_2$ with $m\geq0$ and $u_1,u_2,w_1,\ldots,w_m\in \Sigma^*$.
	Then there exists no $\rightleft$ traversal of $v^{n}$ for $n\geq2$ since the reading
	head cannot move from right to left of a $\#$ symbol.
	The $\leftright$ traversals of $v^n$, for $n\geq 2$, exist if and only if
	$u_2u_1,w_1,\ldots,w_m$ belong to the domain of $T$, and consist of all $\LR{p}{q}$,
	where $\varphi(u_1)$ contains $\LR{p}{f}$ for some final state $f$, and $\varphi(u_2)$
	contains $\LR{\iota}{q}$ where $\iota$ is the initial state.
	These traversals are then the same for $v^2$ and $v^3$, which concludes the proof of aperiodicity of $T_3$.

  \textbf{The function $f_4$.}
  The transducer $T_4$ realising $f_4$ is similar to $T_3$. 
  The main difference is that it starts by reaching the end of the word, then goes back to
  the previous $\#$ symbol to simulate $T$.  On reaching the end of the run in $T$ (in a
  final state of $T$ when reading $\#$) it treats $\#$ as $\rightend$ and then enters into a
  special state which moves the reading head to the left, till the time it has finished
  reading two $\#$ symbols, while outputting $\epsilon$ all along.  When it reads the
  second $\#$, it moves right entering the state $T$ would, on reading $\leftend$ from 
  its initial state, and continues simulating $T$.  This goes on until it reaches the
  start symbol $\leftend$, and then it goes to the final state of $T_4$ that only moves to
  the right outputting $\epsilon$ all along until the end of the input to $\rightend$.

  The arguments for the aperiodicity of $T_4$ are similar to the ones for $T_3$.
\end{proof}

\subsection{SD-regular transducer expressions for aperiodic 2DFTs}\label{sec:SDRETtoA2DFT}
In this section, we show that the runs of an aperiodic 2DFT have a ``stabilising'' 
property.  This property crucially distinguishes aperiodic 2DFTs from
non aperiodic ones, and we use this in our proof to obtain \SDRTEs from aperiodic 2DFTs.
In the remainder of this section, we fix an aperiodic 2DFT
$\A=(Q,\Sigma,\Gamma,\delta,\gamma,q_0,F)$.  Let
$\varphi\colon(\Sigma\uplus\{\leftend,\rightend\})^{*}\to\TrMon$ be the
canonical surjective morphism to the transition monoid of $\A$.
 
\subsubsection{Stabilising runs in two-way automata}\label{sec:run-stable}

Consider a \emph{code} $L\subseteq\Sigma^{*}$ such that $X=\varphi(L)$ is 
$k$-stabilizing for some $k>0$. We will see that a run of $\A$ over a word 
$w\in L^{*}$ has some nice properties. Intuitively, if it moves forward through 
$k$ factors from $L$ then it never moves backward through more than $k$ factors.

More precisely, let $w=u_1u_2\cdots u_n$ be the unique factorisation of $w\in
L^{*}$ with $u_i\in L$ for $1\leq i\leq n$.  We assume that $n\geq k$.
We start with the easiest fact.

\begin{lemma}\label{lem:LL}
  If $\LL{p}{q}\in\varphi(w)$ then the run of $\A$ over $w$ starting on the left
  in state $p$ only visits the first $k$ factors $u_1\cdots u_k$ of $w$.
  \end{lemma}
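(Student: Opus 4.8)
The plan is to reduce the whole statement to the prefix $v=u_1\cdots u_k$ of $w$ made of its first $k$ factors, by first proving the algebraic identity $\varphi(w)=\varphi(v)$ and then using determinism of $\A$ to confine the run to $v$. Throughout I assume $n>k$, the case $n=k$ being trivial since then $w=v$.

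First I would establish that $\varphi(w)=\varphi(v)$ as elements of $\TrMon$. Since each $u_i\in L$ we have $\varphi(u_i)\in X$, hence $\varphi(v)=\varphi(u_1)\cdots\varphi(u_k)\in X^{k}$. Writing $w=v\,u_{k+1}\cdots u_n$, a short induction on $m$ shows that $\varphi(v)\,\varphi(u_{k+1})\cdots\varphi(u_m)=\varphi(v)$ for all $k\le m\le n$: at $m=k$ the trailing product is empty; for the step, the inductive value is $\varphi(v)$, which lies in $X^{k}$, so right-multiplying by the next factor $\varphi(u_{m+1})\in X$ leaves it unchanged by $k$-stabilisation ($xy=x$ for $x\in X^{k}$, $y\in X$). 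Taking $m=n$ yields $\varphi(w)=\varphi(v)$, and in particular the two words realise exactly the same set of steps.

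Consequently $\LL{p}{q}\in\varphi(w)=\varphi(v)$, so there is a left-left run over $v$ from $p$ to $q$. As $\A$ is deterministic, the unique run $\sigma$ issued from the configuration $(p,1)$ on input $v$ must be this left-left run: all its positions stay in $\{1,\dots,|v|\}$ until it finally reaches position $0$ in state $q$. In particular, whenever $\sigma$ sits on the last position $|v|$ of $v$ its transition moves left, since a rightward move there would push $\sigma$ to position $|v|+1$ and make it leave $v$ on the right, contradicting that $\sigma$ is a left-left run. Then I would transfer this behaviour to $w$: on every position in $\{0,1,\dots,|v|\}$ the run over $w$ reads the same symbol as over $v$ (the endmarker $\leftend$ at $0$, and $w_i=v_i$ for $1\le i\le|v|$), so the two runs can differ only upon a move from position $|v|$ to $|v|+1$, where $v$ carries $\rightend$ whereas $w$ carries the first letter of $u_{k+1}$. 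By the previous observation $\sigma$ never makes that move, so by induction on the number of steps together with determinism, the run of $\A$ over $w$ from $(p,1)$ coincides with $\sigma$, stays within positions $\{0,\dots,|v|\}$, and therefore visits only the factors $u_1\cdots u_k$.

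The hard part is the confinement argument of the last paragraph, namely ruling out that the run over $w$ escapes into $u_{k+1}\cdots u_n$ before it can mimic $\sigma$. This is exactly where the identity $\varphi(w)=\varphi(v)$ pays off: it forces the governing run $\sigma$ to be left-left over $v$, and hence never to attempt the single move (rightward across the boundary at position $|v|$) on which the inputs $v$ and $w$ disagree; once that move is excluded, the local agreement of transitions propagates step by step and keeps the run trapped inside the first $k$ factors.
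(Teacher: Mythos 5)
Your proposal is correct and follows essentially the same route as the paper: both rest on the identity $\varphi(w)=\varphi(u_1\cdots u_k)$ obtained from $k$-stabilisation, followed by determinism of $\A$ to identify the run over $w$ with the left-left run over the prefix $u_1\cdots u_k$. The paper compresses the confinement step into ``the result follows since $\A$ is deterministic,'' whereas you spell out the step-by-step agreement of the two runs and the impossibility of crossing the boundary at position $|u_1\cdots u_k|$; this is just a more detailed rendering of the same argument.
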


\begin{proof}
  Since $X$ is $k$-stabilising, we have $\varphi(w)=\varphi(u_1\cdots u_k)$.
  Hence, $\LL{p}{q}\in\varphi(u_1\cdots u_k)$ and the result follows since $\A$ 
  is deterministic.
\end{proof}

Notice that the right-right ($\rightright$) runs of $\A$ over $w$ need not visit
the last $k$ factors only (see Lemma~\ref{lem:RR} below).  This is due to the
fact that \emph{stabilising} is not a symmetric notion.

Next, we consider the left-right runs of $\A$ over $w$.

\begin{lemma}\label{lem:LR}
  Assume that $\LR{p}{q}\in\varphi(w)$.  Then the run $\rho$ of $\A$ over $w$
  starting on the left in state $p$ has the following property, that we call
  $k$-forward-progressing: for each $1\leq i<n-k$, after reaching the suffix
  $u_{i+k+1}\cdots u_n$ of $w$, the run $\rho$ will never visit again the prefix
  $u_1\cdots u_i$.  See Figure~\ref{fig:LR-run} for a non-example and
  Figure~\ref{fig:LR-2progressing} for an example.
\end{lemma}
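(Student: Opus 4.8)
The plan is to fix an index $i$ with $1\leq i<n-k$ and to split the word as $w=PMS$, where $P=u_1\cdots u_i$, $M=u_{i+1}\cdots u_{i+k}$ and $S=u_{i+k+1}\cdots u_n$. Since $i<n-k$ the suffix $S$ is nonempty, and $MS=u_{i+1}\cdots u_n$ is a word in $L^{*}$ whose unique factorisation over $L$ has $n-i>k$ factors, the first $k$ of which are exactly $M$. The whole idea is to reduce the global behaviour of $\rho$ on $w$ to the behaviour of $\A$ on the single sub-word $MS$, and then to invoke Lemma~\ref{lem:LL} on $MS$. If $\rho$ never reaches $S$ there is nothing to prove, so I assume it does.

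First I would decompose $\rho$ along the boundary between $u_i$ and $u_{i+1}$ (the left edge of $M$). The run starts at position $1$, which lies in $P$ because $i\geq1$, so the head can enter the block $MS$ only by crossing that boundary from left to right. I would therefore cut $\rho$ into its maximal pieces that lie entirely inside $MS$. Each such piece begins at the leftmost position of $MS$ in some state $r$; since $MS$ sits at the right end of $w$, every piece is \emph{entered from the left}, and by determinism and finiteness of the accepting run it must leave $MS$ on exactly one side. In the vocabulary of the paper these two options are precisely the left-left runs $\LL{r}{s}$ of $\A$ over $MS$ (the head returns into $P$) and the left-right runs $\LR{r}{s}$ over $MS$ (the head exits at position $|w|+1$, which is the end of $\rho$).

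The key step is that a left-left piece can never reach $S$. Indeed, $MS\in L^{*}$ has at least $k$ factors and $X=\varphi(L)$ is $k$-stabilising, so Lemma~\ref{lem:LL} applies verbatim to $MS$: a run of $\A$ over $MS$ starting on the left and of type $\LL{r}{s}$ visits only the first $k$ factors of $MS$, namely $M$, and hence stays strictly left of $S$. Consequently the piece during which $\rho$ first enters $S$ cannot be left-left, so it is a left-right piece; such a piece stays in positions at least the first position of $M$ throughout and exits at $|w|+1$, i.e.\ it terminates $\rho$. Thus from the instant $\rho$ reaches $S$ the head never again touches a position inside $P$, which is the $k$-forward-progressing property for this $i$; ranging over all $1\leq i<n-k$ yields the lemma.

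The only delicate point I expect is the reduction to the sub-word $MS$: one has to check that every excursion of $\rho$ into $MS$ is genuinely a left-to-right entry (so that Lemma~\ref{lem:LL} is applied on the correct side), and to see clearly where stabilisation is used. The latter is encapsulated by $\varphi(MS)=\varphi(M)\in X^{k}$, which is exactly the equality that Lemma~\ref{lem:LL} exploits and which forbids a backward detour from ever crossing all of $M$ to reach $S$. Once this transport of the stabilising hypothesis from $L$ to $MS$ is in place, determinism of $\A$ makes the conclusion immediate.
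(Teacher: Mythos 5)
Your proof is correct and rests on exactly the same key step as the paper's: applying Lemma~\ref{lem:LL} to the suffix $u_{i+1}\cdots u_n$ (your $MS$, which has more than $k$ factors of $L$) to conclude that any left-left excursion into it stays within $u_{i+1}\cdots u_{i+k}$ and therefore cannot reach $u_{i+k+1}\cdots u_n$. The paper phrases this as a short contradiction (a revisit of the prefix after the suffix would yield a $\leftleft$-subrun on $u_{i+1}\cdots u_n$ visiting $u_{i+k+1}$), whereas you give the equivalent direct argument via maximal pieces, but the content is the same.
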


\begin{proof}
  Towards a contradiction, assume that for some $1\leq i<n-k$, the run $\rho$
  visits $u_1\cdots u_i$ after visiting $u_{i+k+1}\cdots u_n$ (See
  Figure~\ref{fig:LR-run}).  
  Then, there exists a subrun $\rho'$ of $\rho$ making some
  $(\leftleft,q_1,q_3)$-step on $u_{i+1}\cdots u_n$ and visiting $u_{i+k+1}$ (on
  Figure~\ref{fig:LR-run} we have $\rho'=\rho_2\rho_3$).  Hence 
  $(\leftleft,q_1,q_3)\in\varphi(u_{i+1}\cdots u_n)$ and by Lemma~\ref{lem:LL} we deduce 
  that $\rho'$ visits $u_{i+1}\cdots u_{i+k}$ only, a contradiction.
\begin{figure}[tb]
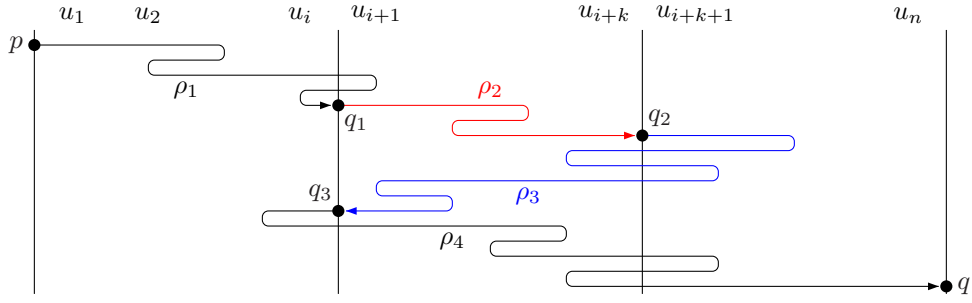

  \centering
  \gusepicture{gpic:LR-run}
  \caption{A left-right run which is not $k$-forward-progressing}
  \label{fig:LR-run}
\end{figure}
\end{proof}

\begin{lemma}\label{lem:RL}
  Assume that $\RL{p}{q}\in\varphi(w)$.  Then the run $\rho$ of $\A$ over $w$
  starting on the right in state $p$ has the following property, that we call
  $k$-backward-progressing: for each $1\leq i<n-k$, after reaching the prefix
  $u_1\cdots u_{i}$ of $w$, the run $\rho$ will never visit again the suffix
  $u_{i+k+1}\cdots u_n$. 
\end{lemma}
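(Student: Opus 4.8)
The plan is to mirror the proof of Lemma~\ref{lem:LR}, arguing by contradiction and reducing to a forbidden left-left run via Lemma~\ref{lem:LL}. Suppose that $k$-backward-progressing fails: there is an index $1\le i<n-k$ such that, after the run $\rho$ reaches the prefix $u_1\cdots u_i$ at some point, it later visits the suffix $u_{i+k+1}\cdots u_n$ again. Writing $s=u_{i+1}\cdots u_n$ for the suffix cut at the boundary between $u_i$ and $u_{i+1}$, my goal is to extract from $\rho$ a left-left run on $s$ that reaches into $u_{i+k+1}$, which will contradict Lemma~\ref{lem:LL}.

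First I would isolate the relevant subrun. Let $t_2$ be a time at which $\rho$ is in $u_{i+k+1}\cdots u_n$ occurring strictly after a time at which $\rho$ is in $u_1\cdots u_i$. Between these two times the head crosses the boundary between $u_i$ and $u_{i+1}$ rightward, and I take the \emph{last} such rightward crossing before $t_2$, which enters $s$ at the first position of $u_{i+1}$ in some state $q_1$. By the choice of this last crossing, the head remains inside $s$ from this point until $t_2$, and at $t_2$ it visits $u_{i+k+1}$. Hence, by determinism, the subrun of $\rho$ starting at this crossing is exactly the run of $\A$ on $s$ started on the left in state $q_1$, and it reaches $u_{i+k+1}$.

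Next I would argue that this subrun is a left-left run on $s$. Since $\rho$ is a right-left run on $w$, all its configurations stay within positions $1,\dots,|w|$ until the final exit on the left at position $0$; in particular $\rho$ never reaches the right endmarker. Therefore the subrun on $s$ cannot leave $s$ on the right, so it must eventually cross the boundary between $u_i$ and $u_{i+1}$ leftward, exiting $s$ on the left in some state $q_3$. This exhibits $\LL{q_1}{q_3}\in\varphi(s)$. As $i<n-k$, the suffix $s$ has more than $k$ factors from $L$, and $X=\varphi(L)$ is $k$-stabilising; Lemma~\ref{lem:LL} then forces this left-left run to visit only the first $k$ factors $u_{i+1}\cdots u_{i+k}$ of $s$, contradicting the fact that it reaches $u_{i+k+1}$.

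The main obstacle is the asymmetry of the stabilising condition, already flagged in the remark after Lemma~\ref{lem:LL}: unlike the left-right case, we cannot symmetrically rely on a right-right run visiting only the last $k$ factors, since this fails in general. The care therefore lies in the second step, in selecting precisely the last rightward crossing so that the extracted subrun stays inside $s$ until $t_2$, and in the third step, in using that a right-left run on $w$ never touches the right endmarker to guarantee that the subrun exits $s$ on the left rather than on the right. Both points are what ensure that the extracted subrun is genuinely a left-left run on $s$ to which Lemma~\ref{lem:LL} applies.
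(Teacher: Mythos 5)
Your proof is correct and follows essentially the same route as the paper's: both arguments extract, from the portion of $\rho$ that re-enters $u_{i+1}$ after visiting $u_i$, a left-left run on the suffix $u_{i+1}\cdots u_n$ (using that a right-left run can only exit on the left) and then invoke Lemma~\ref{lem:LL} to confine it to $u_{i+1}\cdots u_{i+k}$. You phrase it as a contradiction and are slightly more explicit about choosing the last rightward crossing, but the key reduction is identical to the paper's.
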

  
\begin{proof}
	This Lemma is a consequence of Lemma~\ref{lem:LL}.
	Indeed, consider any part of $\rho$ that visits $u_{i+1}$ again (in some state $q_1$)
	after visiting $u_i$, for some $1\leq i<n-k$.
	As $\rho$ is a $\rightleft$ run, it will later cross from $u_{i+1}$ to $u_i$ (reaching
	some state $q_3$).  Then $\LL{q_1}{q_3}$ is a run on $u_{i+1}\cdots u_n$.
  By Lemma~\ref{lem:LL}, it does not visit $u_{i+k+1}\cdots u_n$, which concludes the
  proof (See Figure~\ref{fig:RL-run} for a non-example).
\begin{figure}[tb]
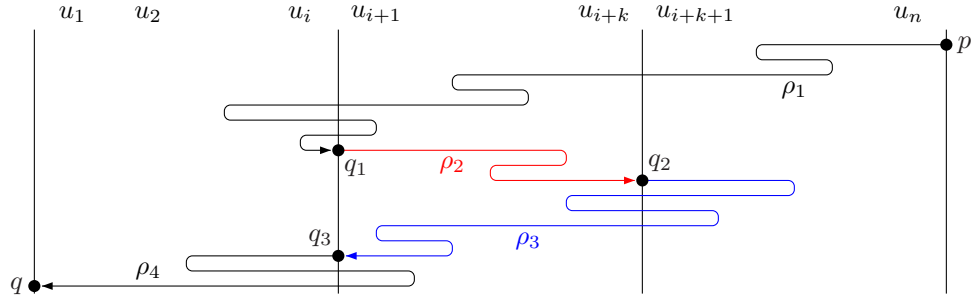

  \centering
  \gusepicture{gpic:RL-run}
  \caption{A right-left run which is not $k$-backward-progressing}
  \label{fig:RL-run}
\end{figure}
\end{proof}

\begin{lemma}\label{lem:RR}
  Assume that $\RR{p}{q}\in\varphi(w)$ and let $\rho$ be the run of $\A$ over
  $w$ starting on the right in state $p$.  Then, either $\rho$ visits only the
  last $k$ factors $u_{n-k+1}\cdots u_n$, or for some $1\leq i\leq n-k$ the run
  $\rho$ is the concatenation $\rho_1\rho_2\rho_3$ of a $k$-backward-progressing
  run $\rho_1$ over $u_{i+1}\cdots u_n$ followed by a run $\rho_2$ staying
  inside some $u_i\cdots u_{i+k}$, followed by some $k$-forward-progressing run
  $\rho_3$ over $u_{i+1}\cdots u_n$.  See Figure~\ref{fig:RR-run}.
\end{lemma}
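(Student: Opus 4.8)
The plan is to split the two cases in the statement according to how deep the run $\rho$ penetrates, and then to reuse the three preceding lemmas essentially as black boxes. Let $u_j$ be the leftmost factor of $w$ visited by $\rho$. If $j\geq n-k+1$ then $\rho$ stays inside the last $k$ factors $u_{n-k+1}\cdots u_n$ and we are in the first case. So assume $j\leq n-k$; I would set $i=j$ and cut $\rho$ at the boundary $b_i$ separating $u_i$ from $u_{i+1}$. Concretely, let $\rho_1$ be the prefix of $\rho$ up to the \emph{first} time it crosses $b_i$ leftward (the first entry into $u_i$, ending in some state $q_1$), let $\rho_3$ be the suffix of $\rho$ from the \emph{last} time it crosses $b_i$ rightward (the final exit of $u_i$ to the right, starting in some state $q_2$), and let $\rho_2$ be the part in between. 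Both cuts lie on $b_i$, matching the picture in Figure~\ref{fig:RR-run}. Note that $n-i\geq k$, so the suffix block $u_{i+1}\cdots u_n$ has at least $k$ factors and the hypotheses of the progressing lemmas will be available.

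Next I would identify $\rho_1$ and $\rho_3$ with runs on that suffix block. Before reaching $b_i$ for the first time $\rho$ stays strictly to the right of $b_i$, so $\rho_1$ is a right-left run on $u_{i+1}\cdots u_n$; hence $\RL{p}{q_1}\in\varphi(u_{i+1}\cdots u_n)$ and $\rho_1$ is $k$-backward-progressing by Lemma~\ref{lem:RL}. Symmetrically, after the last rightward crossing of $b_i$ the run never returns to $b_i$, so $\rho_3$ is a left-right run on $u_{i+1}\cdots u_n$; hence $\LR{q_2}{q}\in\varphi(u_{i+1}\cdots u_n)$ and $\rho_3$ is $k$-forward-progressing by Lemma~\ref{lem:LR}. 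This already yields the two outer pieces of the claimed decomposition.

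The main obstacle is to confine $\rho_2$ to $u_i\cdots u_{i+k}$. The left bound is immediate: since $u_i=u_j$ is the leftmost factor visited, $\rho_2$ never goes left of $u_i$. For the right bound I would appeal to Lemma~\ref{lem:LL}. Inside $\rho_2$, every maximal excursion to the right of $b_i$ starts when $\rho$ crosses $b_i$ rightward (in some state $s$) and ends when it next crosses $b_i$ leftward (in some state $s'$); such a next leftward crossing exists because, by construction, the \emph{last} rightward crossing of $b_i$ belongs to $\rho_3$, not to $\rho_2$. Such an excursion is exactly the run of $\A$ on $u_{i+1}\cdots u_n$ starting on the left in state $s$ and returning on the left in state $s'$, i.e.\ $\LL{s}{s'}\in\varphi(u_{i+1}\cdots u_n)$. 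By Lemma~\ref{lem:LL} this excursion visits only the first $k$ factors $u_{i+1}\cdots u_{i+k}$, so it never crosses $b_{i+k}$. Since every right excursion of $\rho_2$ is of this form, and the portions of $\rho_2$ lying in $u_i$ are trivially within the region, we conclude $\rho_2\subseteq u_i\cdots u_{i+k}$. This completes the decomposition $\rho=\rho_1\rho_2\rho_3$ asserted in the lemma; the crux is precisely the observation that each returning right excursion is a left-left run on the suffix block, which lets Lemma~\ref{lem:LL} bound the turnaround width by $k$.
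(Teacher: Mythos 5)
Your proof is correct and follows essentially the same route as the paper: the same cut points (first leftward crossing into the leftmost visited factor $u_i$, last rightward crossing out of it), the same appeal to Lemmas~\ref{lem:RL} and~\ref{lem:LR} for the outer pieces, and the same observation that each returning right excursion of $\rho_2$ is a left-left run on $u_{i+1}\cdots u_n$ so that Lemma~\ref{lem:LL} confines it to $u_{i+1}\cdots u_{i+k}$. The extra justification you give for why each excursion inside $\rho_2$ does return across $b_i$ is a welcome detail the paper leaves implicit.
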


\begin{proof}
  Assume that $\rho$ visits $u_1\cdots u_{n-k}$ and let $u_i$ ($1\leq i\leq
  n-k$) be the left-most factor visited by $\rho$.  We split $\rho$ in
  $\rho_1\rho_2\rho_3$ (see Figure~\ref{fig:RR-run}) where
\begin{figure}[t]
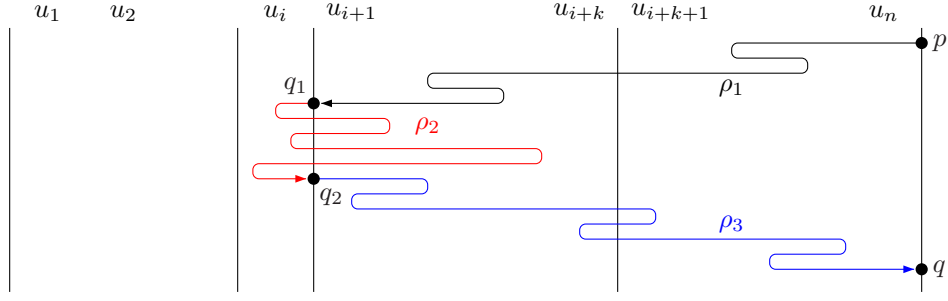

  \centering
  \gusepicture{gpic:RR-run}
  \caption{A right-right run $\rho_1\rho_2\rho_3$ where $\rho_1$ is 
  $k$-backward-progressing, $\rho_2$ is local to $u_i\cdots u_{i+k}$ and 
  $\rho_3$ is $k$-forward-progressing.}
  \label{fig:RR-run}
\end{figure}
  \begin{itemize}
    \item $\rho_1$ is the prefix of $\rho$, starting on the right of $w$ in
    state $p$ and going until the first time $\rho$ crosses from $u_{i+1}$ to
    $u_{i}$. Hence, $\rho_1$ is a run over $u_{i+1}\cdots u_{n}$ starting
    on the right in state $p$ and exiting on the left in some state
    $q_1$.  We have $\RL{p}{q_1}\in\varphi(u_{i+1}\cdots u_{n})$. By 
    Lemma~\ref{lem:RL} we deduce that $\rho_1$ is $k$-backward-progressing.
  
    \item Then, $\rho_2$ goes until the last crossing from $u_{i}$ to
    $u_{i+1}$.   
    
    \item Finally, $\rho_3$ is the remaining suffix of $\rho$.  Hence, $\rho_3$
    is a run over $u_{i+1}\cdots u_{n}$ starting on the left in some state $q_2$
    and exiting on the right in state $q$.  We have
    $\LR{q_2}{q}\in\varphi(u_{i+1}\cdots u_{n})$.  By Lemma~\ref{lem:LR} we
    deduce that $\rho_3$ is $k$-forward-progressing.
  \end{itemize}  
  It remains to show that $\rho_2$ stays inside $u_i\cdots u_{i+k}$.  Since $u_i$ is the
  left-most factor visited by $\rho$, we already know that $\rho_2$ does not visit
  $u_1\cdots u_{i-1}$.
  Similarly to Lemma~\ref{lem:RL}, any maximal subrun $\rho_2'$ of $\rho_2$ that does not
  visit $u_i$ is a $\leftleft$ run on $u_{i+1}\cdots u_n$ since $\rho_2$ starts and ends
  at the frontier between $u_i$ and $u_{i+1}$.  By Lemma~\ref{lem:LL}, the subrun
  $\rho_2'$ does not visit $u_{i+k+1}\cdots u_n$ and thus $\rho_2$ stays inside $u_i\cdots
  u_{i+k}$.
\end{proof}

\begin{example}\label{eg2dftstable}
  We illustrate the stabilising runs of an aperiodic 2DFT using the aperiodic 2DFT $\A$ in
  Figure~\ref{fig:eg-2dft}.  Figure~\ref{fig:run-eg} depicts the run of $\A$ on words in
  $b(a^*b)^{\geq3}$.  We use the set $Z_1$ computed in Example \ref{eg:stabilising}.
  Notice that a run of $\A$ on such words is 4-forward-progressing, as seen below.  For
  each $w=u_1 u_2 \cdots u_n$ with $n>3$, $u_1=b$ and $u_i\in a^*b$ for $2\leq i\leq n$, 
  we have $\varphi(w)=Z_1$ and one can see that
  \begin{itemize}
    \item each $(\leftleft, p, q) \in Z_1$, is such that, whenever the run of $\A$ starts
    at the left of $w$ in state $p$, it stays within $u_1 \cdots u_4$ and never visits
    $u_{5}\cdots u_n$ (as in Lemma~\ref{lem:LL}).
    
    \item each $(\leftright, p, q) \in Z_1$, is such that, whenever the run of $\A$ starts
    at the left of $w$ in state $p$ and reaches $u_{i+5}$, for $i\geq1$, it no longer
    visits any of $u_1\cdots u_{i}$ (4-forward-progressing as in Lemma~\ref{lem:LR}).
    
    \item each $(\rightright, p, q) \in Z_1$, is such that, whenever the run of $\A$
    starts at the right of $w$ in state $p$, it never visits $u_1\cdots u_{n-4}$ (the
    easy case of Lemma~\ref{lem:RR}).
  \end{itemize}
\end{example}

 \begin{figure}[t]
	\includegraphics[scale=.4]{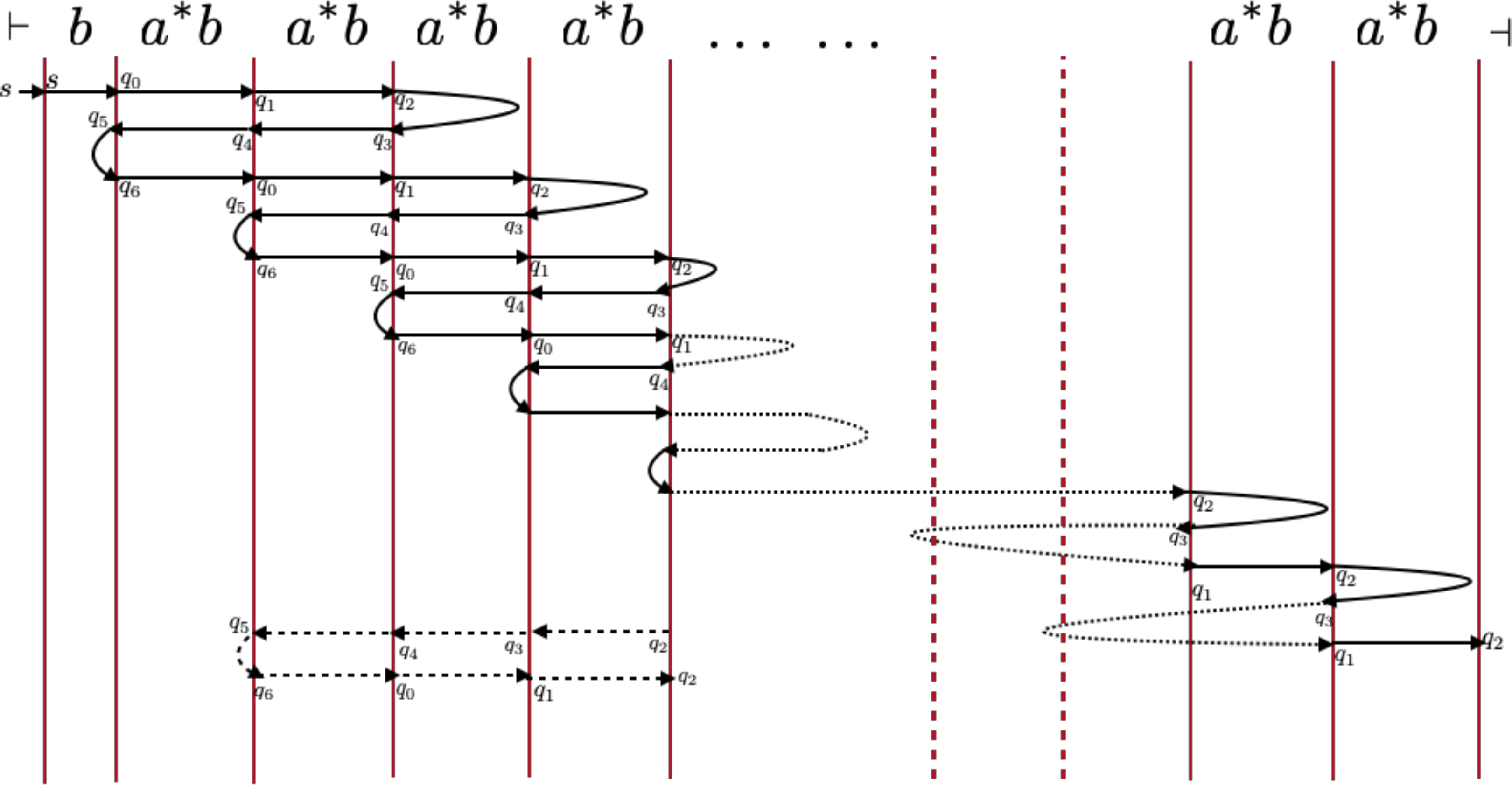}
	\caption{An accepting run on words in $b(a^*b)^{\geq3}$. Bottom left: $(\rightright,q_2,q_2)$.}
	\label{fig:run-eg} 
\end{figure}

\subsubsection{Computing \SDRTE}
In this section, we show how to construct \SDRTEs which are equivalent to
aperiodic 2DFTs.  Recall that 
$\varphi\colon(\Sigma\uplus\{\leftend,\rightend\})^{*}\to\TrMon$ is the
canonical surjective morphism to the transition monoid of the 2DFT $\A=(Q,\Sigma,\Gamma,\delta,\gamma,q_0,F)$. 
Given a regular expression $E$ and a monoid element $s\in\TrMon$, we let
$\lang{E,s}=\lang{E}\cap\varphi^{-1}(s)$.
The main construction of this section is given by Theorem \ref{thm:main}.

Recall that $\TrMon$ represents the transition monoid of a 2DFT, and 
consists of elements $\varphi(w)$ for all $w \in \Sigma^*$, where  each 
$\varphi(w)=\{(d,p,q)\mid \text{there is a } (d,p,q)\text{-run on }w\}
\subseteq\{\leftright,\leftleft,\rightright,\rightleft\}\times Q^{2}$. 
The elements of $\varphi(w)$ are called \emph{steps}, since 
any run of $w$ is obtained by a sequence of such steps. If the states $p, q$ in
a step $(d, p, q)$ are clear from the context, or is immaterial 
for the discussion we also refer to a step as a $d$ step, 
$d \in \{\leftleft, \rightright, \leftright, \rightleft\}$. 
In this case we also refer to a step $(d, p, q)$ as 
a $d$ step having $p$ as the starting state and $q$ as the final state.

\begin{theorem}\label{thm:main}
  Let $E$ be an unambiguous, stabilising, SD-regular expression over
  $\Sigma\uplus\{\leftend,\rightend\}$ and let $s\in\TrMon$.  For each step
  $x\in\{\leftright,\leftleft,\rightright,\rightleft\}\times Q^{2}$, we can
  construct an \SDRTE $\C{E}{s}x$ such that:
  \begin{enumerate}
    \item  $\C{E}{s}x=\bot$ when $x\notin s$, and otherwise
    
    \item $\dom{\sem{\C{E}{s}x}}=\lang{E,s}$ and for all words $w\in\lang{E,s}$,
    $\sem{\C{E}{s}x}(w)$ is the output produced by $\A$ running over $w$
    according to step $x$.
    
    When $w=\varepsilon$ and $s=\mathbf{1}=\varphi(\varepsilon)$ with
    $x\in\mathbf{1}$, this means $\sem{\C{E}{s}x}(\varepsilon)=\varepsilon$.
  \end{enumerate}  
\end{theorem}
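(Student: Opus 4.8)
The plan is to prove Theorem~\ref{thm:main} by structural induction on the unambiguous, stabilising, SD-regular expression $E$, constructing $\C{E}{s}{x}$ for every $s\in\TrMon$ and every step $x$ simultaneously. The base cases are immediate. For $E=\emptyset$ we set every $\C{E}{s}{x}=\bot$. For $E=\varepsilon$ we have $\lang{E,s}=\{\varepsilon\}$ only when $s=\mathbf 1$, and then $\C{E}{\mathbf 1}{x}=\SimpleFun{\varepsilon}{\varepsilon}$ for $x\in\mathbf 1$ (and $\bot$ otherwise). For a single letter $a$, a step $(d,p,q)\in\varphi(a)$ is realised by the single transition on $a$, so $\C{a}{\varphi(a)}{x}=\SimpleFun{a}{\gamma(p,a)}$, while $\C{a}{s}{x}=\bot$ for $s\neq\varphi(a)$ or $x\notin s$. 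For a union $E=E_1\cup E_2$, unambiguity gives $\lang{E,s}=\lang{E_1,s}\uplus\lang{E_2,s}$, so I take the (disjoint) unambiguous sum $\C{E}{s}{x}=\C{E_1}{s}{x}+\C{E_2}{s}{x}$. In each case the domain is forced to be exactly $\lang{E,s}$, and Lemma~\ref{lem-SDRTEdomAp} keeps us inside \SDRTEs.

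For the concatenation $E=E_1\cdot E_2$, I first use unambiguity to write $\lang{E,s}$ as the disjoint union, over all pairs $(s_1,s_2)$ with $s_1s_2=s$, of $\lang{E_1,s_1}\cdot\lang{E_2,s_2}$: each $w$ factors uniquely as $w=uv$, which fixes $(s_1,s_2)=(\varphi(u),\varphi(v))$. Fixing such a pair, the step $x$ on $w=uv$ is realised by a unique zigzag run that alternately traverses $u$ and $v$, crossing the $u$--$v$ boundary only finitely often (a repeated state/direction at the boundary would force a loop); this is precisely the step multiplication $s_1s_2$ in $\TrMon$, so the sequence of substeps $\rho_0\rho_1\cdots\rho_m$, each a step $y_j$ on $u$ or $z_j$ on $v$, is computable from $(s_1,s_2,x)$. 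The run output is the concatenation $\sem{\rho_0}\cdots\sem{\rho_m}$ in run order, which I realise with a Hadamard product of \emph{padded} Cauchy products: a $u$-substep $y$ contributes $\C{E_1}{s_1}{y}\cdot(\SimpleFun{\lang{E_2,s_2}}{\varepsilon})$ and a $v$-substep $z$ contributes $(\SimpleFun{\lang{E_1,s_1}}{\varepsilon})\cdot\C{E_2}{s_2}{z}$. Taking the Hadamard product of these in the order of the $\rho_j$, and then the unambiguous sum over all $(s_1,s_2)$, yields $\C{E}{s}{x}$; each Hadamard factor re-reads the whole word and outputs exactly the corresponding subrun, so the concatenation reproduces $\sem{\rho}$.

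The difficult case, and the main obstacle, is the Kleene star $E=E_1^{*}$. Here $L=\lang{E_1}$ is an SD prefix code and $X=\varphi(L)$ is $k$-stabilising, which places us exactly in the setting of Lemmas~\ref{lem:LL}--\ref{lem:RR}. For $w=u_1\cdots u_n\in L^{*}$ and a step $x$, these lemmas constrain the run: a $\leftleft$ run visits only $u_1\cdots u_k$ (Lemma~\ref{lem:LL}), so its output is a bounded product over the first $k$ factors, computed as in the concatenation case; a $\leftright$ run is $k$-forward-progressing (Lemma~\ref{lem:LR}); a $\rightleft$ run is $k$-backward-progressing (Lemma~\ref{lem:RL}); and a $\rightright$ run splits (Lemma~\ref{lem:RR}) into a backward-progressing part, a part local to a window $u_i\cdots u_{i+k}$, and a forward-progressing part. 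The crux is to show that the output of a $k$-forward-progressing run factors as a sliding-window concatenation $O_1O_2\cdots$, where each chunk $O_j$ is produced while the run advances through the window $u_j\cdots u_{j+k}$ and depends only on that window; granting this, $\sem{\rho}$ is exactly a chained star $\kstar{k'}{L}{C'}$ for a width $k'$ of the order of $k$ and an \SDRTE $C'$ on windows of $k'$ consecutive factors, where $C'$ is itself assembled from the inductive expressions $\C{E_1}{\cdot}{\cdot}$ via the concatenation construction applied to $E_1$ repeated $k'$ times.

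Establishing this window decomposition---that the run's local behaviour and its output stabilise and align with windows, with only finitely many boundary windows differing---is the heart of the argument and relies essentially on the stabilising lemmas above. The reverse star $\krstar{k'}{L}{C'}$ handles $\rightleft$ runs symmetrically, and the $\rightright$ case is obtained as a Cauchy product of the reverse-chained, local, and chained pieces supplied by Lemma~\ref{lem:RR}. In every case, short words (fewer than $k'$ factors) are peeled off and treated by bounded expressions as above, and the output function is finally restricted via $\SimpleFun{\lang{E_1^{*},s}}{\cdot}$ so that its domain is exactly $\lang{E_1^{*},s}$; the empty-word/identity step case falls out of the $n<k'$ treatment. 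I expect the verification that the sliding-window chunks are genuinely uniform (independent of $j$ away from the boundary) and concatenate in run order to be where the bulk of the technical work, and the use of $k$-stabilisation, is concentrated.
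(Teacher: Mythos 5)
Your outline follows the paper's proof essentially step for step: structural induction, the same base cases, the zigzag decomposition with padded Cauchy products glued by Hadamard products for concatenation, and the stabilising lemmas driving a sliding-window treatment of $E_1^{*}$ realised by a chained star. But the one step you explicitly defer --- that the window chunks $O_j$ are \emph{uniform}, i.e.\ computable by a single \SDRTE applied to each window --- is precisely where the paper's argument has content, and it is worth being concrete about how it closes. For a left-right step $x=\LR{p}{q}$, $k$-stabilisation gives $\varphi(u_1\cdots u_{k+i})=\varphi(w)=s$ for all $0\leq i\leq n-k$; since $\A$ is deterministic and $x\in s$, the prefix of the run up to the $i$-th window boundary is forced to exit that boundary in the \emph{same} state $q$ for every $i$. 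Hence each $\rho_i$ ($i\geq1$) is the zigzag on $(u_i\cdots u_{i+k-1},\,u_{i+k})$ started in $(q,\leftleft)$ followed by a $\LR{q_i}{q}$-run on $u_{i+k}$, which depends only on the monoid values of the window's factors; a single expression $f$ (a sum over those values) then computes every chunk, and $\kstar{(k+1)}{L}{f}$ concatenates them in run order. For right-left steps the boundary states are \emph{not} all equal, only determined by the monoid elements of the window, which is why the paper's $g$ carries extra parameters $p',q',q''$ --- your symmetric treatment silently assumes the easy situation.

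One further slip: you propose to assemble the right-right case as a \emph{Cauchy product} of the backward-progressing, local, and forward-progressing pieces of Lemma~\ref{lem:RR}. Those three subruns all traverse the same suffix $u_{i+1}\cdots u_n$ (resp.\ the window $u_i\cdots u_{i+k}$), so they overlap on the input and cannot be separated by a Cauchy product; the paper combines them with a Hadamard product of suitably padded expressions, with a single Cauchy product only to discard the untouched prefix $u_1\cdots u_{i-1}$. You use exactly this padding idiom correctly in the concatenation case, so the fix is routine, but as written that step would not type-check. Neither issue changes the overall architecture, which matches the paper's.
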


\begin{proof}
  The construction is by structural induction on $E$.

\paragraph*{Atomic expressions}
We first define $\C{E}{s}{x}$ 
when $E$ is an atomic expression, i.e., $\emptyset$, $\varepsilon$ or $a$ for $a\in\Sigma$.
\begin{itemize}
\item $E=\emptyset$: we simply set $\C{\emptyset}{s}{x}=\bot$, which is the 
nowhere defined function.

\item $E=\varepsilon$: when $s=\mathbf{1}$ and $x\in s$ then we set
$\C{\varepsilon}{s}{x}=\SimpleFun{\varepsilon}{\varepsilon}$ and otherwise we 
set $\C{\varepsilon}{s}{x}=\bot$.

\item $E=a\in\Sigma\uplus\{\leftend,\rightend\}$: again, we set
$\C{a}{s}{x}=\bot$ if $s\neq\varphi(a)$ or $x\notin s$.  Otherwise, there are
two cases.  Either $x\in\{\LR{p}{q},\RR{p}{q}\}$ for some states $p,q$ such that
$\delta(p,a)=(q,+1)$, or $x\in\{\RL{p}{q},\LL{p}{q}\}$ for some states $p,q$
with $\delta(p,a)=(q,-1)$.  In both cases the output produced is $\gamma(p,a)$
and we set $\C{a}{s}{x}=\SimpleFun{a}{\gamma(p,a)}$.
\end{itemize}

\paragraph*{Disjoint union}
If the expression is $E\cup F$ with $\lang{E}$ and $\lang{F}$ disjoint, then we simply set 
$\C{E\cup F}{s}{x}=\C{E}{s}{x}+\C{F}{s}{x}$.

\paragraph*{Unambiguous concatenation $E\cdot F$}\label{sec:unambig-concat}

Here, we suppose that we have \SDRTEs for $\C{E}{s}{x}$ and $\C{F}{s}{x}$ for
all $s$ in $\TrMon$ and all steps
$x\in\{\leftright,\leftleft,\rightright,\rightleft\}\times Q^{2}$.  We show how
to construct \SDRTEs for $\C{E\cdot F}{s}{x}$, assuming that the concatenation
$\lang{E}\cdot\lang{F}$ is unambiguous.

A word $w\in\lang{E\cdot F}$ has a unique factorization $w=uv$ with
$u\in\lang{E}$ and $v\in\lang{F}$.  Let $s=\varphi(u)$ and $t=\varphi(v)$.  A
run $\rho$ over $w$ is obtained by stitching together runs over $u$ and runs
over $v$ as shown in Figure~\ref{fig:LR-concat}.  In the left figure, the run over
$w$ follows step $x=\LR{p}{q}$ starting on the left in state $p$ and exiting on
the right in state $q$.  The run $\rho$ splits as
$\rho_0\rho_1\rho_2\rho_3\rho_4\rho_5$ as shown in the figure.  The output of
the initial part $\rho_0$ is computed by $\C{E}{s}{\LR{p}{p_1}}$ over $u$ and
the output of the final part $\rho_5$ is computed by $\C{F}{t}{\LR{p_5}{q}}$
over $v$.  We focus now on the internal part $\rho_1\rho_2\rho_3\rho_4$ which
consists of an alternate sequence of left-left runs over $v$ and right-right
runs over $u$.  The corresponding sequence of steps $x_1=\LL{p_1}{p_2}\in t$,
$x_2=\RR{p_2}{p_3}\in s$, $x_3=\LL{p_3}{p_4}\in t$ and $x_4=\RR{p_4}{p_5}\in s$
depends only on $s=\varphi(u)$ and $t=\varphi(v)$.

\begin{figure}[tb]
  \centering
  \gusepicture{gpic:LR-concat}
  \hfill
  \gusepicture{gpic:LL-concat}
  \caption{Decomposition of a $\LR{p}{q}$-run and a $\LL{p}{q}$-run over the product $w=uv$.}
  \label{fig:LR-concat}
\end{figure}

These internal zigzag runs will be frequently used when dealing with 
concatenation or Kleene star. 
They alternate left-left ($\leftleft$) steps on the right word $v$ and 
right-right ($\rightright$) steps on the left word $u$. 
They may start with a $\leftleft$-step or a $\rightright$-step. The sequence of 
steps in a \emph{maximal} zigzag run is entirely determined by the monoid 
elements $s=\varphi(u)$, $t=\varphi(v)$, the starting step 
$d\in\{\leftleft,\rightright\}$ and the starting state $p'$ of step $d$. The 
final step of this \emph{maximal} sequence is some 
$d'\in\{\leftleft,\rightright\}$ and reaches some state $q'$. We write 
$Z_{s,t}(p',d)=(d',q')$. For instance, on the left of Figure~\ref{fig:LR-concat} 
we get $Z_{s,t}(p_1,\leftleft)=(\rightright,p_5)$ whereas on the right of  
Figure~\ref{fig:LR-concat} we get $Z_{s,t}(p_1,\leftleft)=(\leftleft,p_4)$. By 
convention, if the sequence of zigzag steps is empty then we define
$Z_{s,t}(p,\leftleft)=(\rightright,p)$ and $Z_{s,t}(p,\rightright)=(\leftleft,p)$.

\begin{lemma}\label{lem:InternalRun}
  We use the above notation.  We can construct  \SDRTEs
  $\ZF{E,s}{F,t}{p,d}$ for $p\in Q$ and $d\in\{\leftleft,\rightright\}$ such
  that $\dom{\sem{\ZF{E,s}{F,t}{p,d}}}=\lang{E,s}\lang{F,t}$ and
  for all $u\in\lang{E,s}$ and $v\in\lang{F,t}$ the value
  $\sem{\ZF{E,s}{F,t}{p,d}}(uv)$ is the output produced by the \emph{internal}
  zigzag run of $\A$ over $(u,v)$ following the maximal sequence of steps 
  starting in state $p$ with a $d$-step.
\end{lemma}

\begin{proof}
  We first consider the case $d={\leftleft}$
  and $Z_{s,t}(p,\leftleft)=(\rightright,q)$ for some $q\in Q$
  which is illustrated on the left of Figure~\ref{fig:LR-concat}.  Since 
  $\A$ is deterministic, there is a unique maximal sequence of steps (with 
  $n\geq0$, $p_1=p$ and $p_{2n+1}=q$):
  $x_1=\LL{p_1}{p_2}\in t$, $x_2=\RR{p_2}{p_3}\in s$, \ldots,
  $x_{2n-1}=\LL{p_{2n-1}}{p_{2n}}\in t$, $x_{2n}=\RR{p_{2n}}{p_{2n+1}}\in s$.
  The zigzag run $\rho$ following this sequence of steps over $uv$ splits as
  $\rho_1\rho_2\cdots\rho_{2n}$ where $\rho_{2i}$ is the unique
  run on $u$ following step $x_{2i}$ and $\rho_{2i+1}$ is the unique run on $v$
  following step $x_{2i+1}$.  The output of these runs are given by
  $\sem{\C{E}{s}{x_{2i}}}(u)$ and $\sem{\C{F}{t}{x_{2i+1}}}(v)$.  When $n=0$ the
  zigzag run $\rho$ is empty and we simply set
  $\ZF{E,s}{F,t}{p,\leftleft}=\SimpleFun{(\lang{E,s}\lang{F,t})}{\varepsilon}$.
  Assume now that $n>0$. The required  \SDRTE computing the 
  output of 
  $\rho$ can be defined as
  \begin{align*}
    \ZF{E,s}{F,t}{p,\leftleft} ={}  &
    \big((\SimpleFun{\lang{E,s}}{\varepsilon})\cdot \C{F}{t}{x_1}\big)  
    \odot \big(\C{E}{s}{x_2} \cdot \C{F}{t}{x_3}\big) \odot \cdots \odot \\
    &  \big(\C{E}{s}{x_{2n-2}} \cdot \C{F}{t}{x_{2n-1}}\big) \odot
    \big( \C{E}{s}{x_{2n}}\cdot(\SimpleFun{\lang{F,t}}{\varepsilon})\Big) \,.
  \end{align*}
  Notice that each Cauchy product in this expression is unambiguous since the 
  product $\lang{E}\cdot\lang{F}$ is unambiguous.

  The other cases can be handled similarly.  For instance, when
  $Z_{s,t}(p,\leftleft)=(\leftleft,q)$ as on the right of
  Figure~\ref{fig:LR-concat}, the sequence of steps ends with
  $x_{2n-1}=\LL{p_{2n-1}}{p_{2n}}\in t$ with $n>0$ and $p_{2n}=q$ and the zigzag
  run $\rho$ is $\rho_1\rho_2\cdots\rho_{2n-1}$.  The  \SDRTE
  $\ZF{E,s}{F,t}{p,\leftleft}$ is given by
  $$
    \big((\SimpleFun{\lang{E,s}}{\varepsilon})\cdot \C{F}{t}{x_1}\big)  
    \odot \big(\C{E}{s}{x_2} \cdot \C{F}{t}{x_3}\big) \odot \cdots \odot 
    \big(\C{E}{s}{x_{2n-2}} \cdot \C{F}{t}{x_{2n-1}}\big) \,.
  $$
  The situation is symmetric for $\ZF{E,s}{F,t}{p,\rightright}$: the sequence 
  starts with a right-right step $x_2=\RR{p_2}{p_3}\in s$ with $p=p_2$ and we 
  obtain the  \SDRTE simply by removing the
  first factor $\big((\SimpleFun{\lang{E,s}}{\varepsilon})\cdot 
  \C{F}{t}{x_1}\big)$ in the Hadamard products above.
\end{proof}

We come back to the definition of the  \SDRTEs for $\C{E\cdot
F}{r}{x}$ with $r\in\TrMon$ and $x\in r$.  As explained above, the output
produced by a run $\rho$ following step $x$ over a word $w=uv$ with
$u\in\lang{E,s}$, $v\in\lang{F,t}$ and $r=st$ consists of an initial part,
a zigzag internal part, and a final part.  
There are four cases
depending on the step $x$.
\begin{itemize}
  \item $x=\LL{p}{q}$. Either the run $\rho$ stays inside $u$ (zigzag part 
  empty) or there is a zigzag internal part starting with $(p',\leftleft)$ such 
  that $\LR{p}{p'}\in\varphi(u)$ and ending with $(\leftleft,q')$ such that 
  $\RL{q'}{q}\in\varphi(u)$.
  Thus we define the  \SDRTE $\C{E\cdot F}{r}{x}$ as
  \begin{align*}
    \sum_{st=r\mid x\in s}
    &\hspace{-2mm}\C{E}{s}{x}\cdot \big(\SimpleFun{\lang{F,t}}{\varepsilon}\big) +{} \\ 
    & \hspace{-18mm}\sum_{\substack{st=r,\, (p',q')\,\mid \\
    Z_{s,t}(p',\leftleft)=(\leftleft,q')}} \hspace{-8mm}
    \big(\C{E}{s}{\LR{p}{p'}}\cdot (\SimpleFun{\lang{F,t}}{\varepsilon})\big) 
    \odot \ZF{E,s}{F,t}{p',\leftleft} \odot 
    \big(\C{E}{s}{\RL{q'}{q}}\cdot (\SimpleFun{\lang{F,t}}{\varepsilon})\big)
  \end{align*}
  Notice that all Cauchy products are unambiguous since the concatenation
  $\lang{E}\cdot\lang{F}$ is unambiguous.  The sums are also unambiguous.
  Indeed, a word $w\in\lang{E\cdot F,r}$ has a unique factorization $w=uv$ with
  $u\in\lang{E}$ and $v\in\lang{F}$.  Hence $s=\varphi(u)$ and $t=\varphi(v)$
  are uniquely determined and satisfy $st=r$.  Then, either $x\in s$ and $w$ is
  only in the domain of $\C{E}{s}{x}\cdot
  \big(\SimpleFun{\lang{F,t}}{\varepsilon}\big)$.  Or there is a unique $p'$
  with $\LR{p}{p'}\in s$ and a unique $q'$ with
  $Z_{s,t}(p',\leftleft)=(\leftleft,q')$ and $\RL{q'}{q}\in s$.
  Notice that if $\LR{p}{p'}\notin s$ then $\C{E}{s}{\LR{p}{p'}}=\bot$ and
  similarly if $\RL{q'}{q}\notin s$. Hence we could have added the condition 
  $\LR{p}{p'},\RL{q'}{q}\in s$ to the second sum, but do not, to reduce clutter. 
  
  \item $x=\LR{p}{q}$. Here the run must cross from left to right. 
  Thus we define the  \SDRTE $\C{E\cdot F}{r}{x}$ as
  $$
  \hspace{-6mm}\sum_{\substack{st=r,\, (p',q')\,\mid \\ 
  Z_{s,t}(p',\leftleft)=(\rightright,q')}} \hspace{-8mm}
    \big(\C{E}{s}{\LR{p}{p'}}\cdot (\SimpleFun{\lang{F,t}}{\varepsilon})\big) 
    \odot \ZF{E,s}{F,t}{p',\leftleft} \odot 
    \big((\SimpleFun{\lang{E,s}}{\varepsilon})\cdot \C{F}{t}{\LR{q'}{q}}\big)
  $$
  
  \item $x=\RL{p}{q}$. This case is similar.
  The  \SDRTE $\C{E\cdot F}{r}{x}$ is
  $$
  \hspace{-6mm}\sum_{\substack{st=r,\, (p',q')\,\mid \\ 
  Z_{s,t}(p',\rightright)=(\leftleft,q')}} \hspace{-8mm}
    \big((\SimpleFun{\lang{E,s}}{\varepsilon})\cdot \C{F}{t}{\RL{p}{p'}}\big)
    \odot \ZF{E,s}{F,t}{p',\rightright} \odot 
    \big(\C{E}{s}{\RL{q'}{q}}\cdot (\SimpleFun{\lang{F,t}}{\varepsilon})\big) 
  $$

  \item $x=\RR{p}{q}$.  Finally, for right-right runs, the 
  \SDRTE $\C{E\cdot F}{r}{x}$ is
  \begin{align*}
    \sum_{st=r\mid x\in t}
    &\hspace{-2mm} (\SimpleFun{\lang{E,s}}{\varepsilon})\cdot\C{F}{t}{x} +{} \\ 
    & \hspace{-17mm}\sum_{\substack{st=r,\, (p',q')\,\mid \\ 
    Z_{s,t}(p',\rightright)=(\rightright,q')}} \hspace{-8mm}
    \big((\SimpleFun{\lang{E,s}}{\varepsilon})\cdot\C{F}{t}{\RL{p}{p'}}\big) 
    \odot \ZF{E,s}{F,t}{p',\rightright} \odot 
    \big((\SimpleFun{\lang{E,s}}{\varepsilon})\cdot\C{F}{t}{\LR{q'}{q}}\big)
  \end{align*}
\end{itemize}

\begin{example}\label{eg:product}	
  We go back to our running example of the aperiodic 2DFT $\A$ in Figure~\ref{fig:eg-2dft}
  and illustrate the unambiguous concatenation.  Consider $F=a^{+}b$, $G=F^{2}$ and
  $E=F^{4}=G^{2}$.  We know from Example~\ref{eg:stabilising} that $\varphi(F)=Y_2$,
  $\varphi(G)=Y_4$ and $\varphi(E)=Z_2$.  We compute below some steps of $Y_2$, $Y_4$ and
  $Z_2$.
  
  First, we look at some steps in $F$ for which the \SDRTE are obtained directly by 
  looking at the automaton $\A$ in Figure~\ref{fig:eg-2dft} (we cannot give more details 
  here since we have not explained yet how to deal with Kleene-plus, hence we rely on 
  intuition for these steps). 
  \begin{align*}
    \C{F}{Y_2}{\LR{q_0}{q_1}} &= (\SimpleFun{a^{+}b}{\varepsilon})
    &
    \C{F}{Y_2}{\LR{q_1}{q_2}} &= (\SimpleFun{a^{+}b}{\varepsilon})
    \\
    \C{F}{Y_2}{\LL{q_2}{q_3}} &= (\SimpleFun{a}{a})^{+} \cdot (\SimpleFun{b}{b})
    &
    \C{F}{Y_2}{\LR{q_6}{q_0}} &= (\SimpleFun{a}{a})^{+} \cdot (\SimpleFun{b}{b})
    \\
    \C{F}{Y_2}{\RL{q_3}{q_4}} &= (\SimpleFun{a^{+}b}{\varepsilon})
    &
    \C{F}{Y_2}{\RL{q_4}{q_5}} &= (\SimpleFun{a^{+}b}{\varepsilon})
    \\
    \C{F}{Y_2}{\RR{q_5}{q_6}} &= (\SimpleFun{a^{+}b}{\varepsilon}) \,.
  \end{align*}
  Next, we compute some steps using the unambiguous concatenation $G=F\cdot F$. We start 
  with step $\LR{q_0}{q_2}$ for which the zigzag part is empty: 
  $\ZF{F,Y_2}{F,Y_2}{q_1,\leftleft}=\SimpleFun{F^{2}}{\varepsilon}$. Hence, we get using 
  the formula in the proof above 
  \begin{align*}
    \C{G}{Y_4}{\LR{q_0}{q_2}} &= 
    \big( \C{F}{Y_2}{\LR{q_0}{q_1}} \cdot (\SimpleFun{F}{\varepsilon}) \big) 
    \odot (\SimpleFun{F^{2}}{\varepsilon}) \odot
    \big( (\SimpleFun{F}{\varepsilon}) \cdot \C{F}{Y_2}{\LR{q_1}{q_2}} \big)
    \\
    \intertext{and after some simplifications}
    \C{G}{Y_4}{\LR{q_0}{q_2}} &= 
    \C{F}{Y_2}{\LR{q_0}{q_1}} \cdot \C{F}{Y_2}{\LR{q_1}{q_2}} 
    = (\SimpleFun{(a^{+}b)^{2}}{\varepsilon}) \,.
  \end{align*}
  Similarly, we can compute the following steps
  \begin{align*}
    \C{G}{Y_4}{\LR{q_6}{q_1}} &= 
    \C{F}{Y_2}{\LR{q_6}{q_0}} \cdot \C{F}{Y_2}{\LR{q_0}{q_1}} 
    = (\SimpleFun{a}{a})^{+} \cdot (\SimpleFun{b}{b}) \cdot 
    (\SimpleFun{a^{+}b}{\varepsilon}) \,.
  \end{align*}
  For step $\LL{q_2}{q_3}$, the run only visits the first factor 
  since $\LL{q_2}{q_3}\in Y_2$:
  \begin{align*}
    \C{G}{Y_4}{\LL{q_2}{q_3}} &= 
    \C{F}{Y_2}{\LL{q_2}{q_3}} \cdot (\SimpleFun{F}{\varepsilon}) 
    = (\SimpleFun{a}{a})^{+} \cdot (\SimpleFun{b}{b}) \cdot 
    (\SimpleFun{a^{+}b}{\varepsilon}) \,.
  \end{align*}
  Now, for step $\LL{q_1}{q_4}$, the zigzag part is reduced to step $\LL{q_2}{q_3}$ and 
  we get:
  \begin{align*}
    \C{G}{Y_4}{\LL{q_1}{q_4}} &= 
    \big( \C{F}{Y_2}{\LR{q_1}{q_2}} \cdot (\SimpleFun{F}{\varepsilon}) \big) \odot
    \big( (\SimpleFun{F}{\varepsilon}) \cdot \C{F}{Y_2}{\LL{q_2}{q_3}} \big) 
    \\ & \hspace{20mm}{}\odot
    \big( \C{F}{Y_2}{\RL{q_3}{q_4}} \cdot (\SimpleFun{F}{\varepsilon}) \big) 
    \\
    &= (\SimpleFun{a^{+}b}{\varepsilon}) \cdot (\SimpleFun{a}{a})^{+} \cdot 
    (\SimpleFun{b}{b}) \,.
  \end{align*}
  Similarly, we compute
  \begin{align*}
    \C{G}{Y_4}{\RR{q_4}{q_0}} &= 
    \big( (\SimpleFun{F}{\varepsilon}) \cdot \C{F}{Y_2}{\RL{q_4}{q_5}} \big) \odot
    \big( \C{F}{Y_2}{\RR{q_5}{q_6}} \cdot (\SimpleFun{F}{\varepsilon}) \big) 
    \\ & \hspace{20mm}{}\odot
    \big( (\SimpleFun{F}{\varepsilon}) \cdot \C{F}{Y_2}{\LR{q_6}{q_0}} \big) 
    \\
    &= (\SimpleFun{a^{+}b}{\varepsilon}) \cdot (\SimpleFun{a}{a})^{+} \cdot 
    (\SimpleFun{b}{b}) \,.
  \end{align*}
\begin{figure}[t]
\begin{center}
\includegraphics[scale=0.7]{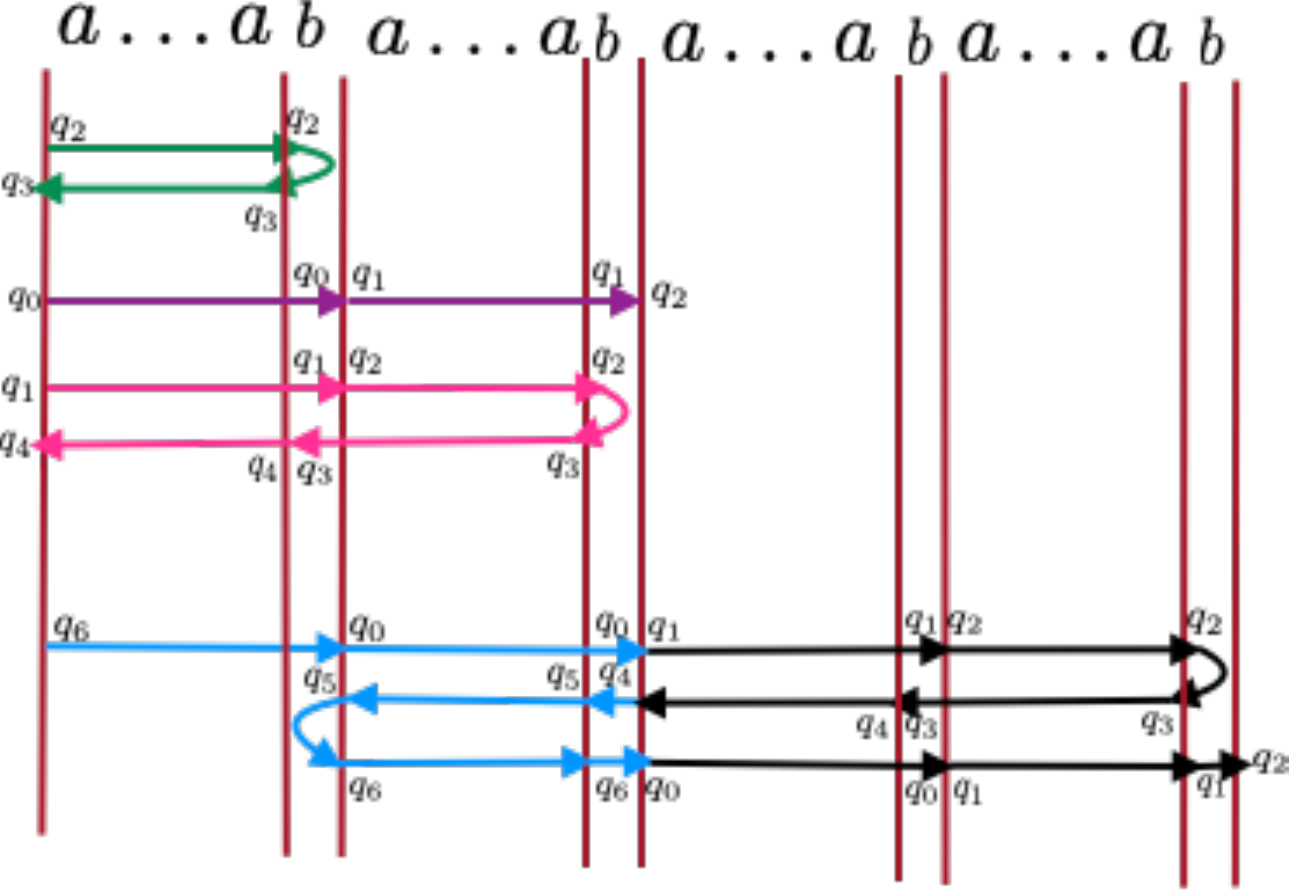}	
\end{center}
\caption{Illustration for Example \ref{eg:product}.}
\end{figure}

Finally, we consider the unambiguous decomposition $E=G\cdot G$ in order to compute
$\C{E}{Z_2}{x}$ where $x=\LR{q_6}{q_2}$.  Notice that $\LR{q_6}{q_1}, \LL{q_1}{q_4},
\RR{q_4}{q_0}, \LR{q_0}{q_2}\in Y_4=\varphi(G)$.  Hence,
$Z_{Y_4,Y_4}(q_1,\leftleft)=(\rightright,q_0)$ and applying the formulas in the proof
above we obtain
\begin{align*}
  \C{E}{Z_2}{x} &= 
  \big( \C{G}{Y_4}{\LR{q_6}{q_1}}\cdot (\SimpleFun{G}{\varepsilon}) \big) 
  \odot \ZF{G,Y_4}{G,Y_4}{q_1,\leftleft} \odot 
  \big( (\SimpleFun{G}{\varepsilon}) \cdot \C{G}{Y_4}{\LR{q_0}{q_2}} \big) 
  \\
  \ZF{G,Y_4}{G,Y_4}{q_1,\leftleft} &= 
  \big( (\SimpleFun{G}{\varepsilon}) \cdot \C{G}{Y_4}{\LL{q_1}{q_4}} \big) \odot
  \big( \C{G}{Y_4}{\RR{q_4}{q_0}}\cdot (\SimpleFun{G}{\varepsilon}) \big) \,.
\end{align*}
Putting everything together and still after some simplifications, we get
\begin{align*}
  \C{E}{Z_2}{x} &= 
  \big( (\SimpleFun{a}{a})^{+} \cdot (\SimpleFun{b}{b}) \cdot (\SimpleFun{(a^{+}b)^{3}}{\varepsilon}) \big) 
  \odot
  \big( (\SimpleFun{(a^{+}b)^{3}}{\varepsilon}) \cdot (\SimpleFun{a}{a})^{+} \cdot (\SimpleFun{b}{b}) \big) 
  \\
  &\hspace{20mm} {}\odot   
  \big( (\SimpleFun{a^{+}b}{\varepsilon}) \cdot (\SimpleFun{a}{a})^{+} \cdot 
  (\SimpleFun{b}{b}) \cdot (\SimpleFun{(a^{+}b)^{2}}{\varepsilon}) \big)
  \,.
\end{align*}
For instance, applying $\C{E}{Z_2}{x}$ to $w=aba^{2}ba^{3}ba^{4}b$ we obtain
$aba^{4}ba^{2}b$. \qedhere

\end{example}

\paragraph*{SD-Kleene Star}\label{sec:SD-star}
The most interesting case is when $E=F^{*}$. Let 
$L=\lang{F}\subseteq\Sigma^{*}$. Since $E$ is a stabilising SD-regular expression, $L$ 
is an aperiodic prefix code of bounded synchronisation delay, and $X=\varphi(L)$ is 
$k$-stabilising for some $k>0$. Hence, we may apply the results of 
Section~\ref{sec:run-stable}.

By induction, we suppose that we have  \SDRTEs 
$\C{F}{s}{x}$ for all $s$ in $\TrMon$ and steps $x$.  Since $L=\lang{F}$ is
a code, for each fixed $\ell>0$, the expression $F^{\ell}=F\cdot F \cdots F$ is
an unambiguous concatenation.  Hence, from the proof above for the unambiguous 
concatenation,
we may also assume that we have \SDRTEs $\C{F^{\ell}}{s}{x}$ for all $s\in\TrMon$ and
steps $x$. Similarly, we have \SDRTEs for $\ZF{F^{k},s}{F,t}-$ and $\ZF{F,s}{F^{k},t}-$.
Notice that $F^{0}$ is equivalent to $\varepsilon$ hence we have
$\C{F^{0}}{s}{x}=\C{\varepsilon}{s}{x}$.

We show how to construct  \SDRTEs $\C{E}{s}{x}$ for $E=F^{*}$.
There are four cases, which are dealt with below, depending on the step $x$.

We fix some notation common to all four cases.  Fix some $w\in\lang{E,s}=
L^{*}\cap\varphi^{-1}(s)$ and let $w=u_1\cdots u_n$ be the unique
factorization of $w$ with $n\geq0$ and $u_i\in L$ for $1\leq i\leq n$.  For a
step $x\in s$, we denote by $\rho$ the unique run of $\A$ over $w$ following
step $x$.

\begin{itemize}
  \item $x=\LL{p}{q}\in s$
  
  The easiest case is for left-left steps. 
  If $n<k$ then the output of $\rho$ is $\sem{\C{F^{n}}{s}x}(w)$. Notice that 
  here, $\C{F^{0}}{s}{x}=\bot$ since $x\notin\mathbf{1}=\varphi(\varepsilon)$. Now, if
  $n\geq k$ then, by Lemma~\ref{lem:LL}, the run $\rho$ stays inside $u_1\cdots
  u_k$.  We deduce that the output of $\rho$ is $\sem{\C{F^{k}}{s}x}(u_1\cdots
  u_k)$.  Therefore, we define
  \begin{align*}
    \C{E}{s}x & = \Big( \sum_{n<k} \C{F^{n}}{s}x \Big) + 
    \Big(\C{F^{k}}{s}x\cdot(\SimpleFun{F^{*}}{\varepsilon})\Big)
  \end{align*}
  Notice that the sums are unambiguous since $L=\lang{F}$ is a code. The 
  concatenation $F^{k}\cdot F^{*}$ is also unambiguous.
  
  \item $x=\LR{p}{q}\in s$
  
  We turn now to the more interesting left-right steps. 
  Again, if $n<k$ then the output of $\rho$ is $\sem{\C{F^{n}}{s}x}(w)$.  Assume
  now that $n\geq k$.  We apply Lemma~\ref{lem:LR} to deduce that the run $\rho$
  is $k$-forward-progressing.  See Figure~\ref{fig:LR-2progressing} for a sample
  run which is $2$-forward-progressing.  We split $\rho$ in
  $\rho_0\rho_1\cdots\rho_{n-k}$ where $\rho_0$ is the prefix of $\rho$ going
  until the first crossing from $u_k$ to $u_{k+1}$.  Then, $\rho_1$ goes until
  the first crossing from $u_{k+1}$ to $u_{k+2}$.  Continuing in the same way,
  for $1\leq i<n-k$, $\rho_i$ goes until the first crossing from $u_{k+i}$ to
  $u_{k+i+1}$.  Finally, $\rho_{n-k}$ is the remaining suffix, going until the
  run exits from $w$ on the right.  Since the run $\rho$ is $k$-forward
  progressing, we deduce that $\rho_i$ does not go back to $u_1\cdots u_{i-1}$,
  hence it stays inside $u_i\cdots u_{i+k}$, starting on the left of $u_{i+k}$
  and exiting on the right of $u_{i+k}$.

  Since $X=\varphi(L)$ is $k$-stabilising, we have $\varphi(u_1\cdots
  u_{k+i})=\varphi(w)$ for all $0\leq i\leq n-k$.  Now, $\rho_0\cdots\rho_i$ is a
  run on $u_1\cdots u_{k+i}$ starting on the left in state $p$ and exiting on the
  right.  Since $\A$ is deterministic and
  $x=\LR{p}{q}\in\varphi(w)=\varphi(u_1\cdots u_{k+i})$ we deduce that $\rho_i$
  exits on the right of $u_{k+i}$ in state $q$.  In particular, $\rho_0$ is a run
  on $u_1\cdots u_k$ starting on the left in state $p$ and exiting on the right in
  state $q$.  Moreover, for each $1\leq i\leq n-k$, $\rho_i$ is the
  concatenation of a zigzag internal run over $(u_i\cdots u_{i+k-1},u_{i+k})$
  starting with $(q,\leftleft)$ ending with
  $(\rightright,q_i)=Z_{s',s''}(q,\leftleft)$ where $s'=\varphi(u_i\cdots
  u_{i+k-1})$, $s''=\varphi(u_{i+k})$ and a $\LR{q_i}{q}$ run over $u_{i+k}$.
  
  \begin{figure}[t]
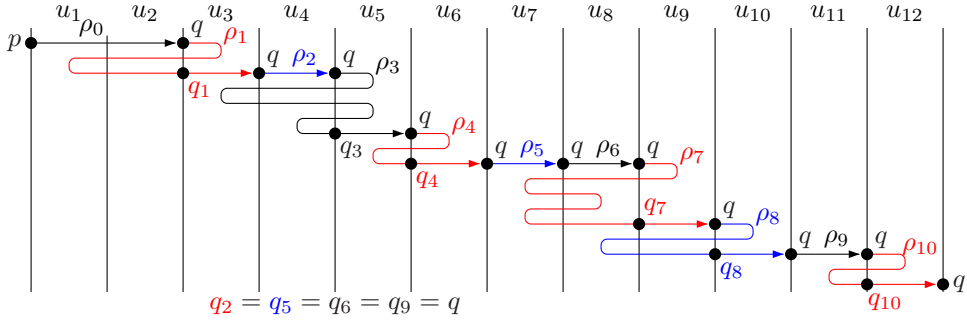

    \centering
    \gusepicture{gpic:LR-run2}
    \caption{A left-right run which is 2-forward-progressing.}
    \label{fig:LR-2progressing}
  \end{figure}
  
  Let $v_i$ be the output produced by $\rho_i$ for $0\leq i\leq n-k$. 
  Then, using Lemma~\ref{lem:InternalRun}, the productions $v_i$ with $0<i\leq n-k$ are
  given by the \SDRTE $f$ defined as
  \begin{align*}
    f=\hspace{-7mm}\sum_{\substack{s',s'',q'\,\mid \\ 
    (\rightright,q')=Z_{s',s''}(q,\leftleft)}}\hspace{-7mm}
    \ZF{F^{k},s'}{F,s''}{q,\leftleft} \odot
    \big((\SimpleFun{\lang{F^{k},s'}}{\varepsilon}) \cdot \C{F}{s''}{\LR{q'}{q}}\big)
  \end{align*}
  Then the product $v_1\cdots v_{n-k}$ is produced by the $(k+1)$-chained
  Kleene-star $\kstar{(k+1)}{L}{f}(w)$.  From the above discussion, we also deduce
  that $v_0=\sem{\C{F^{k}}{s}x}(u_1\cdots u_k)$.  Therefore, we define
  \begin{align*}
    \C{E}{s}x & = \Big( \sum_{n<k} \C{F^{n}}{s}x \Big) + 
    \Big(\big(\C{F^{k}}{s}x\cdot(\SimpleFun{F^{*}}{\varepsilon})\big)
    \odot \kstar{(k+1)}{F}{f}\Big)
  \end{align*}

  \item $x=\RL{p}{q}\in s$
  
  The case of right-left runs is almost symmetric to the
  case of left-right runs.    
  Again, if $n<k$ then the
  output of $\rho$ is $\sem{\C{F^{n}}{s}x}(w)$.  Assume now that $n\geq k$.  We
  apply Lemma~\ref{lem:RL} to deduce that the run $\rho$ is
  $k$-backward-progressing.  As illustrated in Figure~\ref{fig:RL-2progressing},
  we split $\rho$ in $\rho_{n-k+1}\cdots\rho_2\rho_1$ where $\rho_{n-k+1}$ is
  the prefix of $\rho$ going until the first crossing from $u_{n-k+1}$ to
  $u_{n-k}$.  Then, $\rho_{n-k}$ goes until the first crossing from $u_{n-k}$ to
  $u_{n-k-1}$.  Continuing in the same way, for $n-k>i>1$, $\rho_i$ goes until
  the first crossing from $u_{i}$ to $u_{i-1}$.  Finally, $\rho_{1}$ is the
  remaining suffix, going until the run exits from $u_1$ on the left.  Since the
  run $\rho$ is $k$-backward progressing, we deduce that, for $1\leq i\leq n-k$,
  the run $\rho_i$ does not go back to $u_{i+k+1}\cdots u_{n}$. Hence it is the
  concatenation of a zigzag internal run over $(u_{i},u_{i+1}\cdots u_{i+k})$,
  starting with some $(q_i,\rightright)$ and exiting with 
  $(\leftleft,q'_i)=Z_{s',s''}(q_i,\rightright)$ where $s'=\varphi(u_i)$, 
  $s''=\varphi(u_{i+1}\cdots u_{i+k})$, and a $\RL{q'_i}{q_{i-1}}$-run over 
  $u_i$ (see again Figure~\ref{fig:RL-2progressing}).
  Let $v_i$ be the output produced by $\rho_i$ for $1\leq i\leq n-k+1$.  The
  output produced by $\rho$ is $v=v_{n-k+1}\cdots v_2v_1$.  Now, the situation
  is slightly more complicated than for left-right runs where we could prove
  that $q_i=q$ for each $i$.  Instead, let us remark that $q_i$ is the unique
  state (by determinacy of $\A$) such that there is a run over $u_{i+1}\cdots
  u_n$ following step $\RL{p}{q_i}$.  But since $X$ is $k$-stabilising, we know that
  $\varphi(u_{i+1}\cdots u_n)=\varphi(u_{i+1}\cdots u_{i+k})$. 
  Then, given $s'=\varphi(u_i)$ and
  $s''=\varphi(u_{i+1}\cdots u_{i+k})$, we get that $q_i$ is the unique state
  such that $\RL{p}{q_i}\in s''$ and $q_{i-1}$ the one such that
  $\RL{p}{q_{i-1}}\in s's''$.  Thus we define the function $g$ generating the
  $v_i$ with $1\leq i\leq n-k$ by
  $$
  g=\hspace{-7mm}\sum_{\substack{s',s'',p',q',q''\,\mid\, \RL{p}{p'}\in s'',\\ 
    (\leftleft,q')=Z_{s',s''}(p',\rightright),\,\RL{q'}{q''}\in s'}}\hspace{-7mm}
    \ZF{F,s'}{F^{k},s''}{p',\rightright} \odot
    \Big( \C{F}{s'}{\RL{q'}{q''}} \cdot (\SimpleFun{\lang{F^{k},s''}}{\varepsilon})\Big)
  $$

  \begin{figure}[t]
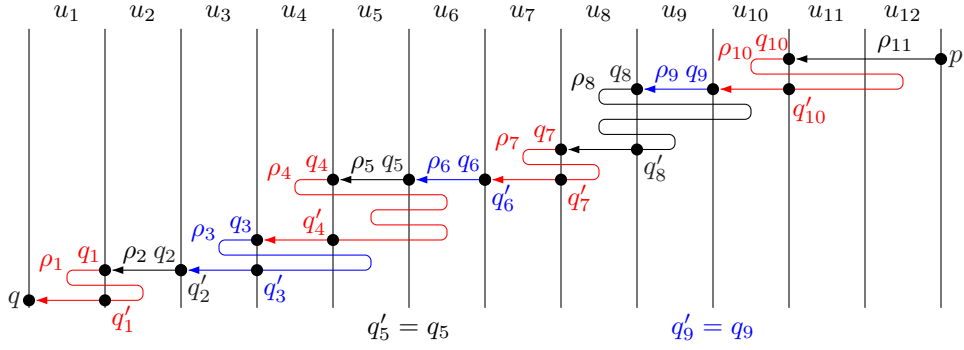

    \centering
    \gusepicture{gpic:RL-run2}
    \caption{A right-left run which is 2-backward-progressing.}
    \label{fig:RL-2progressing}
  \end{figure}

  Finally, a right-left run for $F^*$ is either a right-left run over $F^n$ for
  $n<k$, or the concatenation of a right-left run $\rho_{n-k+1}$ over the $k$
  rightmost iterations of $F$, and a sequence of runs $\rho_i$ with $1\leq i\leq
  n-k$ as previously and whose outputs are computed by $g$.  Therefore, we define
  $$
  \C{E}{s}x = \Big( \sum_{n<k} \C{F^{n}}{s}x \Big) + 
  \Big(\hspace{-2mm}\sum_{\substack{s=s's'',\,p'\,\mid \\ \RL{p}{p'}\in s''}} 
  \hspace{-4mm}
  (\SimpleFun{\lang{F^{*},s'}}{\varepsilon})\cdot\C{F^{k}}{s''}{\RL{p}{p'}}\Big)
  \odot \krstar{(k+1)}{F}{g}
  $$

  \item $x=\RR{p}{q}\in s$

  We finally deal with right-right runs, thus completing the case of $E=F^*$.  
  If $n<k$, then the output of $\rho$ is $\sem{\C{F^{n}}{s}x}(w)$.  Assume now
  that $n\geq k$.  If $\rho$ only visits the last $k$ factors, then $w$ can be
  decomposed as $uv$ where $u\in\lang{F^*,s'}$ and $v\in\lang{F^{k},s''}$ with
  $s=s's''$ and $\RR{p}{q}\in s''$.  The output of $\rho$ is
  $\sem{(\SimpleFun{\lang{F^*,s'}}{\varepsilon})\cdot\C{F^{k}}{s''}x}(w)$.
  
  Otherwise, by Lemma~\ref{lem:RR}, we know that there is a \emph{unique}
  integer $1\leq i\leq n-k$ such that $\rho$ only visits $u_i\cdots u_n$, and is
  the concatenation of 3 runs $\rho_1,\rho_2$ and $\rho_3$, where $\rho_1$ is a
  $k$-backward-progressing run over $u_{i+1}\cdots u_n$, $\rho_2$ is a zigzag
  internal run over $(u_i,u_{i+1}\cdots u_{i+k})$ and $\rho_3$ is a
  $k$-forward-progressing run over $u_{i+1}\cdots u_n$, as depicted in
  Figure~\ref{fig:RR-run}.
  
  Let $r=\varphi(u_1\cdots u_{i-1})$, $s'=\varphi(u_i)$ and
  $s''=\varphi(u_{i+1}\cdots u_{i+k})$. Since $i$ is uniquely determined, the 
  tuple $(r,s',s'')$ is also unique. Notice that $s''=\varphi(u_{i+1}\cdots
  u_n)$ since $X$ is $k$-stabilising and $s''\in X^{k}$, hence we have
  $s=rs's''$.  Moreover, the starting and ending states of $\rho_2$ are the
  \emph{unique} states $p',q'$ such that $\RL{p}{p'}\in s''$,
  $Z_{s',s''}(p',\rightright)=(\rightright,q')$ and $\LR{q'}{q}\in s''$.  Once
  the tuple $(r,s',s'',p',q')$ is fixed, using the previous points, we have
  \SDRTEs computing the outputs of $\rho_1$ and $\rho_3$:
  $$
  \sem{\C{E}{s''}{\RL{p}{p'}}}(u_{i+1}\cdots u_n)
  \hspace{10mm}
  \sem{\C{E}{s''}{\LR{q'}{q}}}(u_{i+1}\cdots u_n)
  $$
  and from Lemma~\ref{lem:InternalRun}, the output of $\rho_2$ is
  $\sem{\ZF{F,s'}{F^{k},s''}{p',\rightright}}(u_iu_{i+1}\cdots u_{i+k})$.
  This leads to the following \SDRTE:
  \begin{align*}
    h = \hspace{-5mm}\sum_{\substack{r,s',s'',p',q'\,\mid \\
    s=rs's'',\,\RL{p}{p'}\in s'' \\
    Z_{s',s''}(p',\rightright)=(\rightright,q') \\
    \LR{q'}{q}\in s''}}\hspace{-5mm}
    (\SimpleFun{\lang{F^{*},r}}{\varepsilon})\cdot
    \Big(&\big(\SimpleFun{\lang{F,s'}}{\varepsilon}\big)\cdot 
    \C{E}{s''}{\RL{p}{p'}} \\[-12mm]
    &\odot \ZF{F,s'}{F^{k},s''}{p',\rightright}\cdot(\SimpleFun{F^*}{\varepsilon}) \\
    &\odot (\SimpleFun{\lang{F,s'}}{\varepsilon})\cdot \C{E}{s''}{\LR{q'}{q}}\Big)
  \end{align*}
  Notice that if $r,s',s'',p',q'$ satisfy the conditions of the sum, then there 
  is a unique index $i$ with $\varphi(u_i)=s'$ and $\varphi(u_{i+1}\cdots 
  u_n)=s''$: there is a $\RL{p}{p'}$-run on $u_{i+1}\cdots u_n$ but not on 
  $u_i\cdots u_n$. Hence all the Cauchy products in $h$ are unambiguous and 
  uniquely decompose $w$ 
  in $u_1\cdots u_{i-1}$ matched by $\lang{F^{*},r}$, $u_i$ matched by 
  $\lang{F,s'}$ and $u_{i+1}\cdots u_n$ matched by $\C{E}{s''}{\RL{p}{p'}}$ and 
  $\C{E}{s''}{\LR{q'}{q}}$. Also, $u_iu_{i+1}\cdots u_{i+k}$ is matched by 
  $\ZF{F,s'}{F^{k},s''}{p',\rightright}$ and $n\geq i+k$.

  Finally, for the right-right step $x=\RR{p}{q}$ we define 
  \begin{align*}
    \C{E}{s}{x} = & \Big(\sum_{n< k} \C{F^n}{s}{x}\Big) + 
    \Big( \sum_{s=s's''} (\SimpleFun{\lang{F^*,s'}}{\varepsilon})\cdot \C{F^k}{s''}{x}\Big) 
    + h \,.
  \end{align*}
  The sums above are unambiguous. Indeed,
  the first case happens exactly when $w\in F^n$ for $n<k$.  The second happens
  exactly when $n\geq k$ and the right-right run does not visit $u_{n-k}$.
  Otherwise, the third case happens.  
  \qedhere
\end{itemize}
\end{proof}

\begin{example}\label{eg:kstar}	
  Let us continue with our example in Figure~\ref{fig:eg-2dft}.  Here we illustrate
  computing an \SDRTE for some $E=F^{*}$ and some left-right step.  We consider $F=a^{+}b$
  so that $\varphi(F)=Y_2$ as computed in Example~\ref{eg:stabilising}, where we also
  computed $Z_2=Y_2^{4}$.  Consider $x=\LR{q_6}{q_2}\in Z_2$.  We explain how to compute
  $\C{E}{Z_2}x$.  Since $\varphi(F)=Y_2$ is 4-stabilising, the runs following step $x$
  over words in $\lang{E,Z_2}=(a^{+}b)^{\geq4}$ are 4-forward-progressing, and the
  construction in the proof above uses a 5-chained star\footnote{Actually, the runs over
  step $x$ are 3-forward-progressing. Hence, we could simplify the example below by using a
  4-chained star only. But we decided to use a 5-star in order to follow the construction 
  described in the proof above.}.
  
  Let $w=u_1u_2\cdots u_n$ with $n\geq4$ and $u_1,u_2,\ldots,u_n\in a^{+}b$.  As can be
  seen on Figure~\ref{fig:star} (with $n=8$), the 4-forward-progressing run $\rho$ over $w$
  following step $x=\LR{q_6}{q_2}$ splits as $\rho=\rho_0\cdots\rho_4$ where $\rho_0$ is
  the prefix from $u_1$ till first crossing from $u_4$ to $u_5$, $\rho_1$ is the
  part from $u_5$ till the first crossing from $u_5$ to $u_6$, 
  $\rho_2$ is the part of the run from $u_6$ till the first crossing from $u_6$ to $u_7$,
  $\rho_3$ is the part of the run from $u_7$ till the first crossing from $u_7$ to $u_8$,
  and $\rho_4$ is the part from $u_8$ till exiting at the right of $u_8$.
  
  We obtain
  \begin{align*}
    \C{(a^+b)^*}{Z_2}{\LR{q_6}{q_2}} &= (\C{(a^+b)^4}{Z_2}{\LR{q_6}{q_2}} \cdot 
    (\SimpleFun{(a^+b)^*}{\epsilon})) \odot \kstar{5}{a^+b}{f}
    \\
    \intertext{where}
    f & = \ZF{(a^+b)^4,Z_2}{a^+b,Y_2}{q_2,\leftleft} \odot
    \Big( (\SimpleFun{a^+b)^4}{\epsilon}) \cdot \C{a^+b}{Y_2}{\LR{q_1}{q_2}} \Big)
    \\
    \ZF{(a^+b)^4,Z_2}{a^+b,Y_2}{q_2,\leftleft} & =
    \Big( (\SimpleFun{(a^+b)^4}{\epsilon}) \cdot \C{a^+b}{Y_2}{\LL{q_2}{q_3}} \Big)
    \odot
    \Big( \C{(a^+b)^{4}}{Z_2}{\RR{q_3}{q_1}} \cdot (\SimpleFun{a^+b}{\epsilon}) \Big)
    \\
    \intertext{and the expressions below where computed in Example~\ref{eg:product}}
    \C{a^+b}{Y_2}{\LR{q_1}{q_2}} &= \SimpleFun{a^+b}{\epsilon}
    \\
    \C{a^+b}{Y_2}{\LL{q_2}{q_3}} &= (\SimpleFun{a}{a})^{+}\cdot(\SimpleFun{b}{b})
    \\
    \C{(a^+b)^{4}}{Z_2}{\RR{q_3}{q_1}} &= (\SimpleFun{(a^+b)^{2}}{\epsilon}) \cdot
    (\SimpleFun{a}{a})^{+}\cdot(\SimpleFun{b}{b}) \cdot (\SimpleFun{a^+b}{\epsilon})
    \\
    \C{(a^+b)^{4}}{Z_2}{\LR{q_6}{q_2}} &= 
    \Big( (\SimpleFun{a}{a})^{+}\cdot(\SimpleFun{b}{b}) \cdot
    (\SimpleFun{(a^+b)^{2}}{\epsilon}) \cdot
    (\SimpleFun{a}{a})^{+}\cdot(\SimpleFun{b}{b}) \Big) \odot{}
    \\ & \qquad
    \Big( (\SimpleFun{a^+b}{\epsilon}) \cdot 
    (\SimpleFun{a}{a})^{+}\cdot(\SimpleFun{b}{b}) \cdot
    (\SimpleFun{(a^+b)^{2}}{\epsilon}) \Big) 
  \end{align*}
      \begin{figure}[t]
  \begin{center}
  	\includegraphics[scale=0.6]{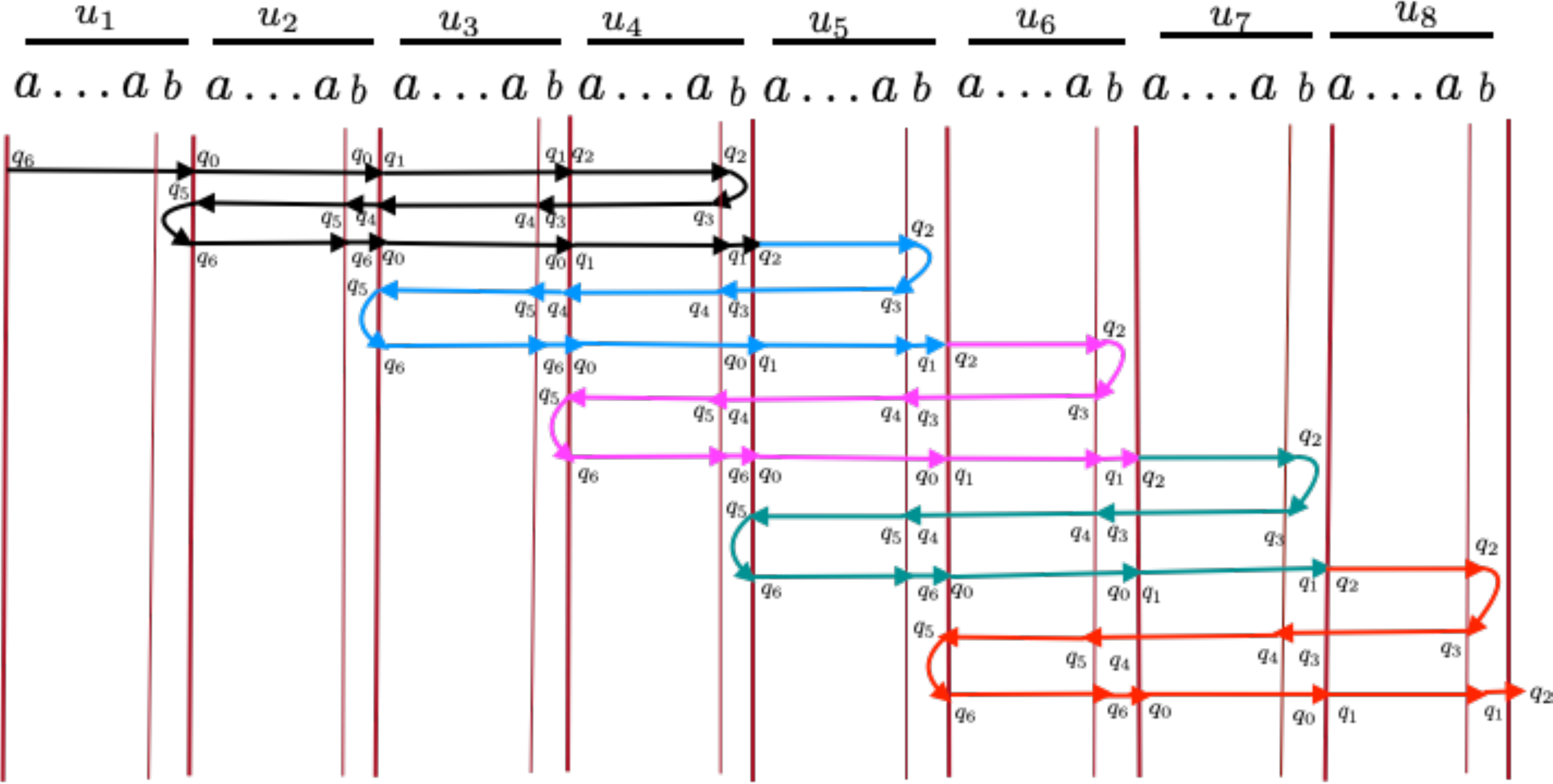}
  \end{center}
  	\caption{Illustration for Example \ref{eg:kstar}}
  	\label{fig:star}
  \end{figure}

  After simplifications, we obtain
  \begin{align*}
    f &= \ZF{(a^+b)^4,Z_2}{a^+b,Y_2}{q_2,\leftleft} 
    \\
    &=
    \Big( (\SimpleFun{(a^+b)^4}{\epsilon}) \cdot 
    (\SimpleFun{a}{a})^{+} \cdot (\SimpleFun{b}{b}) \Big)
    \odot
    \Big( (\SimpleFun{(a^+b)^{2}}{\epsilon}) \cdot
    (\SimpleFun{a}{a})^{+}\cdot(\SimpleFun{b}{b}) \cdot 
    (\SimpleFun{(a^+b)^{2}}{\epsilon}) \Big) \,.
  \end{align*}
  
  For instance, consider $w=u_1u_2\cdots u_8$ with $u_i=a^{i}b$. Then,
  \begin{align*}
    f(u_1\cdots u_{5}) &= a^{5}ba^{3}b
    &
    f(u_2\cdots u_{6}) &= a^{6}ba^{4}b
    \\
    f(u_3\cdots u_{7}) &= a^{7}ba^{4}b
    &
    f(u_4\cdots u_{8}) &= a^{8}ba^{5}b
    \\
    \sem{\C{(a^{+}b)^{4}}{Z_2}{\LR{q_6}{q_2}}}(u_1\cdots u_4) &= aba^{4}ba^{2}b 
    \\
    \sem{\C{(a^{+}b)^{*}}{Z_2}{\LR{q_6}{q_2}}}(u_1\cdots u_8) &= 
    \makebox[10mm][l]{$aba^{4}ba^{2}ba^{5}ba^{3}ba^{6}ba^{4}ba^{7}ba^{4}ba^{8}ba^{5}b$\,.}  
  \end{align*}
\end{example}

We conclude the section by showing how to construct \SDRTEs equivalent to 2DFTs.

\begin{theorem}\label{thm:final}
  Let $\A=(Q,\Sigma,\Gamma,\delta,\gamma,q_0,F)$ be an aperiodic 2DFT. We can construct an
  equivalent \SDRTE $C_{\A}$ over alphabet $\Sigma$ with
  $\dom{\sem{C_\A}}=\dom{\sem{\A}}$ and $\sem{\A}(w)=\sem{C_\A}(w)$ for all
  $w\in\dom{\sem{\A}}$.
\end{theorem}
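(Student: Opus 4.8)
The plan is to assemble the building blocks furnished by Theorem~\ref{thm:main}, so that essentially all of the work is already done; the remaining task is to express an accepting run of $\A$ in terms of steps over the word $\leftend w$ and then to remove the endmarkers. First I would apply Theorem~\ref{thm:U-S-SD-expressions} to the canonical morphism $\varphi\colon(\Sigma\uplus\{\leftend,\rightend\})^{*}\to\TrMon$, which is aperiodic since $\A$ is, obtaining for every $s\in\TrMon$ an unambiguous, stabilising, SD-regular expression $E_s$ with $\lang{E_s}=\varphi^{-1}(s)$. Feeding each $E_s$ into Theorem~\ref{thm:main} yields \SDRTEs $\C{E_s}{s}{x}$ computing, on every word of $\varphi^{-1}(s)$, the output of the run of $\A$ following the step $x$.

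Next I would describe acceptance purely in terms of steps over $v=\leftend w$. By determinism, the accepting run of $\A$ on $w$ is the unique run starting at the left of $v$ in state $q_0$; since $\delta(\cdot,\leftend)$ always moves right, this run can leave $v$ only on the right, i.e.\ at the position occupied by $\rightend$. Accordingly it decomposes, as a function of $s=\varphi(v)$ alone, into a left-right step $\LR{q_0}{q_1}$ on $v$ followed, whenever $q_i\notin F$ and $\delta(q_i,\rightend)=(r_i,-1)$, by a turn-around producing $\gamma(q_i,\rightend)$ and a right-right step $\RR{r_i}{q_{i+1}}$ on $v$; the run is accepting exactly when this finite, $s$-determined alternation reaches a state $q_m\in F$ at $\rightend$. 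For each such accepting $s$ I set
$$
C'_s=\C{E_s}{s}{\LR{q_0}{q_1}}\odot\gamma(q_1,\rightend)\odot\C{E_s}{s}{\RR{r_1}{q_2}}\odot\cdots\odot\gamma(q_{m-1},\rightend)\odot\C{E_s}{s}{\RR{r_{m-1}}{q_m}},
$$
so that, since $\odot$ concatenates outputs on the common input $v$, $\sem{C'_s}(\leftend w)$ equals $\sem{\A}(w)$ on $\varphi^{-1}(s)$ (the final configuration at $\rightend$ contributes no output). As the languages $\varphi^{-1}(s)$ are pairwise disjoint, the sum $C'=\sum_{s\ \mathrm{accepting}}C'_s$ is an \SDRTE with $\sem{C'}(\leftend w)=\sem{\A}(w)$ whenever $\varphi(\leftend w)$ is accepting.

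Finally I would strip the endmarker. Applying the left-quotient construction of Proposition~\ref{prop:LQ-SDRTE} with the letter $\leftend$ gives an \SDRTE $\LQ{\leftend}{C'}$ with $\sem{\LQ{\leftend}{C'}}(v)=\sem{C'}(\leftend v)$, and then restricting to the sub-alphabet $\Sigma$ via Lemma~\ref{lem-projection} produces $C_{\A}$ over $\Sigma$. Tracing the definitions, $\dom{\sem{C_\A}}=\{w\in\Sigma^{*}\mid\varphi(\leftend w)\text{ is accepting}\}=\dom{\sem{\A}}$ and $\sem{C_\A}(w)=\sem{\A}(w)$ for all $w\in\dom{\sem{\A}}$. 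The genuinely hard work is entirely contained in Theorem~\ref{thm:main} (the Kleene-star case and the stabilising analysis of Section~\ref{sec:run-stable}); the only delicate point here is the endmarker bookkeeping---correctly matching the accepting run, including its possible turn-arounds at $\rightend$, with a Hadamard composition of left-right and right-right steps over $\leftend w$---while the quotient by $\leftend$ and the projection onto $\Sigma$ preserve the \SDRTE format by Proposition~\ref{prop:LQ-SDRTE} and Lemma~\ref{lem-projection}.
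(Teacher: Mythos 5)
Your proposal is correct and takes essentially the same route as the paper: both reduce everything to Theorem~\ref{thm:U-S-SD-expressions} and Theorem~\ref{thm:main}, describe the accepting run as an initial $\LR{q_0}{\cdot}$ traversal of $\leftend w$ followed by a sequence of turn-arounds at $\rightend$ alternating with $\rightright$ traversals, and then remove the endmarkers via Proposition~\ref{prop:LQ-SDRTE} and Lemma~\ref{lem-projection}. The only (harmless) differences are that the paper materialises $\rightend$ in the input word, packages the turn-around alternation as the zigzag $\ZF{E,s}{\rightend,t}{p,\leftleft}$ of Lemma~\ref{lem:InternalRun}, and therefore also needs a right quotient by $\rightend$, whereas you inline the turn-around outputs $\gamma(q_i,\rightend)$ as constant factors of the Hadamard product and thus never introduce $\rightend$ at all.
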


\begin{proof}
  We first construct below an \SDRTE $C'_{\A}$ with
  $\dom{\sem{C'_\A}}=\leftend\dom{\sem{\A}}\rightend$ and such that
  $\sem{\A}(w)=\sem{C'_\A}(\leftend w \rightend)$ for all $w\in\dom{\sem{\A}}$.
  Then, we obtain $C''_\A$ using Proposition~\ref{prop:LQ-SDRTE} by
  $$
  C''_A=\RQ{\rightend}{(\LQ{\leftend}{C'_\A})} \,.
  $$
Finally, we get rid of lingering endmarkers in $C''_\A$ using Lemma~\ref{lem-projection} to obtain $C_\A$ as the projection of $C''_\A$ on $\Sigma^*$.
  
  Let $\varphi\colon(\Sigma\uplus\{\leftend,\rightend\})^{*}\to\TrMon$ be the
  canonical surjective morphism to the transition monoid of $\A$.  Since $\A$ is
  aperiodic, the monoid $\TrMon$ is also aperiodic.  We can apply
  Theorem~\ref{thm:U-S-SD-expressions} to the restriction of $\varphi$ to
  $\Sigma^{*}$: for each $s\in\TrMon$, we get an unambiguous, stabilising,
  SD-regular expression $E_s$ with $\lang{E_s}=\varphi^{-1}(s)\cap\Sigma^{*}$.
  Let $E=\leftend\cdot(\bigcup_{s\in\TrMon}E_s)$ which is an unambiguous,
  stabilising, SD-regular expression with $\lang{E}=\leftend\Sigma^{*}$.
  Applying Theorem~\ref{thm:main}, for each monoid element $s\in\TrMon$ and each
  step $x\in\{\leftright,\leftleft,\rightright,\rightleft\}\times Q^{2}$, we
  construct the corresponding \SDRTE $\C{E}{s}x$.  We also apply
  Lemma~\ref{lem:InternalRun} and construct for each state $p\in Q$ an \SDRTE
  $\ZF{E,s}{\rightend,t}{p,\leftleft}$ where $t=\varphi(\rightend)$.

\begin{figure}[h]
\begin{center}
\includegraphics[scale=.6]{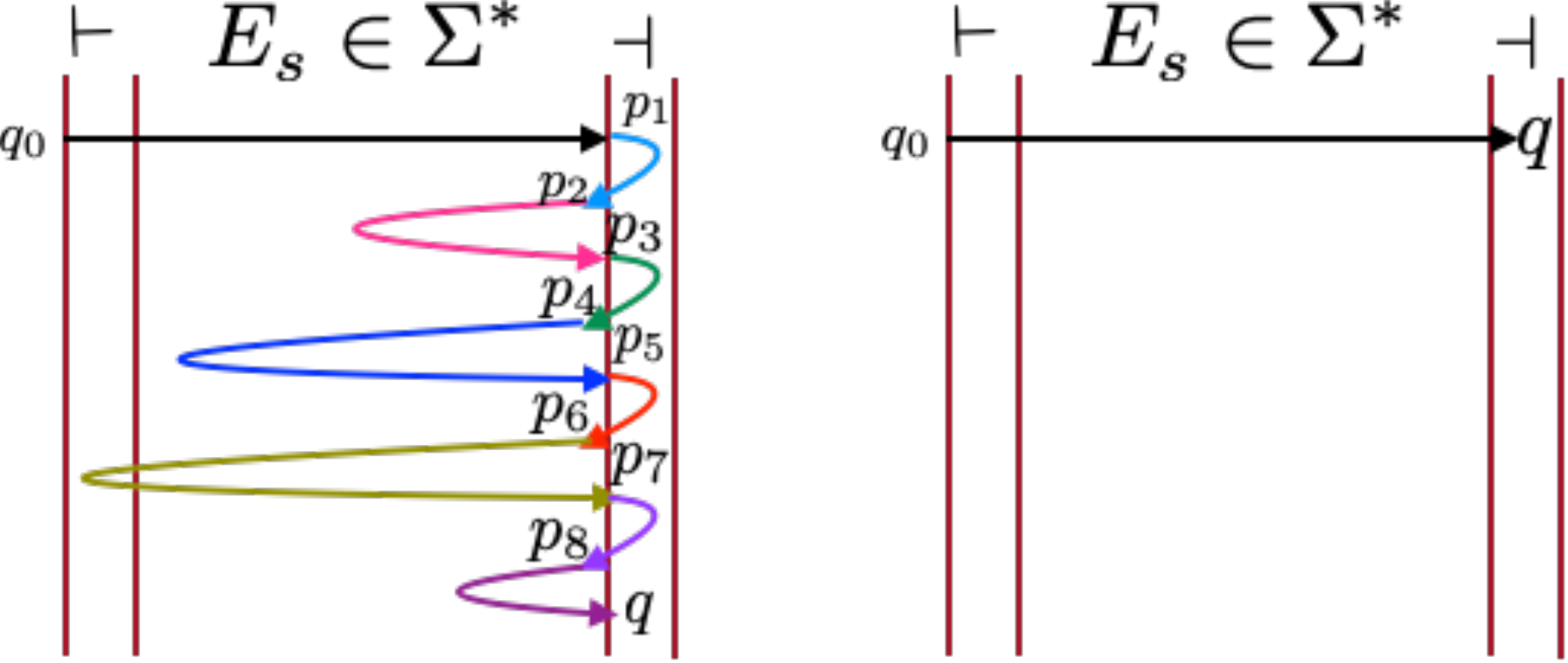}
\end{center}
\caption{Removing end markers. On the left when there is a non trivial zig zag until reaching a final state $q$; on the right when we have an empty zigzag. $q_0$ is the initial state and $q \in F$.}
\label{fig:final}
\end{figure}

  Finally, we define
  $$
  C'_\A=\hspace{-5mm}\sum_{\substack{s,p,q\mid q\in F \\ \LR{q_0}{p}\in s \\ 
  Z_{s,t}(p,\leftleft)=(\rightright,q)} }
  \big( \C{E}{s}{\LR{q_0}{p}}\cdot(\SimpleFun{\rightend}{\varepsilon}) \big)
  \odot \ZF{E,s}{\rightend,t}{p,\leftleft}
  $$

See alo Figure \ref{fig:final} illustrating $C'_{\A}$.   
  We can easily check that $C'_\A$ satisfies the requirements stated above.
\end{proof}

\begin{example}\label{eg:last}	
  We complete the series of examples by giving an \SDRTE equivalent with the transducer 
  $\A$ of Figure~\ref{fig:eg-2dft} on words in $E_1=b(a^{+}b)^{\geq4}\subseteq\dom{\A}$.
  Notice that by Example~\ref{eg:stabilising}, we have $\varphi(E_1)=Z_1=Y_1Z_2$.
  We compute first $\C{E_1}{Z_1}{\LR{s}{q_2}}$. We use the unambiguous product 
  $E_1=b\cdot E_2$ with $E_2=(a^{+}b)^{\geq4}$. From Example~\ref{eg:stabilising}, we have
  $\LR{s}{q_0},\RR{q_5}{q_6}\in Y_1=\varphi(b)$ and $\LL{q_0}{q_5},\LR{q_6}{q_2}\in 
  Z_2=\varphi(E_2)$. We deduce that in the product, the zigzag part consists of the two 
  steps $\LL{q_0}{q_5}$ and $\RR{q_5}{q_6}$. Therefore we obtain:
  \begin{align*}
    \C{E_1}{Z_1}{\LR{s}{q_2}} &= 
    \big( \C{b}{Y_1}{\LR{s}{q_0}}\cdot (\SimpleFun{E_2}{\varepsilon}) \big) 
    \odot \ZF{b,Y_1}{E_2,Z_2}{q_0,\leftleft} \odot 
    \big( (\SimpleFun{b}{\varepsilon}) \cdot \C{E_2}{Z_2}{\LR{q_6}{q_2}} \big) 
    \\
    \ZF{b,Y_1}{E_2,Z_2}{q_0,\leftleft} &= 
    \big( (\SimpleFun{b}{\varepsilon}) \cdot \C{E_2}{Z_2}{\LL{q_0}{q_5}} \big) \odot
    \big( \C{b}{Y_1}{\RR{q_5}{q_6}}\cdot (\SimpleFun{E_2}{\varepsilon}) \big) \,.
  \end{align*}
  Since $\C{b}{Y_1}{\LR{s}{q_0}}=\C{b}{Y_1}{\RR{q_5}{q_6}}=\SimpleFun{b}{\varepsilon}$, 
  we obtain after simplifications
  \begin{align*}
    \C{E_1}{Z_1}{\LR{s}{q_2}} &= 
    \big( (\SimpleFun{b}{\varepsilon}) \cdot \C{E_2}{Z_2}{\LL{q_0}{q_5}} \big) \odot
    \big( (\SimpleFun{b}{\varepsilon}) \cdot \C{E_2}{Z_2}{\LR{q_6}{q_2}} \big) \,.
  \end{align*}
  Let $E=(a^{+}b)^{*}$ as in Example~\ref{eg:kstar}.
  Since $\lang{E_2,Z_2}=\lang{E,Z_2}$ we have $\C{E_2}{Z_2}x=\C{E}{Z_2}x$ for all steps $x$.
  Recall that in Example~\ref{eg:kstar} we have computed $\C{E}{Z_2}{\LR{q_6}{q_2}}$.
  Moreover, $\C{E_2}{Z_2}{\LL{q_0}{q_5}}=\C{(a^{+}b)^{4}}{Z_2}{\LL{q_0}{q_5}}\cdot 
  (\SimpleFun{E}{\varepsilon})$ and a computation similar to Example~\ref{eg:product} 
  gives
  \begin{align*}
    \C{(a^{+}b)^{4}}{Z_2}{\LL{q_0}{q_5}} &= 
    (\SimpleFun{(a^{+}b)^{2}}{\varepsilon}) \cdot 
    (\SimpleFun{a}{a})^{+} \cdot (\SimpleFun{b}{b}) \cdot (\SimpleFun{a^{+}b}{\varepsilon}) \,.
  \end{align*}
  Finally, with the notations of Example~\ref{eg:kstar}, we obtain
  \begin{align*}
    \C{E_1}{Z_1}{\LR{s}{q_2}} ={}\phantom{\odot}&
    \big( (\SimpleFun{b(a^{+}b)^{2}}{\varepsilon}) \cdot 
    (\SimpleFun{a}{a})^{+} \cdot (\SimpleFun{b}{b}) \cdot
    (\SimpleFun{(a^{+}b)^{+}}{\varepsilon}) \big) 
    \\
    {}\odot{}& 
    \big( (\SimpleFun{b}{\varepsilon}) \cdot \C{E}{Z_2}{\LR{q_6}{q_2}} \big) 
    \\
    ={}\phantom{\odot}&
    \big( (\SimpleFun{b(a^{+}b)^{2}}{\varepsilon}) \cdot 
    (\SimpleFun{a}{a})^{+} \cdot (\SimpleFun{b}{b}) \cdot
    (\SimpleFun{(a^{+}b)^{+}}{\varepsilon}) \big) 
    \\
    {}\odot{}& 
    \big( (\SimpleFun{b}{\varepsilon}) \cdot \C{(a^+b)^4}{Z_2}{\LR{q_6}{q_2}} \cdot 
    (\SimpleFun{(a^+b)^*}{\epsilon}) \big)
    \\
    {}\odot{}& 
    \big( (\SimpleFun{b}{\varepsilon}) \cdot \kstar{5}{a^+b}{f} \big)
    \\
    ={}\phantom{\odot}&
    \big( (\SimpleFun{b(a^{+}b)^{2}}{\varepsilon}) \cdot 
    (\SimpleFun{a}{a})^{+} \cdot (\SimpleFun{b}{b}) \cdot
    (\SimpleFun{(a^{+}b)^{+}}{\varepsilon}) \big) 
    \\
    {}\odot{}& 
    \big( (\SimpleFun{b}{\varepsilon}) \cdot 
    (\SimpleFun{a}{a})^{+}\cdot(\SimpleFun{b}{b}) \cdot
    (\SimpleFun{(a^+b)^{2}}{\epsilon}) \cdot
    (\SimpleFun{a}{a})^{+}\cdot(\SimpleFun{b}{b}) \cdot 
    (\SimpleFun{(a^+b)^*}{\epsilon}) \big)
    \\
    {}\odot{}& 
    \big( (\SimpleFun{ba^+b}{\epsilon}) \cdot 
    (\SimpleFun{a}{a})^{+}\cdot(\SimpleFun{b}{b}) \cdot
    (\SimpleFun{(a^+b)^{\geq2}}{\epsilon}) \big)
    \\
    {}\odot{}& 
    \big( (\SimpleFun{b}{\varepsilon}) \cdot \kstar{5}{a^+b}{f} \big)
    \,.
  \end{align*}
   
   Note that for our example, we have a simple case of using Theorem \ref{thm:final},
   since $\leftend$ is visited only once at state $s$, and there are no transitions
   defined on $\rightend$, i.e., $\varphi(\rightend)=\emptyset$. 
   So for $E'=\leftend E_1$, we have $\LR{s}{q_2}\in\varphi(E')$ and
   $\ZF{E',\varphi(E')}{\rightend,\emptyset}{q_2,\leftleft}=\epsilon$ (zigzag part empty
   as seen in the right of Figure \ref{fig:final}).  
   Thus, an expression equivalent to $\A$ on words in $E_1$
   effectively boils down to $\C{E_1}{Z_1}{\LR{s}{q_2}}$. 
   For the sake of simplicity, we stop the example here and do not provide the expression 
   on the full domain $\dom{\A}=b(a^{*}b)^{\geq2}a^{*}$.
   \qed
\end{example}

\section{Adding composition}\label{sec:composition}

\newcommand{\dup}{\mathsf{dup}}
\newcommand{\rev}{\mathsf{rev}}
\newcommand{\id}{\mathsf{id}}

Since \SDRTEs are used to define functions over words, it seems natural to consider the
composition of functions, as it is an easy to understand but powerful operator.  In this
section, we discuss other formalisms using composition as a basic operator, and having the
same expressive power as \SDRTEs.

Theorem~\ref{thm:intro} gives the equivalence between \SDRTEs and aperiodic two-way
transducers, the latter being known to be closed under composition.  Hence, adding
composition to \SDRTEs does not add expressiveness, while allowing for easier modelisation
of transformations.

Moreover, we prove that, should we add composition of functions, then we can replace the
$k$-chained star operator and its reverse by the simpler $1$-star $\kstar{1}{L}{f}$ and
its reverse, which in particular are one-way (left-to-right or right-to-left) operator
when $f$ is also one-way.

Finally, we prove that we can furthermore get rid of the reverse operators as well as the
Hadamard product by adding two basic functions: \emph{reverse} and \emph{duplicate}.  The
reverse function is straightforward as it reverses its input.  The duplicate function
is parameterised by a symbol, say $\#$, duplicates its input inserting $\#$ between the 
two copies:
$\dup_\#(u)=u\#u$.

\begin{theorem}\label{thm-withComp}
  The following families of expressions have the same expressive power:
\begin{enumerate}
\item\label{thm-wC-SDRTE} \SDRTEs,
\item\label{thm-wC-SDRTEc} \SDRTEs with composition of functions,
\item\label{thm-wC-1c} \SDRTEs with composition and chained star restricted to $1$.
\item\label{thm-wc-BasicFun} Expressions with simple functions, unambiguous sum, Cauchy product, $1$-star, duplicate, reverse and composition.
\end{enumerate}
\end{theorem}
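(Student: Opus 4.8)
The plan is to prove the four families equivalent by the cycle of inclusions $(1)\subseteq(2)$, $(2)\subseteq(3)$, $(3)\subseteq(4)$, $(4)\subseteq(3)$, $(3)\subseteq(2)$, $(2)\subseteq(1)$, so that $(1)=(2)$, $(2)=(3)$ and $(3)=(4)$ each follow from two opposite inclusions. The inclusion $(1)\subseteq(2)$ is immediate, since an \SDRTE is a particular \SDRTE with composition; likewise $(3)\subseteq(2)$ is immediate because $\kstar{1}{L}{C}$ and $\krstar{1}{L}{C}$ are special cases of the $k$-chained star. For $(2)\subseteq(1)$ I would invoke Theorem~\ref{thm:intro}: by part~(1) every \SDRTE denotes an aperiodic $2$DFT, and since aperiodic $2$DFTs are effectively closed under composition, any composition of such functions is again realised by an aperiodic $2$DFT, which by part~(2) is denoted by an \SDRTE. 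Thus $(1)$ and $(2)$ already coincide, and it remains to locate $(3)$ and $(4)$ inside this class.

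For $(2)\subseteq(3)$ I would show that, once composition is available, every $k$-chained star with $k\ge2$ can be rewritten using $1$-stars only. I reuse the decomposition $\kstar{k}{L}{C}=f_3\circ f_2\circ f_1$ from the proof of Theorem~\ref{thm:onedir}, where $f_1$ inserts separators, $f_2$ performs the width-$k$ windowing, and $f_3$ applies $C$ block-wise. Both $f_1$ and $f_3$ are directly expressible with a single $1$-star: $f_1=(\SimpleFun{L^{*}}{\#})\odot\kstar{1}{L}{g}$ where $g$ is the per-block map $u\mapsto u\#$, and $f_3$ is a $1$-star over the $1$-SD prefix code $\Sigma^{*}\#$ whose block map is $C$ composed with the erasure of the trailing separator. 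The reverse star is handled identically, replacing $f_3$ by $f_4$ (equivalently, post-composing with $\rev$). The only non-routine ingredient is the windowing $f_2$, which I discuss last.

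For $(3)\leftrightarrow(4)$, the inclusion $(4)\subseteq(3)$ reduces to observing that the two extra basic functions of family~(4) already live in family~(3): the identity $\id=\kstar{1}{\Sigma}{(\sum_{a\in\Sigma}\SimpleFun{a}{a})}$ is a $1$-star, $\rev=\krstar{1}{\Sigma}{(\sum_{a\in\Sigma}\SimpleFun{a}{a})}$ is a reverse $1$-star, and $\dup_{\#}=\id\odot(\SimpleFun{\Sigma^{*}}{\#}\odot\id)$ uses only the Hadamard product, while all remaining combinators of~(4) are available in~(3). Conversely, for $(3)\subseteq(4)$ I must remove the Hadamard product and the reverse $1$-star, keeping only Cauchy product, $1$-star, $\dup$, $\rev$ and composition. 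The reverse star is eliminated by $\krstar{1}{L}{f}=\rev\circ\kstar{1}{L}{(\rev\circ f)}$, and the Hadamard product by $f\odot g=\bigl((f\circ e_{\#})\cdot g\bigr)\circ\dup_{\#}$, where $e_{\#}=\id\cdot\SimpleFun{\#}{\varepsilon}$ erases the separator inserted by $\dup_{\#}$; since $\#$ occurs exactly once in $w\#w$, the factorisation is forced and the Cauchy product is unambiguous. The reverse Cauchy product then follows from its defining identity through the Hadamard product just eliminated.

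The main obstacle is the windowing function $f_2$, which on $\#v_1\#\cdots\#v_m\#$ must output $\#\,v_1\cdots v_k\,\#\,v_2\cdots v_{k+1}\,\#\cdots\#\,v_{m-k+1}\cdots v_m\,\#$: a single $1$-star cannot emit a block depending on $k$ consecutive input factors, since its iterations carry no memory. My plan is to build $f_2$ as $k-1$ compositions of one \emph{widening} step that turns a sequence of width-$j$ windows, carrying an internal marker $\$$ recording factor boundaries, into the sequence of width-$(j+1)$ windows, via the identity $W^{(j+1)}_i=W^{(j)}_i\,\$\,\mathrm{last}(W^{(j)}_{i+1})$. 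The key point making a $1$-star suffice at each step is that the factor to append to window $i$ is exactly the last factor of the \emph{next} window, so a single left-to-right pass can emit, for each window $W$, the local pattern $\mathrm{last}(W)\,\#_2\,W\,\#_1$ — both pieces computable from $W$ alone by Hadamard and Cauchy product — after which a $1$-star over the code $\Sigma^{*}\#_1\Sigma^{*}\#_2$ merges each adjacent pair $W_i\#_1\,\mathrm{last}(W_{i+1})\#_2$ into $W_i\$\,\mathrm{last}(W_{i+1})\#$. I expect the delicate part of the argument to be checking the prefix-code and bounded-synchronisation-delay conditions (and unambiguity) of these auxiliary codes, together with the bookkeeping of the internal markers through the $k-1$ compositions; everything else is routine once $f_2\in(3)$ is secured.
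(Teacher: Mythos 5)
Your proposal is correct, and most of it coincides with the paper's own proof: closing the loop through Theorem~\ref{thm:intro} (aperiodic 2DFTs are closed under composition, hence $(2)\subseteq(1)$), expressing duplicate and reverse as a $1$-star and a reverse $1$-star, eliminating the Hadamard product via $\mathsf{dup}_{\#}$ followed by a Cauchy product made unambiguous by the unique occurrence of $\#$, and the identity $\krstar{k}{L}{f}=\mathsf{rev}\circ\kstar{k}{L}{\mathsf{rev}\circ f}$ are all exactly the paper's steps (the paper writes the Hadamard elimination as $(f\cdot(\SimpleFun{\#}{\varepsilon})\cdot g)\circ\mathsf{dup}_{\#}$, a cosmetic variant of yours). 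The one place where you genuinely diverge is the reduction of the $k$-chained star to $1$-stars. You materialise the windowing function $f_2$ of Theorem~\ref{thm:onedir} as $k-1$ composed ``widening'' passes, each carrying explicit width-$j$ window markers. The paper instead argues by induction on $k$: a single pass $f_1=\kstar{1}{L}{\mathsf{dup}_{\$}\cdot(\SimpleFun{\varepsilon}{\#})}$ rewrites $u_1\cdots u_n$ as $u_1\$u_1\#u_2\$u_2\#\cdots u_n\$u_n\#$, whose middle part factors over the $1$-SD prefix code $\Sigma^*\#\Sigma^*\$ $ with each code factor straddling two consecutive blocks; a $(k-1)$-chained star over that code therefore already reaches $k$ of the original blocks, and a pruning map recovers $u_{i+1}\cdots u_{i+k}$ before applying $f$. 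Both constructions rest on the same idea --- duplicate each block so that a bounded-synchronisation-delay code factor can span a block boundary --- and both ultimately cost $k-1$ composed passes; the paper's recursion simply delegates the marker bookkeeping to the induction hypothesis, which spares it the explicit verification of the auxiliary codes that you rightly flag as the delicate part of your plan. A side effect of your inclusion structure (proving $(2)\subseteq(3)$ and then $(3)\subseteq(4)$, rather than the paper's direct $(1)\subseteq(4)$) is that your widening pass may still use the Hadamard product, which makes the per-window map $W\mapsto\mathrm{last}(W)\#_2\,W\,\#_1$ easier to write; this is a legitimate simplification, not a gap.
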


\begin{proof}
  It is trivial that $\ref{thm-wC-1c}\subseteq \ref{thm-wC-SDRTEc}$ as $\ref{thm-wC-1c}$
  is obtained as a syntaxical restriction of $\ref{thm-wC-SDRTEc}$.  Although it is not
  needed in our proof, note that $\ref{thm-wC-SDRTE}\subseteq\ref{thm-wC-SDRTEc}$ holds
  for the same reason.
  Now, thanks to Theorem~\ref{thm:intro}, we know that \SDRTEs are equivalent to aperiodic
  two-way transducers that are closed under composition.  Hence, composition does not add
  expressive power and we have $\ref{thm-wC-SDRTEc}\subseteq \ref{thm-wC-SDRTE}$.

To prove that $\ref{thm-wc-BasicFun}\subseteq \ref{thm-wC-1c}$, we simply have to prove
that the duplicate and reverse functions can be expressed with \SDRTEs using only the
$1$-star operator and its reverse.
The duplicate function definition relies on the Hadamard
and is given by the expression: 
$$
\dup_\# = (\id_{\Sigma^{*}}\cdot(\SimpleFun{\varepsilon}{\#}))
\odot \id_{\Sigma^{*}}
$$
where $\id_{\Sigma^{*}}$ is the identity function and can be written as
$\kstar{1}{\Sigma}{\id_\Sigma}$ where $\id_\Sigma=\sum_{a\in\Sigma}\SimpleFun{a}{a}$.
The reverse function is also easy to define using the $1$-star reverse:
$$
\rev = \krstar{1}{\Sigma}{\id_\Sigma}
$$

To prove the last inclusion $\ref{thm-wC-SDRTE}\subseteq \ref{thm-wc-BasicFun}$, we need
to express the Hadamard product and the (reverse) $k$-chained star, using duplicate, reverse
and composition.  

The Hadamard product $f\odot g$ is easy to define using $\dup_\#$ where $\#$ is a 
fresh marker:
$$
f\odot g = (f\cdot(\SimpleFun{\#}{\varepsilon})\cdot g)\circ\dup_\# \,.
$$

We show now how to reduce $k$-star to $1$-star using duplicate and composition. The proof 
is by induction on $k$. When $k=1$ there is nothing to do. Assume that $k>1$. We show how 
to express $\kstar{k}{L}{f}$ using $(k-1)$-star, $1$-star and duplicate.
The main idea is to use composition to mark each factor in $L$ in order to duplicate them, then use a $(k-1)$-star
to have a reach of $k$ factors of $L$ (with some redundant information), and lastly use
composition to prune the input to a form suitable to finally apply $f$.

More formally, let $\#$ and $\$$ be two fresh markers and define
$$
f_1=
\kstar{1}{L}{\dup_\$\cdot(\SimpleFun{\varepsilon}{\#})}
$$ 
with domain $L^{*}$ and, when applied to a word $u=u_1\cdots u_n$ with $u_i\in L$, 
produces $u_1\$u_1\# u_2\$u_2\# u_3\$u_3\# \cdots u_{n-1}\$u_{n-1}\# u_n\$u_n\# 
\in\{\varepsilon\}\cup \Sigma^{*}\$(\Sigma^*\#\Sigma^*\$)^{*}\Sigma^{*}\#$.
Notice that $\Sigma^*\#\Sigma^*\$ $ is a $1$-SD prefix code and that taking $k-1$
consecutive factors from this language allows us to have a reach of $k$ factors of $L$.
Then we define the function $g$
$$
g= \big( \id_{\Sigma^*}\cdot(\SimpleFun{\#\Sigma^{*}\$}{\epsilon}) \big)^{k-2}
\cdot \big( \id_{\Sigma^*}\cdot (\SimpleFun{\#}{\epsilon})\cdot \id_{\Sigma^*}\cdot 
(\SimpleFun{\$}{\epsilon}) \big)
$$
with domain $(\Sigma^*\#\Sigma^*\$)^{k-1}$ and, when applied to a word
$v_1\#v'_1\$v_2\#v'_2\$ \cdots v_{k-1}\#v'_{k-1}\$ $, produces $v_1v_2\cdots 
v_{k-1}v'_{k-1}$. In particular, 
$g(u_{i+1}\# u_{i+2}\$u_{i+2}\# u_{i+3}\$ \cdots u_{i+k-1}\# u_{i+k}\$)=
u_{i+1}\cdots u_{i+k}$.
Finally, we have
$$
\kstar{k}{L}{f} =
\big( (\SimpleFun{\varepsilon}{\varepsilon}) 
+ (\SimpleFun{\Sigma^*\$}{\epsilon})
\cdot \kstar{(k-1)}{\Sigma^*\#\Sigma^*\$}{f\circ g}
\cdot (\SimpleFun{\Sigma^*\#}{\epsilon})\big)
\circ f_1 \,.
$$

The reverse $k$-star $\krstar{k}{L}{f}$ is not expressed in a straightforward fashion
using reverse composed with $k$-star, because while reverse applies on all the input, the
reverse $k$-star swaps the applications of function $f$ while keeping the function $f$
itself untouched.
In order to express it, we reverse a $k$-star operator not on $f$, but on $f$ reversed.
The result is that the applications of $f$ are reversed twice, thus preserving them.
Formally, we have:
$$
\krstar{k}{L}{f}= \rev\circ \kstar{k}{L}{\rev\circ f}
$$
\end{proof}

\section{Conclusion}
We conclude with some interesting avenues for future work, 
arising from the open questions based on this paper. 

We begin with complexity questions, and then move on to other directions for future work.
The complexity of our procedure, especially when going from the declarative language
\SDRTE to the computing machine 2DFT, is open.  This part relies heavily on the
composition of 2DFTs which incurs at least one exponential blowup in the state space. 
A possibility to reduce the complexity incurred during composition, is to obtain
\emph{reversible} 2FT (2RFT) for each of the intermediate functions used in the
composition.  2RFTs are a class of 2DFTs which are both deterministic and co-deterministic,
and were introduced in~\cite{DFJL17}, where they prove that composition of 2RFTs results in
a 2RFT with polynomially many states in the number of states of the input transducers.
Provided that the composition of 2RFTs preserves aperiodicity, if we could produce 2RFTs in
our procedures in section \ref{sec:A2DFTtoSDRET}, then we would construct a 2RFT which is
polynomial in the size of the \SDRTE. Another open question is the efficiency of
evaluation, i.e., given an \SDRTE and an input word, what is the time complexity of
obtaining the corresponding output.  This is crucial for an implementation, along the
lines of DReX \cite{DBLP:conf/popl/AlurDR15}.

Yet another direction is to extend our result to transformations over infinite words.
While Perrin \cite{DBLP:conf/mfcs/Perrin84} generalized the SF=AP result of Schützenberger
to infinite words in the mid 1980s, Diekert and Kufleitner \cite{DiekertK-CSR12,DiekertK-ToCS2015}
generalized Schützenberger's SD=AP result to infinite words.  One could use this SD=AP
over infinite words and check how to adapt our proof to the setting of transformations
over infinite words.  Finally, a long standing open problem in the theory of
transformations is to decide if a function given by a 2DFT is realizable by an aperiodic one. This question has been solved
in the one-way case, or in the case when we have \emph{origin} information
\cite{DBLP:conf/icalp/Bojanczyk14}, but the general case remains open.  We believe that
our characterisation of stabilising runs provided in Section~\ref{sec:run-stable} could
lead to a forbidden pattern criteria to decide this question.  

\bibliography{SFfunctions}

\end{document}